\newtheorem{theorem}{Theorem}
\newtheorem{proposition}[theorem]{Proposition}
\newtheorem{lemma}[theorem]{Lemma}
\newtheorem{definition}[theorem]{Definition}
\newtheorem{problem}[theorem]{Problem}
\title{The Name-Passing Calculus}
\author{Yuxi Fu \ \ \  Han Zhu \\
BASICS, Shanghai Jiao Tong University}
\date{}
\begin{document}

\maketitle

\begin{abstract}
Name-passing calculi are foundational models for mobile computing.
Research into these models has produced a wealth of results ranging from relative expressiveness to programming pragmatics.
The diversity of these results call for clarification and reorganization.
This paper applies a model independent approach to the study of the name-passing calculi, leading to a uniform treatment and simplification.
The technical tools and the results presented in the paper form the foundation for a theory of name-passing calculus.
\end{abstract}

\section{Mobility in Practice and in Theory}\label{sec-Mobile-Calculi-in-Practice-and-in-Theory}

Mobile calculi feature the ability to pass around objects that
contain channel names. Higher order
CCS~\cite{Thomsen1989,Thomsen1990,Thomsen1993,Thomsen1995} for
instance, is a calculus with a certain degree of mobility. In a
mobile calculus, a process that receives an object may well make use
of the names which appear in the object to engage in further
interactions. It is in this sense that the communication topology is
dynamic. It was soon realized that the communication mechanism that
restricts the contents of communications to the channel names gives
rise to a simple yet versatile model that is more powerful than the
process-passing
calculi~\cite{Sangiorgi1992,Sangiorgi1993,Sangiorgi1996TCS,Sangiorgi1996IC}.
This is the $\pi$-calculus of Milner, Parrow and
Walker~\cite{MilnerParrowWalker1992}. See~\cite{Parrow2001} for a
gentle introduction to the model and the history of the name-passing
calculus and~\cite{SangiorgiWalker2001BOOK} for a broader coverage.
A seemingly innocent design decision of the $\pi$-calculus is to
admit a uniform treatment of the names. This decision is however not
supported by the semantics of the mobile calculi. From a process
term $T$ one could construct the input prefix term
\begin{equation}\label{prefix-name}
a(x).T
\end{equation}
and the localization term
\begin{equation}\label{localization-name}
(x)T.
\end{equation}
According to the definition of the $\pi$-calculus, the semantics of $x$
which appears in (\ref{prefix-name}) is far different from that of $x$ in
(\ref{localization-name}). In the former $x$ is a name variable, or
a dummy name, that can be instantiated by an arbitrary name when the
prefix engages in an interaction. In the latter $x$ is a local name
that can never be confused with another name. The input prefix
forces the unbound name $x$ in $T$ to be a name variable, whereas
the localization operator forces the unbound name $x$ in $T$ to be a
constant name. This apparent contradiction is behind all the
semantic complications of the $\pi$-calculus. And nothing has been
gained by these complications. In what follows we take a look at some of the issues caused by the confusion.

To begin with, the standard operational semantics of the $\pi$-calculus has not been very smooth.
An extremely useful command in both practice and theory is the two leg if-statement
$\textit{if}\;\varphi\;\textit{then}\;S\;\textit{else}\;T$. In
mobile calculi this can be defined by introducing the conditional
terms $[x{=}y]T$ and $[x{\not=}y]T$. The semantics of these terms
have been defined respectively by the match rule
\begin{equation}\label{good-match}
\inference{T\stackrel{\lambda}{\longrightarrow}T'}{[x{=}x]T\stackrel{\lambda}{\longrightarrow}T'}
\end{equation}
and the mismatch rule
\begin{equation}\label{questionable-mismatch}
\inference{T\stackrel{\lambda}{\longrightarrow}T'}{[x{\not=}y]T\stackrel{\lambda}{\longrightarrow}T'}
\ x\not=y.
\end{equation}
Rule (\ref{questionable-mismatch}) is unusual since it has an
unusual side condition. How should we understand the side condition
$x\not=y$? If $x,y$ were constant names, the side condition of
(\ref{questionable-mismatch}) would be pointless because $x\not=y$
would be evaluated to logical truth. The reason that
(\ref{questionable-mismatch}) is necessary is precisely because
$x,y$ cannot be understood as constant names in the uniform
treatment of the names. The correct reading of
(\ref{questionable-mismatch}) is that
``$[x{\not=}y]T\stackrel{\lambda}{\longrightarrow}T'$ is admissible
under the logical assumption $x\not=y$''. It should now be clear
that the popular semantics fails to support the following
equivalence
\begin{equation}\label{believe-it-or-not}
T = [x{=}y]T+[x{\not=}y]T.
\end{equation}
To see this, let $=$ be $\sim $, the strong early bisimilarity
of Milner, Parrow and Walker~\cite{MilnerParrowWalker1992}. According to the definition, $T
\sim [x{=}y]T+[x{\not=}y]T$ if and only if $T\sigma \dot{\sim }
([x{=}y]T+[x{\not=}y]T)\sigma$ for every substitution $\sigma$,
where $\dot{\sim }$ is the strong ground bisimulation equivalence.
If $\sigma$ is the identity substitution, it boils down to
establishing the following equivalence
\begin{equation}\label{believe-it-or-not-g}
T \;\dot{\sim }\; [x{=}y]T+[x{\not=}y]T.
\end{equation}
We may prove (\ref{believe-it-or-not-g}) under the assumption
$x\not=y$. But we cannot prove (\ref{believe-it-or-not-g}) under the
assumption $x=y$ since rule (\ref{good-match}) does not allow us to do that.
A related mistake is to introduce a boolean evaluation function $beval(\_)$ whose inductive definition includes the following clauses:
\begin{eqnarray*}
beval(x{=}y) &\stackrel{\rm def}{=}& \bot, \\
beval(x{\not=}y) &\stackrel{\rm def}{=}& \top.
\end{eqnarray*}
This would lead to the axioms
\begin{eqnarray*}
\;[x{=}y]T &=& {\bf 0}, \\
\;[x{\not=}y]T &=& T,
\end{eqnarray*}
which are wrong if observational equivalences are closed under prefix operations.
One way to rectify (\ref{good-match}) is to introduce the following
\begin{equation}\label{ok-match}
\inference{T\stackrel{\lambda}{\longrightarrow}T'}{[x{=}y]T\stackrel{\lambda}{\longrightarrow}T'}\ x=y.
\end{equation}
The rule (\ref{ok-match}) does not completely eradicate the problem.
Take for example the following instance of the expansion law
\begin{equation}\label{el}
\overline{x}x\,|\,y(v) = \overline{x}x.y(v) + y(v).\overline{x}x +
[x{=}y]\tau.
\end{equation}
The right hand side of (\ref{el}) can do a $\tau$-action under the
assumption $x=y$. However the operational semantics of the $\pi$-calculus does not admit a $\tau$-action of the left hand side
of (\ref{el}) even if the logical assumption $x=y$ is made.
One solution to the above problem is to introduce the following rule
\[
\inference{T\{y/x\}\stackrel{\lambda\{y/x\}}{\longrightarrow}T'\{y/x\}}{T\stackrel{\lambda}{\longrightarrow}T'}\ \mathrm{if}\ x=y\ \mathrm{and}\ bn(\lambda)\cap\{x,y\}=\emptyset.
\]
This rule is necessary because the terms $T\{y/x\},T$ are distinct in the meta logic.
Our scrutiny of the match/mismatch semantics takes us to the symbolic approach of Hennessy and Lin~\cite{HennessyLin1995,Lin1995,Lin1995-fix-ind,Lin1996,Lin1998,Lin2003}.
In the symbolic semantics, one has that
$[x{\not=}y]T\stackrel{x\not=y,\lambda}{\longrightarrow}T'$, meaning
that the action is admissible under the logical assumption
$x\not=y$. Similarly one has
$[x{=}y]T\stackrel{x=y,\lambda}{\longrightarrow}T'$. Notice that
this transition is very different from the transition
$[x{=}x]T\stackrel{\top,\lambda}{\longrightarrow}T'$. In the
symbolic approach the action
$T\stackrel{\top,\lambda}{\longrightarrow}T'$ is simulated by the
combined effect of
$[x{\not=}y]T\stackrel{x\not=y,\lambda}{\longrightarrow}T'$ and
$[x{=}y]T\stackrel{x=y,\lambda}{\longrightarrow}T'$, not by any
single action sequence of $[x{=}y]T+[x{\not=}y]T$.
If we think of it, the symbolic semantics not only provides the correct labeled
transition semantics upon which we may study the observation theory
of the $\pi$-calculus, but also makes clear the problem caused by
the confusion of the name variables and the names.

Secondly the observational theory of mobile processes is made
more complex than it is. One well-known phenomenon is that, unlike
in CCS~\cite{Milner1989,MilnerSangiorgi1992}, some standard
definitions of process equivalence lead to different equality
relations. The standard definition of bisimulation gives rise to
ground bisimilarity that is not closed under input prefix operation.
The solution proposed in~\cite{MilnerParrowWalker1992} is to take
the substitution closure. The resulting relation is the early
equivalence. If substitution closure is required in every
bisimulation step, one obtains Sangiorgi's open
bisimilarity~\cite{Sangiorgi1996AI}. The open bisimilarity is
strictly finer than the early equivalence, which is in turn much
finer than the ground bisimilarity. The open bisimilarity can be
further improved to quasi open
bisimilarity~\cite{SangiorgiWalker2001}, which lies nontrivially
between the open bisimilarity and the early equivalence. The barbed
equivalence can be defined by placing substitution closure at the
beginning of bisimulations, which gives rise to the equivalence
studied by Milner and Sangiorgi~\cite{MilnerSangiorgi1992}. It can also be defined by
requiring that every bisimulation step should be closed under
substitutions of names. It is shown by Sangiorgi and Walker~\cite{SangiorgiWalker2001}
that the latter coincides with the quasi open bisimilarity. It is
easy to see that the barbed equivalence is weaker than the early
equivalence. It is not yet clear however if it is subsumed by the
early equivalence. Putting aside the issue of which of these
equivalences is more authoritative than the rest, we would like to
point out that the substitution closure requirement is an algebraic
requirement rather than an observational one. From the true
spirit of the observation theory, an environment can never detect
any difference between $a(x).\overline{b}c+\overline{b}c.a(x)$ and
$a(x)\,|\,\overline{b}c$, since it can never force the distinct
names $a,b$ to be equal. This issue of reconciling the inconsistency
between the observational view and the algebraic view must be
addressed to achieve a better theory of the mobile processes.

The algebraic requirement also makes the testing theory of mobile
processes hard to comprehend. In the testing theory developed
by De Nicola and Hennessy~\cite{DeNicolaHennessy1984}, the behaviors of a process are
judged by testers. Two processes are testing equivalent if no
testing can detect any behavioral difference between them. Like the
bisimulation approach, the testing approach fails to give rise to a
reasonable equivalence on the mobile processes. In order to respect
the name uniformity and obtain a useful equivalence at the same
time, the algebraic condition must be imposed.
See~\cite{BorealeDeNicola1995} for more on this issue. In some sense
the substitution closure condition completely defeats the philosophy
of the testing theory.

In retrospect, the confusion of the names and the name variables is
not out of the desire to model mobility, since mobility can be
achieved by using the name variables anyway. If channels have a
physical existence, computations or interactions really should not
manipulate channels. What they are supposed to do is to make use of
the channels for the purpose of interaction. According to this
interpretation, all channel names ought to be constant. To model
mobility, the introduction of a dichotomy between the names and the
name variables is not only an obvious choice, it is the only choice.
The variables are there for mobility.

In theory of expressiveness, the name dichotomy provides a basis for
comparing the relative expressive powers of calculi. The
straightforward translation from CCS to the $\pi$-calculus for instance
is fully abstract if in the $\pi$-calculus a line is drawn between the
names and the name variables. The translation takes the equivalent
CCS processes, say $a\,|\,\overline{b}$ and
$a.\overline{b}+\overline{b}.a$, to the equivalent $\pi$-processes
$a(x)\,|\,\overline{b}(y)$ and
$a(x).\overline{b}(y)+\overline{b}(y).a(x)$. If the names are
treated uniformly, the target model would have a much stronger
process equality than the source model. In such a framework it is
not even clear if a reasonably good fully abstract translation from
CCS to the $\pi$-calculus exists. Other expressiveness results can also
be best interpreted using the name dichotomy. Sangiorgi-Thomsen's
encoding of the higher order CCS in the $\pi$-calculus is another
example. The process variables of the higher order CCS are
translated to the name variables of the $\pi$-calculus, while the
names of the former {\em are} the names of the latter. This encoding
is shown to be fully abstract
by Sangiorgi~\cite{Sangiorgi1992,Sangiorgi1993}. Again if the names of the
$\pi$-calculus are treated uniformly, the encoding would not even be
sound. We could give more examples to support the proposition that a
dichotomic understanding should be preferred. But the point is
already made. The names play a universal role in process theory.
Without the assumption that all names are constant, expressiveness
results about process calculi are bound to be
chaotic~\cite{Nestmann2006}.

When applying the mobile calculi to interpret programming phenomena,
the name dichotomy has always been enforced. It is sufficient to
give just one example. An early work was done
by Walker~\cite{Walker1991,Walker1995}, who defined the operational
semantics of an object oriented language in terms of the operational
semantics of the $\pi$-calculus. The idea of the interpretation can
be summarized as follows. An object is modeled by a prefix process
of the form $objn(x).O$, where $objn$ is the name of the object. A
method is interpreted as a replicated process of the form
$!mthd(z).M$, where $mthd$ is the method name. The method can be
invoked by a process of the form $\overline{mthd}(v).P$ that
supplies the value $v$ to the method parameter. Without going into
details, it is already obvious that for this interpretation to make
sense, it is important to maintain a distinction between the names
and the name variables. We could give many other applications of the
mobile calculi. But it suffices to say that in all these
applications, there is a clear cut distinction between the names and
the name variables.

The above discussions lead to the conclusion that, for both
theoretical and practical reasons, the $\pi$-calculus should be
defined using the name dichotomy. The dichotomy has been introduced
in literature using type systems. If one thinks of the type of a
channel name as defining the interface property of the channel, then
the type theoretical solution does not seem appropriate since the
difference between a name and a name variable is not about interface
property. It is our view that the issue should be treated at a more
fundamental level.

This is both a survey paper and research paper.
Since we adopt a new uniform and simplified presentation of the $\pi$-calculus, there are technical contributions throughout the paper.
In view of the huge literature on the $\pi$-calculus~\cite{SangiorgiWalker2001BOOK}, it is a daunting task to give an overall account of the various aspects of the model. Our strategy in this paper is to present the foundational core of the $\pi$-calculus, covering the observational theory, the algebraic theory and the relative expressiveness.
The novelty of our treatment is that, by applying a model independent approach throughout the paper, a great deal of simplification and unification are achieved.
Our effort can be summarized as follows:
\begin{itemize}
\item We show that a concise operational semantics of $\pi$ is available.

\item
We demonstrate that the observational theory of $\pi$ is far less diverse than it has been perceived.

\item We point out that the algebraic theory of $\pi$ is simpler than has been suggested.
\end{itemize}
The above claims are supported by the following technical contributions:
\begin{itemize}
\item A general model independent process equality, the absolute equality, is applied to the $\pi$-calculus.
It is proved that the well known bisimulation equivalences of the
$\pi$-calculus, mentioned in this introduction, either coincide with
a weak version of the absolute equality or can be safely ignored.

\item A model independent equivalence, the box equality, is
defined and applied to the $\pi$-calculus. It is demonstrated that
this new equivalence coincides with the well known rectification of
the testing equivalence in the $\pi$-calculus.

\item Two complete proof systems for the set of the finite
$\pi$-processes are presented, one for the absolute equality, the
other for the box equality.
\end{itemize}

The model independent theory of process calculi is systematically developed in~\cite{FuYuxi}.
In particular the absolute equality and the subbisimilarity used in this paper are taken from~\cite{FuYuxi}.
It should be pointed out however that the present paper has been made self-contained.

Most of the lemmas are stated without proof.
A well-informed reader would have no problem in supplying the proof details.

The rest of the paper is organized into five sections.
Section~\ref{sec-Name-and-Name-Passing-Process} defines our version
of the $\pi$-calculus. Section~\ref{sec-Equality} studies the model
independent observation theory of the $\pi$-calculus.
Section~\ref{sec-Expressiveness} discusses the relative
expressiveness of some well known variants of the $\pi$-calculus.
Section~\ref{sec-Proof-System} presents a uniform account of the
proof systems for the finite $\pi$-processes.
Section~\ref{sec-future-work} points out how a theory of the $\pi$-calculus can be developed using the framework set up in this
paper.

\section{Pi Calculus}\label{sec-Name-and-Name-Passing-Process}

We assume that there is an infinite countable set $\mathcal{N}$ of
{\em names}, an infinite countable set $\mathcal{N}_{v}$ of {\em
name variables}. These sets will be ranged over by different lower
case letters. Throughout the paper the following conventions will be
enforced:
\begin{itemize}
\item The set $\mathcal{N}$ is ranged over by $a,b,c,d,e,f,g,h$.

\item The set $\mathcal{N}_{v}$ is ranged over by $u,v,w,x,y,z$.

\item The set $\mathcal{N}\cup\mathcal{N}_{v}$ is ranged over
by $l,m,n,o,p,q$.
\end{itemize}
A name variable acts as a place holder that need be substantiated by
a name. By its very nature, a name variable cannot be used as a
channel for interaction. Similarly it cannot be used as a message
passed around in a communication.

\subsection{Process}\label{Process}

To give a structural definition of processes, we need to introduce
terms. The set $\mathcal{T}$ of $\pi$-{\em terms} is inductively
generated by the following BNF:
\begin{eqnarray*}
S,T &:=& {\bf 0} \mid \sum_{i\in I}n(x).T_{i} \mid \sum_{i\in I}\overline{n}m_{i}.T_{i} \mid S\,|\,T \mid (c)T \mid [p{=}q]T \mid [p{\not=}q]T \mid \;!\pi.T,
\end{eqnarray*}
where $I$ is a finite nonempty indexing set and
\begin{eqnarray*}
\pi &:=& n(x) \mid \overline{n}m.
\end{eqnarray*}
Here $n(x)$ is an input prefix and $\overline{n}m$ an output prefix.
The {\em nil} process ${\bf 0}$ cannot do anything in any
environment. For each $i\le n$, the component $n(x).T_{i}$ is a {\em
summand} of the {\em input choice} term $\sum_{i\in I}n(x).T_{i}$,
where the name variable $x$ is {\em bound}. A name variable is {\em
free} if it is not bound. Similarly the component
$\overline{n}m_{i}.T_{i}$ is a summand of the {\em output choice}
term $\sum_{i\in I}\overline{n}m_{i}.T_{i}$.
Notice that input and output choices are syntactically simpler than the separated choices~\cite{Palamidessi2003}.
The term $T\,|\,T'$ is a concurrent {\em composition}. The restriction $(c)T$ is in {\em localization}
form, where the name $c$ is {\em local}. A name is {\em global} if
it is not local. The following functions will be used.
\begin{itemize}
\item $gn(\_)$ returns the set of the global names.

\item $ln(\_)$ returns the set of the local names.

\item $n(\_)$ returns the set of the names.

\item $fv(\_)$ returns the set of the free name variables.

\item $bv(\_)$ returns the set of the bound name variables.

\item $v(\_)$ returns the set of the name variables.
\end{itemize}
The guard $[p{=}q]$ is a {\em match} and $[p{\ne}q]$ a {\em
mismatch}. The term $!\pi.T$ is a {\em guarded replication} and
`$!$' a replication operator. The guarded replication is equivalent
to the general replication of the form $!T$. The transformation from
the general replication to the guarded replication makes use of an
auxiliary function $(\_)^{c}$ defined on the replication free terms.
The structural definition is as follows.
\begin{eqnarray*}
({\bf 0})^{c} &\stackrel{\rm def}{=}& {\bf 0}, \\
(\pi.T)^{c} &\stackrel{\rm def}{=}& \pi.(\overline{c}c\,|\,T^{c}), \\
(T_{1}\,|\,T_{2})^{c} &\stackrel{\rm def}{=}& T_{1}^{c}\,|\,T_{2}^{c}, \\
((a)T)^{c} &\stackrel{\rm def}{=}& (a)T^{c}, \\
([p{=}q]T)^{c} &\stackrel{\rm def}{=}& [p{=}q]T^{c}, \\
([p{\ne}q]T)^{c} &\stackrel{\rm def}{=}& [p{\ne}q]T^{c}.
\end{eqnarray*}
If neither $c$ nor $z$ is in $T$, we may define $(!T)^{c}$ by the process $(c)(\overline{c}c\,|\,\overline{c}c\,|\,!c(z).T^{c})$.
It is clear that there would be no loss of expressive power if guarded
replication is further restrained to the form $!p(x).T$ or
$!\overline{p}q.T$.

A {\em finite} $\pi$-term is one that does not contain any
replication operator. A $\pi$-term is {\em open} if it contains free
name variables; it is {\em closed} otherwise. A closed $\pi$-term is
also called a $\pi$-{\em process}. We write $\mathcal{P}$ for the
set of the $\pi$-processes, ranged over by $L,M,N,O,P,Q$. Some
derived prefix operators are defined as follows.
\begin{eqnarray*}
\overline{n}(c).T &\stackrel{\rm def}{=}& (c)\overline{n}c.T, \\
n.T &\stackrel{\rm def}{=}& n(z).T\ \ \mathrm{for}\ \mathrm{some}\
z\notin fv(T), \\
\overline{n}.T &\stackrel{\rm def}{=}& \overline{n}(c).T\ \
\mathrm{for}\ \mathrm{some}\ c\notin gn(T), \\
\tau.T &\stackrel{\rm def}{=}& (c)(c.T\,|\,\overline{c})\ \
\mathrm{for}\ \mathrm{some}\ c\notin gn(T).
\end{eqnarray*}
Furthermore we introduce the following polyadic prefixes:
\begin{eqnarray*}
n(x_{1},\ldots,x_{n}).T &\stackrel{\rm def}{=}&
n(z).z(x_{1}).\cdots.z(x_{n}).T\ \ \mathrm{for}\ \mathrm{some}\
z\notin fv(T), \\
\overline{n}\langle p_{1},\ldots,p_{n}\rangle.T &\stackrel{\rm
def}{=}&
\overline{n}(c).\overline{c}p_{1}.\cdots.\overline{c}p_{n}.T\ \
\mathrm{for}\ \mathrm{some}\ c\notin gn(T),
\end{eqnarray*} where
$n>1$. These two derived operators make it clear how to simulate the
polyadic $\pi$-calculus~\cite{Milner1993-poly} in the (monadic)
$\pi$-calculus.

Both bound name variables and local names are subject to
$\alpha$-conversion. Throughout the paper, it is assumed that
$\alpha$-conversion is applied whenever it is necessary to avoid
confusion. This will be called the $\alpha$-{\em convention}. In for
example the structural composition rule to be defined later, the
side conditions are redundant in the presence of the
$\alpha$-convention.

A {\em condition}, denoted by $\phi,\varphi,\psi$, is a finite
concatenation of matches and/or mismatches. The concatenation of
zero match/mismatch is denoted by $\top$, and its negation is
denoted by $\bot$. We identify syntactically $(\top)T$ with $T$ and
$(\bot)T$ with $\mathbf{0}$. If $\mathcal{F}$ is the finite name set
$\{n_{1},\ldots,n_{i}\}$, the notation
$[p{\not\in}\mathcal{F}]T$ stands for $[p{\not=}n_{1}]\ldots[p{\not=}n_{i}]T$.

A {\em renaming} is a partial injective map
$\alpha:\mathcal{N}\rightharpoonup\mathcal{N}$ whose domain of
definition $dom(\alpha)$ is finite. A {\em substitution} is a partial map
$\sigma:\mathcal{N}_{v}\rightharpoonup\mathcal{N}\cup\mathcal{N}_{v}$
whose domain of definition $dom(\sigma)$ is finite. An {\em assignment} is a
partial map $\rho:\mathcal{N}_{v}\rightharpoonup\mathcal{N}$ that associates a
name to a name variable in the domain of $\rho$. It is convenient to extend the definition
of an assignment $\rho$ by declaring $\rho(a)=a$ for all $a\in\mathcal{N}$. Renaming, substitution and
assignment are used in postfix. We often write
$\{n_{1}/x_{1},\ldots,n_{i}/x_{i}\}$ for a substitution, and similar
notation is used for renaming. The notation $\rho[x\mapsto a]$
stands for the assignment that differs from $\rho$ only in that
$\rho[x\mapsto a]$ always maps $x$ onto $a$ whereas $\rho(x)$ might
be different from $a$ or undefined. Whenever we write $\rho(x)$ we always assume that $\rho$ is defined on $x$.

An assignment $\rho$ satisfies a condition $\psi$, denoted by $\rho
\models \psi$, if $\rho(m) = \rho(n)$ for every $[m{=}n]$ in $\psi$,
and $\rho(m){\ne}\rho(n)$ for every $[m{\ne}n]$ in $\psi$. We write
$\rho\models\psi \Rightarrow \phi$ to mean that $\rho \models \phi$
whenever $\rho \models \psi$, and $\rho\models\psi \Leftrightarrow
\phi$ if both $\rho\models\psi \Rightarrow \phi$ and
$\rho\models\phi \Rightarrow \psi$. We say that $\psi$ is {\em
valid}, notation $\models\psi$ (or simply $\psi$), if
$\rho\models\psi$ for every assignment $\rho$. A useful valid
condition is the following.
\begin{equation}\label{2009-02-02}
(x{\notin}\mathcal{F})\vee \bigvee_{n\in\mathcal{F}}(x{=}n),
\end{equation}
where $\mathcal{F}$ is a finite subset of
$\mathcal{N}\cup\mathcal{N}_{v}$. Given a condition $\varphi$, we
write $\varphi^{=}$ and $\varphi^{\not=}$ respectively for the
condition $\bigwedge\{m{=}n \mid m,n\in n(\varphi)\cup v(\varphi)\
\mathrm{and}\ \models\varphi\Rightarrow m{=}n\}$ and the condition
$\bigwedge\{m{\ne}n \mid m,n\in n(\varphi)\cup v(\varphi)\
\mathrm{and}\ \models\varphi\Rightarrow m{\ne}n\}$.

\subsection{Semantics}\label{Semantics}

The semantics is defined by a labeled transition system structurally
generated by a set of rules. The set $\mathcal{L}$ of {\em labels}
for $\pi$-terms, ranged over by $\ell$, is
\[\{ab,\overline{a}b,\overline{a}(c) \mid a,b,c\in\mathcal{N}\}\]
where $ab,\overline{a}b,\overline{a}(c)$ denote respectively an
input action, an output action and a bound output action. The set
$\mathcal{L}^{*}$ of the finite strings of $\mathcal{L}$ is ranged
over by $\ell^{*}$. The empty string is denoted by $\epsilon$. The
set $\mathcal{A}=\mathcal{L}\cup\{\tau\}$ of actions is ranged over
by $\lambda$ and its decorated versions. The set $\mathcal{A}^{*}$
of the finite strings of $\mathcal{A}$ is ranged over by
$\lambda^{*}$. With the help of the action set, we can define the
operational semantics of the $\pi$-calculus by the following labeled
transition system.

\vspace*{2mm} \noindent{\em Action}
\[\begin{array}{ccc}
\inference{} {\sum_{i\in I}a(x).T_{i}\stackrel{ac}{\longrightarrow
}T_{i}\{c/x\}} \ i\in I \ \ \ & \inference{} {\sum_{i\in I}
\overline{a}c_{i}.T_{i}\stackrel{\overline{a}c_{i}}{\longrightarrow
}T_{i}} \ i\in I
\end{array}\]

\noindent{\em Composition}
\[\begin{array}{ccc}
\inference{T\stackrel{\lambda}{\longrightarrow}T'}{S\,|\,T\stackrel{\lambda}{\longrightarrow}S\,|\,T'}\
\ & \inference{S\stackrel{ab}{\longrightarrow}S'\ \ \ \ \
T\stackrel{\overline{a}b}{\longrightarrow}T'}
{S\,|\,T\stackrel{\tau}{\longrightarrow}S'\,|\,T'} \ \ &
\inference{S\stackrel{ac}{\longrightarrow}S'\ \ \ \ \
T\stackrel{\overline{a}(c)}{\longrightarrow}T'}
{S\,|\,T\stackrel{\tau}{\longrightarrow}(c)(S'\,|\,T')}
\end{array}\]

\noindent{\em Localization}
\[\begin{array}{cc}
\inference{T\stackrel{\overline{a}c}{\longrightarrow}T'}
{(c)T\stackrel{\overline{a}(c)}{\longrightarrow}T'}\ \ \ \ \  &
\inference{T\stackrel{\lambda}{\longrightarrow}T'}
{(c)T\stackrel{\lambda}{\longrightarrow}(c)T'}\ c\not\in n(\lambda)
\end{array}\]

\noindent{\em Condition}
\[\begin{array}{cc}
\inference{T\stackrel{\lambda}{\longrightarrow}T'}{[a{=}a]T\stackrel{\lambda}{\longrightarrow}T'}\
\ \ \ \ &
\inference{T\stackrel{\lambda}{\longrightarrow}T'}{[a{\not=}b]T\stackrel{\lambda}{\longrightarrow}T'}
\end{array}\]

\noindent{\em Replication}
\[\begin{array}{cc}
\inference{}{!\overline{a}b.T\stackrel{\overline{a}b}{\longrightarrow}T\,|\,!\overline{a}b.T}\
\ \ \ \  &
\inference{}{!a(x).T\stackrel{ab}{\longrightarrow}T\{b/x\}\,|\,!a(x).T}
\end{array}\]
\vspace*{1mm}

The first composition rule takes care of the structural property. The
second and the third define interactions. Their symmetric versions
have been omitted. Particular attention should be paid to the action
rules. An input action may receive a name from another term. It is
not supposed to accept a name variable. This is because an output
action is allowed to release a name, not a name variable. To go
along with this semantics of interaction, the rule for the mismatch
operator is defined accordingly. Since distinct names are different
constant names, the condition $a{\not=}b$ is equivalent to $\top$.
The transition
$[x{\not=}c]\overline{a}b.T\stackrel{\overline{a}b}{\longrightarrow}T$
for instance is not admitted since the name variable $x$ needs to be
instantiated before the mismatch can be evaluated. One advantage of
this semantics is the validity of the following lemma.

\begin{lemma}\label{2008-09-25}
The following statements are valid whenever
$S\stackrel{\lambda}{\longrightarrow}T$.
\begin{enumerate}
\item
$S\alpha\stackrel{\lambda\alpha}{\longrightarrow}T\alpha$ for
every renaming $\alpha$.
\item
$S\sigma\stackrel{\lambda}{\longrightarrow}T\sigma$ for every
substitution $\sigma$.
\item
$S\rho\stackrel{\lambda}{\longrightarrow}T\rho$ for every assignment
$\rho$.
\end{enumerate}
\end{lemma}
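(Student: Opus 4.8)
The plan is to prove the three statements simultaneously by induction on the derivation of the transition $S\stackrel{\lambda}{\longrightarrow}T$, splitting into cases according to the last rule applied. The observation that organizes everything is that, by the definition of $\mathcal{L}$, every label is built entirely out of names. Hence a substitution $\sigma$ and an assignment $\rho$, both of which act as the identity on $\mathcal{N}$, leave the label fixed, which is exactly why the labels in (2) and (3) are unchanged; a renaming $\alpha$ only relabels the names occurring in $\lambda$, producing $\lambda\alpha$ in (1). I would first record this once and for all, so that in every case it suffices to match up the residual terms.

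Before the induction I would isolate the one structural fact that the base cases need: the operation in question commutes with the instantiation $\{c/x\}$ appearing in the input rule. Invoking the $\alpha$-convention to keep the bound variable $x$ distinct from the domain of the operation and from the names and variables it introduces, a routine induction on term structure gives $(T_{i}\{c/x\})\alpha=(T_{i}\alpha)\{c\alpha/x\}$, and $(T_{i}\{c/x\})\sigma=(T_{i}\sigma)\{c/x\}$, and $(T_{i}\{c/x\})\rho=(T_{i}\rho)\{c/x\}$, the latter two using $c\sigma=c\rho=c$ since $c$ is a name. With these identities the two input base cases (the \emph{Action} input rule and the input \emph{Replication} rule) close at once, and the two output base cases are immediate because their subject and object are already names and are therefore fixed by $\sigma$ and $\rho$ and merely relabelled by $\alpha$.

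The \emph{Condition} rules are the only place where the three operations behave differently, and in my view this is the conceptual heart of the lemma. A match fires only in the form $[a{=}a]T$, and under each operation the guard becomes $[a\alpha{=}a\alpha]$, respectively $[a{=}a]$, which is again a match of a name with itself, so the rule reapplies after the induction hypothesis. A mismatch fires only in the form $[a{\ne}b]T$ with $a,b$ distinct names; cases (2) and (3) are immediate because $\sigma$ and $\rho$ fix $a$ and $b$, but case (1) is precisely where the injectivity of the renaming is indispensable, since it is injectivity that guarantees $a\alpha\ne b\alpha$ and hence that the mismatch survives. This is the one step a reader should check carefully: a non-injective name map could collapse $a$ and $b$ and destroy the transition, which is the reason renamings are required to be injective in the first place.

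The remaining rules are structural and close by the induction hypothesis together with the $\alpha$-convention, the only care being in \emph{Localization}. For the bound-output rule and the two interaction rules of \emph{Composition} the argument goes through verbatim once the labels have been matched as above. For the non-capturing localization rule $(c)T\stackrel{\lambda}{\longrightarrow}(c)T'$ with $c\notin n(\lambda)$, the $\alpha$-convention lets us take $c$ outside the domain and range of the operation, so that $c\alpha=c$ and the side condition is preserved, i.e. $c\notin n(\lambda\alpha)$ in case (1) and $c\notin n(\lambda)$ in cases (2) and (3). Finally, since an assignment is nothing but a substitution whose values happen to be names — applying the restriction of $\rho$ to $\mathcal{N}_{v}$ has the same effect on terms as applying $\rho$, both being the identity on names — statement (3) is a special case of statement (2), so genuine work is needed only for (1) and (2).
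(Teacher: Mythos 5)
The paper states Lemma~\ref{2008-09-25} without proof (it is one of the lemmas explicitly left to the reader), and your induction on the derivation of $S\stackrel{\lambda}{\longrightarrow}T$ is exactly the standard argument the authors intend. Your case analysis is complete and correctly isolates the two points of substance --- the commutation of each operation with the instantiation $\{c/x\}$ under the $\alpha$-convention, and the use of injectivity of renamings to preserve the firing of a mismatch $[a{\ne}b]$ --- so the proposal is correct.
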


The operational semantics of a process calculus draws a sharp line
between internal actions ($\tau$-actions) and external actions
(non-$\tau$-actions). From the point of view of interaction, the
former is unobservable and the latter is observable. A {\em complete
internal action sequence} of a process $P$ is either an infinite
$\tau$-action sequence
\[P\stackrel{\tau}{\longrightarrow}P_{1}\stackrel{\tau}{\longrightarrow}\ldots\stackrel{\tau}{\longrightarrow}P_{i}\stackrel{\tau}{\longrightarrow}\ldots\]
or a finite $\tau$-action sequence
$P\stackrel{\tau}{\longrightarrow}P_{1}\stackrel{\tau}{\longrightarrow}\ldots\stackrel{\tau}{\longrightarrow}P_{n}$,
where $P_{n}$ cannot perform any $\tau$-action. A process
$P$ is {\em divergent} if it has an infinite internal action
sequence; it is {\em terminating} otherwise.

In the $\pi$-calculus with the uniform treatment of names,
$\textit{if}\_\textit{then}\_\textit{else}\_$ command is defined
with the help of the unguarded choice operator. A nice thing about
the present semantics is that one may define
$\textit{if}\_\textit{then}\_\textit{else}\_$ command in the
following manner.
\begin{eqnarray*}
\textit{if}\;m{=}n\;\textit{then}\;S\;\textit{else}\;T
&\stackrel{\rm def}{=}& [m{=}n]S\,|\,[m{\not=}n]T.
\end{eqnarray*}
The idea can be generalized. Suppose $\{\varphi_{i}\}_{1\le i\le n}$
is a finite collection of {\em disjoint conditions}, meaning that
$\models\varphi_{i}\wedge\varphi_{j}\Rightarrow\bot$ for all
$i,j\le n$ such that $i\not=j$. The {\em conditional choice}
$\sum_{i\in\{1,..,n\}}\varphi_{i}T_{i}$ is defined in the following
fashion.
\begin{eqnarray}\label{choice-1}
\sum_{1\le i\le n}\varphi_{i}T_{i} &\stackrel{\rm def}{=}&
\varphi_{1}T_{1}\,|\,\ldots\,|\,\varphi_{n}T_{n}.
\end{eqnarray}
In practice most choice operations are actually conditional
choices~\cite{CaiFu2011}. Another form of choice is the
so-called {\em internal choice}, defined as follows:
\begin{eqnarray}\label{choice-2}
\sum_{1\le i\le n}\psi_{i}\tau.T_{i} &\stackrel{\rm def}{=}&
(c)(\psi_{1}c.T_{1}\,|\,\ldots\,|\,\psi_{n}c.T_{n}\,|\,\overline{c}),
\end{eqnarray}
where $\{\psi_{i}\}_{1\le i\le n}$ is a collection of conditions and
$c$ appears in none of $T_{1},\ldots,T_{n}$.

In~\cite{FuLu2010} it is shown that the replication, the fixpoint
operation and the parametric definition are equivalent in the $\pi$-calculus, as long as all the choices are guarded. The fixpoint
operator plays an indispensable role in proof systems for regular
processes. The parametric definition is strictly more powerful in
some variants of the $\pi$-calculus. Proof systems for regular
$\pi$-processes will not be a major concern of this paper. In all
the qualified name-passing calculi studied in the present paper the
parametric definition is equivalent to the replicator. We shall
therefore ignore both the fixpoint and the parametric definition in
the rest of the paper.

Some more notation and terminology need be defined. Given an
action $\lambda$, we may define $\overline{\lambda}$ as follows:
\begin{eqnarray*}
\overline{\lambda} &\stackrel{\rm def}{=}& \left\{
\begin{array}{ll}
\overline{a}b, & \mathrm{if}\ \lambda=ab, \\
ab, & \mathrm{if}\ \lambda=\overline{a}b, \\
ab, & \mathrm{if}\ \lambda=\overline{a}(b), \\
\tau, & \mathrm{if}\ \lambda=\tau.
\end{array}
\right.
\end{eqnarray*}
The notation $\overline{\ell}$ should be understood accordingly. We
shall write $\widetilde{\_}$ for a finite sequence of something of
same kind. For example a finite sequence of the names
$c_{1},\ldots,c_{n}$ can be abbreviated to $\widetilde{c}$. Let
$\Longrightarrow$ be the reflexive and transitive closure of
$\stackrel{\tau}{\longrightarrow}$. We write
$\stackrel{\widehat{\lambda}}{\longrightarrow}$ for the syntactic equality $\equiv$ if
$\lambda=\tau$ and for $\stackrel{\lambda}{\longrightarrow}$ otherwise. The notation
$\stackrel{\widehat{\lambda}}{\Longrightarrow}$ stands for
$\Longrightarrow\stackrel{\widehat{\lambda}}{\longrightarrow}\Longrightarrow$.
If $\lambda^{*}=\lambda_{1}\ldots\lambda_{n}$, we write
$P\stackrel{\lambda^{*}}{\Longrightarrow}$ if
$P\stackrel{\lambda_{1}}{\Longrightarrow}P_{1}\ldots
\stackrel{\lambda_{n}}{\Longrightarrow}P_{n}$ for some
$P_{1},\ldots,P_{n}$. If
$P\stackrel{\lambda^{*}}{\Longrightarrow}P'$, we say that $P'$ is a
{\em descendant} of $P$. Notice that $P$ is a descendant of itself.

In sequel a relation always means a binary relation on the processes
of the $\pi$-calculus or one of its variants. If $\mathcal{R}$ is a
relation, $\mathcal{R}^{-1}$ is the reverse of $\mathcal{R}$ and
$P\mathcal{R}Q$ for membership assertion. If $\mathcal{R}'$ is
another relation, the composition $\mathcal{R};\mathcal{R}'$ is the
relation $\{(P,Q) \mid \exists L.P\mathcal{R}L\wedge
L\mathcal{R}'Q\}$.

\subsection{Variants}

A number of `subcalculi' of $\pi$ have been studied. These variants
are obtained by omitting some operators. They can also be obtained
by restricting the use of the received names, or the use of
continuation, or the forms of prefix etc.. In this section we give a
brief summary of some of the variants. Our definitions of the
variants are slightly different from the popular ones, since we
attempt to give a systematic classification of the $\pi$-variants.

The guarded choice is a useful operator in encoding.
It is also a basic operator in proof systems. The independence of the choice operator from the
other operators of the $\pi$-calculus is established in for
example~\cite{Palamidessi2003,FuLu2010}. A lot of programming can be
carried out in the $\pi$-calculus using prefix terms rather than the guarded
choice terms~\cite{Walker1991,Walker1995}. We will write $\pi^{-}$
for the subcalculus of $\pi$ obtained by replacing the guarded
choice terms by the prefix terms of the form $\pi.T$. In many
aspects $\pi^{-}$ is just as good as $\pi$. It is for example
complete in the sense that one may embed the recursion
theory~\cite{Rogers1987} in $\pi^{-}$. See~\cite{FuYuxi} for detail.
The calculus $\pi^{-}$ can be further slimmed down by removing the
match and the mismatch operators. We shall call it the {\em minimal
$\pi$-calculus} and shall denote it by $\pi^{M}$. It is conjectured
that $\pi^{M}$ is considerably weaker than $\pi^{-}$. The intuition
behind the conjecture is that the if-command cannot be faithfully
encoded in $\pi^{M}$.

Some of the syntactical simplifications of the $\pi$-calculus have been
studied in literature. Here are some of them:
\begin{itemize}
\item Merro and Sangiorgi's local $\pi$-calculus forbids the use of a
received name as an input
channel~\cite{Merro2000,MerroSangiorgi2004}. The word `local' refers
to the fact that a receiving party may only use a received local
name to call upon subroutines delivered by the sending party. The
calculus can be seen as a theoretical foundation of the concurrent
and distributed languages like {\em Pict}~\cite{PierceTurner2000} and
{\em Join}~\cite{FournetGonthier1996}. The local variants we
introduce in this paper are obtained from the $\pi$-calculus by
restricting the use of the received names. A received name may be
used as an output channel or an input channel. It may also appear in
the object position. This suggests the following three variants. In
$\pi^{L}$ the grammar for prefix is
\begin{eqnarray*}
\pi &:=& a(x) \mid \overline{n}m.
\end{eqnarray*}
In $\pi^{R}$ the syntax of prefix is
\begin{eqnarray*}
\pi &:=& n(x) \mid \overline{a}m,
\end{eqnarray*}
and in $\pi^{S}$ it is
\begin{eqnarray*}
\pi &:=& n(x) \mid \overline{n}c.
\end{eqnarray*}
In $\pi^{S}$ only control information may be communicated, data
information may not be passed around. A piece of control information
may be sent once, it may not be resent.
\end{itemize}
Other well known variants are syntactical simplifications of
$\pi^{-}$. Let's see a couple of them.
\begin{itemize}
\item The $\pi$-process $!a(x).T$ can be seen as a method that can be
invoked by a process of the form $\overline{a}c.S$. This object
oriented programming style is typical of the name-passing calculi.
The object may invoke the method several times. It follows that the
method must be perpetually available. The minimal modification of
$\pi$ that embodies this idea is $\pi^{O}$-calculus with following
grammar:
\begin{eqnarray*}
S,T &:=& {\bf 0} \mid \overline{n}m.T \mid S\,|\,T \mid (c)T \mid [p{=}q]T \mid [p{\not=}q]T \mid \;!n(x).T.
\end{eqnarray*}
The superscript $O$ refers to the fact that $\pi^{O}$-calculus has
only proper output prefix; it also reminds of the object oriented
programming style. The dual of $\pi^{O}$-calculus,
$\pi^{I}$-calculus, is defined by the following grammar.
\begin{eqnarray*}
S,T &:=& {\bf 0} \mid n(x).T \mid S\,|\,T \mid (c)T \mid [p{=}q]T \mid [p{\not=}q]T \mid \;!\overline{n}m.T.
\end{eqnarray*}
The relationship between $\pi^{I}$ and $\pi^{O}$ is intriguing.
There is a straightforward encoding $\llbracket\_\rrbracket^{\iota}$
from the former to the latter defined as follows:
\begin{eqnarray*}
\llbracket n(x).T\rrbracket^{\iota} &\stackrel{\rm def}{=}&
\overline{n}(c).!c(x).\llbracket T\rrbracket^{\iota}, \\
\llbracket !\overline{n}m.T\rrbracket^{\iota} &\stackrel{\rm
def}{=}& !n(x).\overline{x}m.\llbracket T\rrbracket^{\iota}.
\end{eqnarray*}
The soundness of this encoding is unknown. Both $\pi^{O}$ and
$\pi^{I}$ have too weak control power to be considered as proper
models. In $\pi^{O}$ for example, a name cannot be used to input
two ordered pieces of information since a sending party would get
confused. There is no way to define for example a polyadic prefix
term like $a(x,y).T$. We shall not consider these two variants in
the rest of the paper.

\item Honda and Tokoro's object
calculus~\cite{HondaTokoro1991ObjCal,HondaTokoro1991AsynSem,HondaYoshida1995}
and Boudol's asynchronous $\pi$-calculus~\cite{Boudol1992} are based
on the idea that an output prefix with a continuation does not
interact with any environment. Instead it evolves into a process
representing the continuation and an atomic output process whose
sole function is to send a name to an environment. This is captured
by the following transitions.
\begin{eqnarray}
\overline{a}c.T &\stackrel{\tau}{\longrightarrow}&
\overline{a}c\,|\,T, \label{2009-06-15AO} \\
\overline{a}c\,|\,a(x).T &\stackrel{\tau}{\longrightarrow}& {\bf
0}\,|\,T\{c/x\}. \label{2009-06-15AI}
\end{eqnarray}
(\ref{2009-06-15AI}) is absolutely necessary, whereas
(\ref{2009-06-15AO}) can be avoided in view of the fact that
$\overline{a}c.T$ must be equal to $\overline{a}c\,|\,T$. Moreover
if we think asynchronously the replication, match and mismatch
operators should not apply to the processes of the form
$\overline{n}m$. This explains the following grammar of $\pi^{A}$:
\begin{eqnarray*}
S,T &:=& {\bf 0} \mid \overline{n}m \mid \sum_{i\in I}\psi_{i}n_{i}(x).T \mid S\,|\,T \mid (c)T \mid \;!n(x).T,
\end{eqnarray*}
where $\psi$ is a finite sequence of match/mismatch operations. The
standard prefix operators can be mimicked in $\pi^{A}$ in the
following way~\cite{Boudol1992}:
\begin{eqnarray}
\llbracket p(x).S \rrbracket &\stackrel{\rm def}{=}& p(u).(d)(\overline{u}d\,|\,d(x).\llbracket S \rrbracket), \label{asyn2synin}\\
\llbracket \overline{p}q.T \rrbracket &\stackrel{\rm def}{=}&
(c)(\overline{p}c\,|\,c(v).(\overline{v}q\,|\,\llbracket T
\rrbracket)). \label{asyn2synout}
\end{eqnarray}
This encoding of the output prefix is however not very robust from
the observational point of view. In fact~\cite{CacciagranoCorradiniPalamidessi2006} have proved
that no encodings of the output prefix exist that preserve must
equivalence. If divergence is not seen as an important issue, then
there are interesting encodings into the asynchronous $\pi$-calculus
as shown in~\cite{NestmannPierce1996} and in~\cite{Nestmann2000}.
The asynchronous calculi have a very different flavor from the
synchronous calculi. We will not discuss $\pi^{A}$ in the rest of
the paper. But see~\cite{FuYuxi-Theory-by-Process} for an alternative
approach to the asynchronous process calculi.

\item A more distant relative is Sangiorgi's {\em private
$\pi$-calculus}~\cite{SangiorgiWalker2001BOOK}, initially called
$\pi I$-calculus~\cite{Sangiorgi1996TCS}. The grammar of $\pi^{P}$
differs from that of $\pi$ in the definition of prefix. In $\pi^{P}$
it is given by the following BNF.
\begin{eqnarray*}
\pi &:=& n(x) \mid \overline{n}(c).
\end{eqnarray*}
Unlike all the above variants, the $\pi^{P}$-calculus has a
different action set than the $\pi$-calculus. There are no free output
actions. One has typically that
\[a(x).S\,|\,\overline{a}(c).T \stackrel{\tau}{\longrightarrow} (c)(S\{c/x\}\,|\,T).\]
In $\pi^{P}$ the name emitted by an output action is always local.
It is worth remarking that this particular $\pi^{P}$-calculus is not
equivalent to the version in which replications are abolished in
favor of parametric definitions~\cite{Sangiorgi1996TCS,FuLu2010}. It
is interesting to compare $\pi^{P}$-calculus with
$\pi^{S}$-calculus. In both models the messages communicated at run
time are foreordained at compile time. The difference is that in
$\pi^{P}$-calculus only local messages can be released, whereas in
$\pi^{S}$-calculus global messages may also be transmitted. Despite
the results established in~\cite{Boreale1996}, the variant $\pi^{P}$
appears much less expressive than $\pi^{S}$. For one thing the match
and mismatch operators in $\pi^{P}$ are redundant. So it is more
precise to say that $\pi^{P}$ is a variant of $\pi^{M}$. Since
$\pi^{P}$ has a different action set than $\pi^{M}$, we shall not discuss it in this paper.
\end{itemize}

One may combine the restrictions introduced in the above variants to
produce even more restricted `subcalculi'. Some of these
`subcalculi' can be rejected right away. For example the variant in
which the received names can only be used as data is equivalent to CCS. Another counter
example is the variant in which both the input capability and the
output capability are perpetual. This calculus is not very useful
since no computations in this model ever terminate. It simply does
not make sense to talk about Turing completeness for this particular
calculus. Merro and Sangiorgi's local $\pi$-calculus is the
asynchronous version of $\pi^{L}$ without the match/mismatch
operators. In~\cite{PalamidessiSaraswatValenciaVictor2006,CacciagranoCorradiniArandaValencia2008} the authors look at some
asynchronous versions of $\pi^{I}$ and $\pi^{O}$. Their work
puts more emphasis on the perpetual availability of the input/output
actions.

A crucial question can be asked about each of the variants: which are complete?
At the moment we know that $\pi,\pi^{L},\pi^{R},\pi^{S},\pi^{M}$ are complete.
See~\cite{FuYuxi} for a formal definition of completeness and~\cite{XueLongFu2011} for additional discussions and proofs.

\section{Equality}\label{sec-Equality}

The starting point of observational theory is to do with the
definition of process equality. It was realized from the very
beginning~\cite{Milner1980,Sangiorgi2009} that the equalities for
processes ought to be observational since it is the effects that
processes have in environments that really matter. Different
application platforms may demand different observing powers, giving
rise to different process equalities. This interactive
viewpoint~\cite{Milner1993} lies at the heart of the theory of
equality. In a distributed environment for example, a process is
subject to perpetual interventions from a potentially unbounded
number of observers in an interleaving manner. Here the right
equivalences are bisimulation equivalences. In a one shot testing
scenario, equivalent processes are indistinguishable by any single
tester in an exclusive fashion. It is nobody's business what the
equivalent processes would turn into after a test. This is the
testing equivalence. In this section we take a look at both the
bisimulation approach and the testing approach.

\subsection{Absolute Equality}

A reasonable criterion for equality of {\em self evolving}
objects is the famous bisimulation property of Milner~\cite{Milner1989}
and Park~\cite{Park1981}.

\begin{definition}\label{weak-bisimulation}
A relation $\mathcal{R}$ is a {\em weak bisimulation} if the
following statements are valid.
\begin{enumerate}
\item  If $Q\mathcal{R}^{-1}P\stackrel{\tau}{\longrightarrow}P'$ then
$Q\Longrightarrow Q'\mathcal{R}^{-1}P'$ for some $Q'$.
\item  If $P\mathcal{R}Q\stackrel{\tau}{\longrightarrow}Q'$ then
$P\Longrightarrow P'\mathcal{R}Q'$ for some $P'$.
\end{enumerate}
\end{definition}
The bisimulation property captures the intuition that an equivalence
for self evolving objects should be {\em maintainable} by the
equivalent objects themselves when left alone. It is no good if two
self evolving objects were equivalent one minute ago and will have
to evolve into inequivalent objects one minute later. Bisimulation
is about self evolving activities, not about interactive activities.
What more can be said about self evolution? An evolution path may
experience a series of state changes. At any particular time of the
evolution, the process may turn into an equivalent state, and it may
also move to an inequivalent state. Inequivalent states generally
exert different effects on environments, while equivalent states
always have the same interactive behaviors.
Definition~\ref{weak-bisimulation} can be criticized in that it
overlooks the important difference between these two kinds of state
transitions. The rectification of Definition~\ref{weak-bisimulation}
is achieved in~\cite{vanGlabbeekWeijland1989-first-paper-bb} by the
branching bisimulation. The following particular formulation is
given by Baeten~\cite{Baeten1996}.

\begin{definition}\label{bisimulation}
A binary relation $\mathcal{R}$ is a {\em bisimulation} if it
validates the following {\em bisimulation property}.
\begin{enumerate}
\item  If $Q\mathcal{R}^{-1}P\stackrel{\tau}{\longrightarrow}P'$ then one of the
following statements is valid.
\begin{enumerate}
\item $Q\Longrightarrow Q'$ for some $Q'$ such that $Q'\mathcal{R}^{-1}P$ and $Q'\mathcal{R}^{-1}P'$.

\item $Q\Longrightarrow
Q''\mathcal{R}^{-1}P$ for some $Q''$ such that
$Q''\stackrel{\tau}{\longrightarrow}Q'\mathcal{R}^{-1}P'$ for some
$Q'$.
\end{enumerate}

\item  If $P\mathcal{R}Q\stackrel{\tau}{\longrightarrow}Q'$ then one of the
following statements is valid.
\begin{enumerate}
\item $P\Longrightarrow P'$ for some $P'$ such that $P'\mathcal{R}Q$ and $P'\mathcal{R}Q'$.

\item $P\Longrightarrow
P''\mathcal{R}Q$ for some $P''$ such that
$P''\stackrel{\tau}{\longrightarrow}P'\mathcal{R}Q'$ for some $P'$.
\end{enumerate}
\end{enumerate}
\end{definition}
Since we take the branching bisimulations as {\em the}
bisimulations, we leave out the adjective.

A process not only evolves on its own, it also interacts with other
processes. A plain criterion for the equalities of {\em interacting}
objects is that an equality between two objects should be {\em
maintainable} no matter how environments may interact with them. It
is extremely important to get it right what an environment can and
cannot do. In a distributed scenario, an environment may interact
with a process placed in the regional network. An environment is not
capable of grabbing a {\em running} process and reprogram it as it
were. So it would not be appropriate to admit, say
$a(x).(\_\{x/a\}\,|\,O)$, as an environment.

\begin{definition}
An {\em environment} is a process of the form
$(\widetilde{c})(\_\,|\,O)$ with the specified hole indicated by
`$\_$'.
\end{definition}

A prerequisite for two processes $P,Q$ to be observationally
equivalent is that no environment can tell them apart. But saying
that the environment $(\widetilde{c})(\_\,|\,O)$ cannot observe any
difference between $P$ and $Q$ is nothing but saying that
$(\widetilde{c})(P\,|\,O)$ and $(\widetilde{c})(Q\,|\,O)$ are
observationally equivalent. Hence the next definition.

\begin{definition}
A relation $\mathcal{R}$ is {\em extensional} if the following
statements are valid.
\begin{enumerate}
\item  If $L\mathcal{R}M$ and $P\mathcal{R}Q$ then
$(L\,|\,P)\;\mathcal{R}\;(M\,|\,Q)$.
\item  If $P\mathcal{R}Q$ then $(c)P\;\mathcal{R}\;(c)Q$.
\end{enumerate}
\end{definition}

Process equivalences are {\em observational}. A process $P$ is {\em
observable}, written $P{\Downarrow}$, if
$P\Longrightarrow\stackrel{\ell}{\longrightarrow}P'$ for some $P'$.
It is {\em unobservable}, written $P{\not\Downarrow}$, if it is not
observable. Occasionally we write $P{\Downarrow_{\ell}}$ for
$\exists P'.P\Longrightarrow\stackrel{\ell}{\longrightarrow}P'$ and
similarly $P{\downarrow_{\ell}}$ for $\exists
P'.P\stackrel{\ell}{\longrightarrow}P'$. It should be apparent that
two equivalent processes must be both observable or both
unobservable.

\begin{definition}
A relation $\mathcal{R}$ is {\em equipollent} if
$P{\Downarrow}\Leftrightarrow Q{\Downarrow}$ whenever
$P\mathcal{R}Q$.
\end{definition}

Equipollence is the weakest condition that an observational
equivalence has to satisfy. Now suppose $P$ and $Q$ are
observationally inequivalent. If we ignore the issue of divergence,
then there must exist some $\ell$ such that $P{\Downarrow_{\ell}}$
and for all $\ell'$ such that $Q{\Downarrow_{\ell'}}$, the actions
$\ell$ and $\ell'$ exert different effects on some environment
$(\widetilde{c})(\_\,|\,O)$. As we mentioned just now, what this
means is that $(\widetilde{c})(P\,|\,O)$ and
$(\widetilde{c})(Q\,|\,O)$ are observationally inequivalent. So we
may repeat the argument for $(\widetilde{c})(P\,|\,O)$ and
$(\widetilde{c})(Q\,|\,O)$. But if the calculus is strong enough,
say it is complete, then the inequivalence eventually boils down to
the fact that one delivers a result at some name and the other fails
to do so at any name. Notice that the localization operator plays a
crucial role in this argument.

The theory of process calculus has been criticized for not paying
enough attention to the issue of divergence. This criticism is more
or less to the point. If the theory of process calculus is meant to
be an extension of the theory of computation, a divergent
computation must be treated in a different way than a terminating
computation. How should we formulate the requirement that process
equivalences should be divergence respecting. A property that is not
only divergence preserving but also consistent with the idea of
bisimulation is codivergence.

\begin{definition}
A {\em relation} is \emph{codivergent} if $P\mathcal{R}Q$ implies
the following property.
\begin{enumerate}
  \item  If $Q \stackrel{\tau}{\longrightarrow} Q_1 \stackrel{\tau}{\longrightarrow} \cdots \stackrel{\tau}{\longrightarrow} Q_n \stackrel{\tau}{\longrightarrow} \cdots$
   is an infinite internal action sequence, then there must be some $k \geq 1$ and
    $P'$ such that $P \stackrel{\tau}{\Longrightarrow} P' \;\mathcal{R}\; Q_k$.
  \item  If $P \stackrel{\tau}{\longrightarrow} P_1 \stackrel{\tau}{\longrightarrow} \cdots \stackrel{\tau}{\longrightarrow} P_n \stackrel{\tau}{\longrightarrow} \cdots$
   is an infinite internal action sequence, then there must be some $k \geq 1$ and
    $Q'$ such that $Q \stackrel{\tau}{\Longrightarrow} Q' \;\mathcal{R}^{-1}\; P_k$.
\end{enumerate}
\end{definition}

Using the codivergence condition, one may define an equivalence that
is advocated in~\cite{FuYuxi} as {\em the} equality for process
calculi.

\begin{definition}\label{absolute-equality}
The {\em absolute equality} $=$ is the largest relation such that
the following statements are valid.
\begin{enumerate}
\item  It is reflexive.
\item  It is equipollent, extensional, codivergent and bisimilar.
\end{enumerate}
\end{definition}
Alternatively we could define the absolute equality as the largest
equipollent codivergent bisimulation that is closed under the
environments.

The virtue of Definition~\ref{absolute-equality} is that it is
completely model independent, as long as we take the view that the
composition operator and the localization operator are present in
all process calculi. From the point of interaction, there cannot be
any argument against equipollence and extensionality. From the point
of view of computation, there is no question about codivergence and
bisimulation. The only doubt one may raise is if
Definition~\ref{absolute-equality} is strong enough. We shall
demonstrate in this paper that as far as the $\pi$-calculus is
concerned, the absolute equality is the most appropriate equivalence
relation.

This is the right place to state an extremely useful technical
lemma~\cite{FuYuxi1999}, the Bisimulation Lemma. Although worded for
$=$, Bisimulation Lemma, called X-property by De Nicola, Montanari and Vaandrager~\cite{DeNicolaMontanariVaandrager1990}, is actually
valid for all the {\em observational} equivalences.

\begin{lemma}\label{bisimulation-lemma}
If $P\Longrightarrow =Q$ and $Q\Longrightarrow =P$, then $P=Q$.
\end{lemma}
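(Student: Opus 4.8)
The plan is to exploit the maximality built into Definition~\ref{absolute-equality}: to prove $P=Q$ it suffices to exhibit a single relation $\mathcal{R}$ with $(P,Q)\in\mathcal{R}$ that is reflexive, equipollent, extensional, codivergent and a bisimulation, for then $\mathcal{R}\subseteq{=}$ and we are done. The candidate I would use is the relation that saturates the hypothesis,
\[
\mathcal{R}=\{(G,H)\mid G\Longrightarrow\,=\,H\ \text{and}\ H\Longrightarrow\,=\,G\}.
\]
By assumption $(P,Q)\in\mathcal{R}$, the relation is symmetric by construction, and it contains $=$ (take the empty silent sequence and recall that $=$ is symmetric), so responses supplied by $=$ automatically lie in $\mathcal{R}$. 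Throughout I would fix, for a generic pair $(G,H)\in\mathcal{R}$, witnesses $G_{0},H_{0}$ with $G\Longrightarrow G_{0}$, $G_{0}=H$ and $H\Longrightarrow H_{0}$, $H_{0}=G$.

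The reason for choosing this $\mathcal{R}$ rather than the naive $={}\cup\{(P,Q),(Q,P)\}$ is extensionality, which is the only requirement not local to the pair. Here I would use two facts: every silent transition lifts through the static constructors (if $N\stackrel{\tau}{\longrightarrow}N'$ then $(c)N\stackrel{\tau}{\longrightarrow}(c)N'$ and $S\,|\,N\stackrel{\tau}{\longrightarrow}S\,|\,N'$, by the Localization and Composition rules), and $=$ is itself extensional. Thus from $L\Longrightarrow L_{1}=M$ and $N\Longrightarrow N_{1}=N'$ one gets $L\,|\,N\Longrightarrow L_{1}\,|\,N_{1}=M\,|\,N'$, and symmetrically, so $(L\,|\,N,\,M\,|\,N')\in\mathcal{R}$; the localization clause is identical. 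Reflexivity is immediate.

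The four dynamic conditions then reduce uniformly to the corresponding properties of $=$ by \emph{prepending the silent prefix}. For the bisimulation property, a move $G\stackrel{\tau}{\longrightarrow}G'$ of the left component is answered by applying the bisimulation property (Definition~\ref{bisimulation}) of $=$ to the pair $(G,H_{0})$ — legitimate since $G=H_{0}$ — and then prefixing the resulting response of $H_{0}$ with $H\Longrightarrow H_{0}$; a move of the right component is handled symmetrically via $(G_{0},H)$. In either case the stuttering responses supplied by $=$ already have the shape required by clause~(a)/(b), and their endpoints are $=$-related, hence in $\mathcal{R}$. Equipollence follows because $\Longrightarrow$ reflects observability ($X\Longrightarrow Y$ with $Y\Downarrow$ gives $X\Downarrow$) while $=$ preserves it: if $G\Downarrow$ then $H_{0}=G$ forces $H_{0}\Downarrow$ by equipollence of $=$, whence $H\Longrightarrow H_{0}\Downarrow$ yields $H\Downarrow$, and dually. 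Codivergence is the same pattern: an infinite $\tau$-sequence out of $H$ is matched, via codivergence of $=$ on $(G_{0},H)$, by some $G_{0}\Longrightarrow G_{0}'=H_{k}$, and prefixing $G\Longrightarrow G_{0}$ delivers the required $G\Longrightarrow G_{0}'=H_{k}$; the remaining clause uses $(H_{0},G)$.

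The main obstacle is pinning down the right relation so that extensionality holds for free, since any pair-local candidate fails to be closed under $(c)$ and $\,|\,$ (already $(c)P$ versus $(c)Q$ escapes $={}\cup\{(P,Q),(Q,P)\}$); the device of taking $\mathcal{R}=\{(G,H)\mid G\Longrightarrow\,=\,H\ \text{and}\ H\Longrightarrow\,=\,G\}$ is exactly what makes the closure automatic, resting on extensionality of $=$ together with the lifting of $\tau$-transitions through the static constructors. A secondary point demanding care is that the bisimulation clauses are the branching ones, so the responses must be produced in the stuttering form of Definition~\ref{bisimulation} rather than as plain weak transitions; this is harmless here precisely because $\mathcal{R}$ has already absorbed $\Longrightarrow$, so the branching responses of $=$ transfer verbatim.
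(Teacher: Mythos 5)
Your proof is correct. The paper states Lemma~\ref{bisimulation-lemma} without proof (deferring to~\cite{FuYuxi1999}), and your argument --- passing to the saturated relation $\{(G,H)\mid G\Longrightarrow\,=\,H\ \text{and}\ H\Longrightarrow\,=\,G\}$, checking that it is reflexive, equipollent, extensional, codivergent and a branching bisimulation by prepending the silent prefixes to the responses supplied by $=$, and then invoking the maximality clause of Definition~\ref{absolute-equality} --- is exactly the argument the authors expect the reader to supply. In particular, your observation that a pair-local candidate relation would fail extensionality, so that the relation must absorb $\Longrightarrow$ from the outset, is the one point of real substance here, and you handle it correctly via extensionality of $=$ together with the lifting of $\tau$-transitions through composition and localization.
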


A distinguished property about the absolute equality is stated in
the next lemma. It is discovered
by van Glabbeek and Weijland~\cite{vanGlabbeekWeijland1989-first-paper-bb} for the branching
bisimilarity.

\begin{lemma}\label{inert}
If
$P_{0}\stackrel{\tau}{\longrightarrow}P_{1}\stackrel{\tau}{\longrightarrow}
\ldots\stackrel{\tau}{\longrightarrow}P_{n}=P_{0}$ then
$P_{0}=P_{1}=\ldots=P_{n}$.
\end{lemma}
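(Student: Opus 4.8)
The plan is to reduce the whole statement to a single application of the Bisimulation Lemma (Lemma~\ref{bisimulation-lemma}), invoked once for each index $i$. The key observation is that the hypothesis sandwiches every intermediate process $P_i$ between two copies of $P_0$ along the given $\tau$-cycle: reading the first $i$ steps of the chain gives the ``approach'' $P_0 \Longrightarrow P_i$, while reading the remaining $n-i$ steps gives the ``return'' $P_i \Longrightarrow P_n$, and by hypothesis $P_n = P_0$. Thus the cycle can be cut at each point into two weak transition segments that together close a loop through $P_0$.

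Concretely, I would fix $i$ with $0 \le i \le n$ and check the two premises of Lemma~\ref{bisimulation-lemma} with $P := P_0$ and $Q := P_i$. For the first premise, $P_0 \Longrightarrow P_i$ combined with reflexivity of $=$ (Definition~\ref{absolute-equality}) yields $P_0 \Longrightarrow = P_i$. For the second premise, $P_i \Longrightarrow P_n$ composed with the given equality $P_n = P_0$ yields $P_i \Longrightarrow = P_0$. The Bisimulation Lemma then delivers $P_0 = P_i$ at once. Having established $P_0 = P_i$ for every $i$, I would conclude $P_0 = P_1 = \ldots = P_n$ by symmetry and transitivity of $=$.

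The proof is short precisely because the Bisimulation Lemma already absorbs the delicate branching-bisimulation bookkeeping, so there is essentially no heavy step to grind through. The only points requiring care, and hence the main (minor) obstacle, are purely a matter of orientation: one must split the given derivation correctly so that the ``approach'' segment supplies exactly $P_0 \Longrightarrow = P_i$ and the ``return'' segment supplies exactly $P_i \Longrightarrow = P_0$, matching the two hypotheses of Lemma~\ref{bisimulation-lemma} in the right direction, and one must be entitled to treat $=$ as a reflexive, symmetric and transitive relation, all of which follow from its being the largest relation of its kind in Definition~\ref{absolute-equality}. Once these are in place the substantive content has already been discharged by the Bisimulation Lemma, and nothing further is needed.
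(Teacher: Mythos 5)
Your proof is correct and is evidently the intended argument: the paper states Lemma~\ref{inert} without proof, but places it immediately after the Bisimulation Lemma, and your reduction---cutting the $\tau$-cycle at each $P_i$ to obtain $P_0\Longrightarrow\,=P_i$ and $P_i\Longrightarrow\,=P_0$, then invoking Lemma~\ref{bisimulation-lemma}---is exactly the standard derivation. The auxiliary facts you rely on (reflexivity, symmetry and transitivity of $=$) do follow from Definition~\ref{absolute-equality}, so nothing is missing.
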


A consequence of Lemma~\ref{inert} is that if
$P=Q\stackrel{\lambda}{\longrightarrow}Q'$, then any simulation
$P\stackrel{\tau}{\longrightarrow}P_{1}\stackrel{\tau}{\longrightarrow}
\ldots\stackrel{\tau}{\longrightarrow}P_{n}\stackrel{\lambda}{\longrightarrow}P'$
of $Q\stackrel{\lambda}{\longrightarrow}Q'$ by $P$ must satisfy the
property that $P=P_{1}=\ldots=P_{n}$. This phenomenon motivates the
following terminologies.
\begin{enumerate}
\item $P\stackrel{\tau}{\longrightarrow}P'$ is {\em deterministic}, notation $P\rightarrow P'$, if $P=P'$.

\item $P\stackrel{\tau}{\longrightarrow}P'$ is {\em nondeterministic}, notation $P\stackrel{\iota}{\longrightarrow}P'$, if $P\ne P'$.
\end{enumerate}
A deterministic internal action can be ignored.
A nondeterministic internal action has to be properly simulated.
The notation $\rightarrow^{*}$ stands for the reflexive and transitive closure of $\rightarrow$ and $\rightarrow^{+}$ for the transitive closure.
We will write $P\nrightarrow$ to indicate that $P\rightarrow P'$ for no $P'$.

Since the absolute equality applies not just to $\pi$ and its variants but to all process calculi, the
notation `$=$' could be confusing when more than one model is dealt with.
To remove the ambiguity, we sometimes write $=^{\mathbb{M}}$
for the absolute equality of model $\mathbb{M}$.
The same notation will be applied to other process equivalences.
In this paper, we write $\mathbb{V}$ for an arbitrary $\pi$-variant.

\subsubsection{External Bisimilarity}\label{subsec-External-Bisimulation}

The external bisimilarity requires that all observable actions are
explicitly simulated. This is commonly referred to as branching
bisimilarity if the codivergence is
dropped~\cite{vanGlabbeekWeijland1989-first-paper-bb}.

\begin{definition}\label{external-bi}
A codivergent bisimulation of $\mathbb{V}$ is a $\mathbb{V}$-{\em
bisimulation} if the following statements are valid for every
$\ell\in\mathcal{L}$.
\begin{enumerate}
\item  If $Q\mathcal{R}^{-1}P \stackrel{\ell}{\longrightarrow}P'$ then
$Q\Longrightarrow Q''\stackrel{\ell}{\longrightarrow
}Q'\mathcal{R}^{-1}P'$ and $P\mathcal{R}Q''$ for some $Q',Q''$.
\item  If $P\mathcal{R}Q \stackrel{\ell}{\longrightarrow}Q'$ then
$P\Longrightarrow P''\stackrel{\ell}{\longrightarrow
}P'\mathcal{R}Q'$ and $P''\mathcal{R}Q$ for some $P',P''$.
\end{enumerate}
The $\mathbb{V}$-{\em bisimilarity} $\simeq_{\mathbb{V}}$ is the
largest $\mathbb{V}$-bisimulation.
\end{definition}

$\mathbb{V}$-bisimilarity is a more effective counterpart of the
absolute equality $=_{\mathbb{V}}$. The latter provides the
intuition, while the former offers a tool for establishing
equational properties. The proof of following fact is a standard
textbook application of the bisimulation argument.

\begin{lemma}\label{algebraic}
The equivalence $\simeq_{\mathbb{V}}$ is closed under input choice,
output choice, composition, localization, match, mismatch, and
guarded replication.
\end{lemma}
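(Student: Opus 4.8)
The plan is to treat each operator by the standard congruence method: for an operator $f$ I exhibit a candidate relation $\mathcal{R}_f$ that places $f(\widetilde{P})$ beside $f(\widetilde{Q})$ whenever the corresponding arguments satisfy $P_i\simeq_{\mathbb{V}}Q_i$, and I prove that $\mathcal{R}_f$ together with $\simeq_{\mathbb{V}}$ is itself a $\mathbb{V}$-bisimulation. Since the two clauses of Definition~\ref{external-bi} are mirror images, each candidate will be symmetric and only one direction needs checking; maximality of $\simeq_{\mathbb{V}}$ then forces $\mathcal{R}_f\subseteq\simeq_{\mathbb{V}}$, which is the asserted closure. For each candidate there are three obligations to discharge: the two $\tau$-clauses of the branching bisimulation (Definition~\ref{bisimulation}), the external $\ell$-clauses of Definition~\ref{external-bi}, and codivergence.

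I would dispose of the guard and prefix cases first, as they are immediate. For a match $[a{=}a]T$ or a mismatch $[a{\ne}b]T$ the Condition rules make every first transition coincide with a transition of $T$, so closure is inherited verbatim from the component and contributes no codivergence content of its own. For an input choice $\sum_i a(x).T_i$ the only initial actions are $\stackrel{ac}{\longrightarrow}T_i\{c/x\}$; a matching process performs the same input, and the obligation reduces to the equivalence of the closed instances $T_i\{c/x\}$ and $T_i'\{c/x\}$ for every received name $c$, which is exactly what closure under the binding input prefix asks for. Output choice is analogous and simpler since no name is received. In all of these the continuations already lie in $\simeq_{\mathbb{V}}$, so the candidate is just $\{(f(\widetilde{T}),f(\widetilde{T'}))\}\cup\simeq_{\mathbb{V}}$.

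The substantive work is the composition and localization cases, which I would handle together through the single candidate
\[
\mathcal{R}=\{((\widetilde{c})(P_1\,|\,P_2),(\widetilde{c})(Q_1\,|\,Q_2)) \mid P_1\simeq_{\mathbb{V}}Q_1,\ P_2\simeq_{\mathbb{V}}Q_2\}.
\]
Every transition of $(\widetilde{c})(P_1\,|\,P_2)$ arises from a transition of a single component passing through the localization, from a communication between the two components, or from a scope extrusion of a bound output; each is matched on the other side using the appropriate clause of $P_i\simeq_{\mathbb{V}}Q_i$. The delicate point is the branching requirement: when $P_1\stackrel{\ell}{\longrightarrow}P_1'$ is simulated by $Q_1\Longrightarrow Q_1''\stackrel{\ell}{\longrightarrow}Q_1'$ with $P_1\simeq_{\mathbb{V}}Q_1''$, the leading $\Longrightarrow$ must be replayed as internal steps of the composite while each intermediate composite $(\widetilde{c})(Q_1''\,|\,Q_2)$ stays related to $(\widetilde{c})(P_1\,|\,P_2)$, which holds precisely because $P_1\simeq_{\mathbb{V}}Q_1''$ and $P_2\simeq_{\mathbb{V}}Q_2$. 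The communication case is the fiddliest: a $\tau$ from $P_1\stackrel{ab}{\longrightarrow}P_1'$ against $P_2\stackrel{\overline{a}b}{\longrightarrow}P_2'$ must be reassembled from two external simulations, and one has to check that their two $\Longrightarrow$ prefaces can be sequentialized into genuine internal steps of the composite without destroying the branching structure.

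Codivergence under composition is, together with replication, the real obstacle and where I would concentrate. An infinite internal run of $(\widetilde{c})(P_1\,|\,P_2)$ must be decomposed into the contributions of the two components and of their mutual communications; I would argue that at least one source fires infinitely often and then invoke the codivergence of the corresponding $P_i\simeq_{\mathbb{V}}Q_i$ to produce the required $Q\stackrel{\tau}{\Longrightarrow}Q'$ landing back in $\mathcal{R}$. Guarded replication $!\pi.T$ is the hardest operator, because an unfolding spawns a fresh parallel copy of the continuation; the candidate must therefore be closed under placing arbitrarily many pairwise related copies in parallel beside $!\pi.P$ and $!\pi.Q$, so it is in effect the composition candidate iterated. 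Verifying the branching $\tau$-clauses and, above all, codivergence for this enlarged relation — an unfolding being an internal action that both creates potential divergence and must be matched copy for copy — is the crux of the whole lemma. Once the composition argument is organized so as to remain stable under adding fresh related copies, replication follows, and the remaining operators have already been settled.
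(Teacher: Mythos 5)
Your proposal is the standard congruence argument via candidate bisimulations closed under the operators, which is precisely what the paper intends: it gives no proof at all, remarking only that the lemma is ``a standard textbook application of the bisimulation argument.'' Your outline correctly identifies where the real work lies (the branching $\tau$-clauses for communication, and codivergence under composition and replication) and the strategy you sketch for each is the usual and correct one.
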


An immediate consequence of Lemma~\ref{algebraic} is the inclusion described in the following lemma.

\begin{lemma}\label{linb}
$\simeq_{\mathbb{V}} \;\subseteq\;=_{\mathbb{V}}$.
\end{lemma}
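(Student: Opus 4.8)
The plan is to exploit the characterization of $=_{\mathbb{V}}$ as the \emph{largest} relation that is reflexive and simultaneously equipollent, extensional, codivergent and bisimilar (Definition~\ref{absolute-equality}). Accordingly, it suffices to verify that $\simeq_{\mathbb{V}}$ itself enjoys all five of these properties; the inclusion $\simeq_{\mathbb{V}}\subseteq\;=_{\mathbb{V}}$ then follows immediately by maximality of $=_{\mathbb{V}}$. Two of the five properties come for free: by Definition~\ref{external-bi} a $\mathbb{V}$-bisimulation is by construction a codivergent bisimulation, and $\simeq_{\mathbb{V}}$, being the largest $\mathbb{V}$-bisimulation, is therefore both codivergent and bisimilar.

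For reflexivity I would observe that the identity relation is trivially a $\mathbb{V}$-bisimulation, since every transition is matched by itself and no nontrivial codivergence obligation arises; hence the identity is contained in the largest $\mathbb{V}$-bisimulation $\simeq_{\mathbb{V}}$, so $\simeq_{\mathbb{V}}$ is reflexive. Extensionality, namely closure under composition and localization, is exactly part of the content of Lemma~\ref{algebraic}, which already records that $\simeq_{\mathbb{V}}$ is closed under these (and several other) operators. I would simply invoke that lemma.

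This leaves equipollence as the one property requiring a genuine argument. Suppose $P\simeq_{\mathbb{V}}Q$ and $P{\Downarrow}$, say $P\Longrightarrow P_{0}\stackrel{\ell}{\longrightarrow}P'$. The idea is to push the weak prefix $P\Longrightarrow P_{0}$ through the bisimulation and then invoke the external clause of Definition~\ref{external-bi} on the remaining visible step. Concretely, I would first establish the standard \emph{stuttering} fact that a branching bisimulation is preserved along $\Longrightarrow$: from $P\simeq_{\mathbb{V}}Q$ and $P\Longrightarrow P_{0}$ one obtains $Q\Longrightarrow Q_{0}$ with $P_{0}\simeq_{\mathbb{V}}Q_{0}$, by iterating clause~(1) of the bisimulation property over the individual $\tau$-steps. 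Applying clause~(1) of the external conditions of Definition~\ref{external-bi} to $P_{0}\stackrel{\ell}{\longrightarrow}P'$ then yields $Q_{0}\Longrightarrow Q''\stackrel{\ell}{\longrightarrow}Q'$, so that $Q\Longrightarrow\stackrel{\ell}{\longrightarrow}Q'$, i.e.\ $Q{\Downarrow}$. The reverse implication is symmetric, using clause~(2) throughout, giving $P{\Downarrow}\Leftrightarrow Q{\Downarrow}$ whenever $P\simeq_{\mathbb{V}}Q$.

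The main obstacle is precisely the stuttering step. Because $\simeq_{\mathbb{V}}$ is a branching rather than a weak bisimulation, the weak move $P\Longrightarrow P_{0}$ cannot be matched in one shot; it must be tracked one $\tau$-transition at a time, and at each step the branching alternatives~(a) and~(b) of Definition~\ref{bisimulation} must be threaded carefully so that the accumulated weak response on the $Q$ side lands in a state still related to the current intermediate process. Once this induction on the length of the internal sequence is in place, everything else is a routine assembly of the clauses, and the inclusion is a direct consequence of the maximality of $=_{\mathbb{V}}$.
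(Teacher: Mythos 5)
Your proposal is correct and takes exactly the route the paper intends: the paper derives Lemma~\ref{linb} as an immediate consequence of Lemma~\ref{algebraic}, i.e.\ by checking that $\simeq_{\mathbb{V}}$ is reflexive, equipollent, extensional, codivergent and bisimilar, and then invoking the maximality of $=_{\mathbb{V}}$ from Definition~\ref{absolute-equality}. The only step the paper leaves entirely implicit --- equipollence, obtained by threading $P\Longrightarrow P_{0}$ through the branching clauses one $\tau$-step at a time and then applying the external clause to the visible action --- is handled correctly in your write-up.
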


If the processes of a model has strong enough observing power, then
the absolute equality is as strong as the external bisimilarity.
This is the case for the $\pi$-calculus. The proof of the following
theorem resembles a proof in~\cite{FuYuxi2005}.

\begin{theorem}\label{coincidence-theorem}
The $\pi$-bisimilarity $\simeq_{\pi}$ coincides with the absolute equality $=_{\pi}$.
\end{theorem}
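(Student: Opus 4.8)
The plan is to prove the two inclusions separately. The inclusion $\simeq_{\pi}\subseteq{=_{\pi}}$ is already recorded as Lemma~\ref{linb}, so the whole task reduces to showing ${=_{\pi}}\subseteq\simeq_{\pi}$. Since $\simeq_{\pi}$ is by definition the largest $\pi$-bisimulation, it suffices to verify that $=_{\pi}$ is \emph{itself} a $\pi$-bisimulation in the sense of Definition~\ref{external-bi}. By Definition~\ref{absolute-equality} the absolute equality is already codivergent and bisimilar, hence a codivergent bisimulation; the only genuinely new obligation is the pair of \emph{external} clauses, namely that whenever $P=_{\pi}Q$ and $P\stackrel{\ell}{\longrightarrow}P'$ with $\ell\in\mathcal{L}$, there are $Q'',Q'$ with $Q\Longrightarrow Q''\stackrel{\ell}{\longrightarrow}Q'$, $P=_{\pi}Q''$ and $P'=_{\pi}Q'$. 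All the effort is concentrated here, and this is exactly the point at which the strong observing power of $\pi$ is exploited.

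First I would, for each of the three shapes of label $\ell$ --- an input $ab$, a free output $\overline{a}b$, and a bound output $\overline{a}(c)$ --- design a dedicated \emph{observer} $O_{\ell}$ assembled from prefixes, matches and a single fresh \emph{success} name $e\notin gn(P)\cup gn(Q)$. The observer is built so that its only way to make $e$ observable is to synchronise with an $\ell$-action on the given channel carrying the given object: the input or output prefix fixes the channel and the polarity, and in the output cases a match guard $[x{=}b]$ fixes the transmitted name, while the observer stays silent and contributes no barb before that synchronisation. Placing $P$ and $Q$ inside the environment $(\widetilde{c})(\_\,|\,O_{\ell})$, where $\widetilde{c}$ restricts every global name except $e$, extensionality yields $(\widetilde{c})(P\,|\,O_{\ell})=_{\pi}(\widetilde{c})(Q\,|\,O_{\ell})$, and the restriction guarantees that the composite system is observable \emph{iff} it is observable at $e$, thereby upgrading the bare equipollence of $=_{\pi}$ into detection of the specific action.

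The transition $P\stackrel{\ell}{\longrightarrow}P'$ then induces a $\tau$-step of $(\widetilde{c})(P\,|\,O_{\ell})$ into a state observable at $e$. Applying the bisimulation property of Definition~\ref{bisimulation} to the equal system $(\widetilde{c})(Q\,|\,O_{\ell})$ produces a matching sequence of silent moves; since $e$ is fresh and the observer signals only upon the intended synchronisation, the matching move must pass through $Q$ actually performing the $\ell$-action, giving $Q\Longrightarrow Q''\stackrel{\ell}{\longrightarrow}Q'$. The determinacy consequence of Lemma~\ref{inert} then certifies that the silent prefix $Q\Longrightarrow Q''$ keeps the state absolutely equal to $P$, so $P=_{\pi}Q''$, and running the same observer argument symmetrically recovers the companion clause (using that $=_{\pi}$ is symmetric).

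The main obstacle, and the step I expect to consume most of the care, is passing from the equality of the \emph{tested} systems back to the equality $P'=_{\pi}Q'$ of the bare residuals, together with the correct handling of the bound output $\overline{a}(c)$, where the emitted name is private and the observer must receive an $\alpha$-fresh name and confirm that scope extrusion has genuinely taken place rather than a free output under a restriction. To cancel the common observer residue and the surrounding localization I would rely on a fresh-name discipline that renders the observer non-interfering, combined with the Bisimulation Lemma~\ref{bisimulation-lemma}: the success component can be driven to a canonical terminated form on both sides, after which the silent collapse of the observer lets one read off $P'=_{\pi}Q'$ from the congruence already furnished by extensionality. Getting this cancellation right, uniformly across the three label shapes, is the technical heart of the argument.
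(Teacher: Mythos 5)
Your reduction to showing that $=_{\pi}$ satisfies the external clauses is the right starting point, and your observers for detecting a specific action at a fresh success name are in the spirit of the paper's testers $C,D,E,F$. But the step you yourself flag as ``the technical heart'' --- recovering $P'=_{\pi}Q'$ from the equality of the tested composites --- is exactly where the argument breaks, and your proposed fix does not work. Extensionality is a composition property, not a cancellation property: from $(\widetilde{c})(P'\,|\,O')=_{\pi}(\widetilde{c})(Q'\,|\,O')$ you cannot ``read off'' $P'=_{\pi}Q'$, and no cancellation lemma for $=_{\pi}$ is available (indeed, such a cancellation is essentially an instance of the theorem being proved). The failure is most acute for the bound output $\overline{a}(c)$: after the extrusion the observer is holding the private name $c$, and if you let the observer ``silently collapse'' you lose the only witness that the occurrence of $c$ in $P'$ and the occurrence of $c$ in $Q'$ are the \emph{same} name. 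A separate defect is your choice of environment: restricting \emph{every} global name except the success name traps the residuals under those restrictions forever, so the coinductive game cannot be continued on $P',Q'$ at all; the paper's testers are placed in parallel \emph{without} restricting the global names of $P,Q$.

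The paper resolves this by \emph{not} trying to show that $=_{\pi}$ itself is a $\pi$-bisimulation. Instead it defines a larger candidate relation $\mathcal{R}$ consisting of pairs $(P,Q)$ whose ``bottled'' closures $(c_{1},\ldots,c_{n})(\overline{a_{1}}c_{1}\,|\,\cdots\,|\,\overline{a_{n}}c_{n}\,|\,P)$ and $(c_{1},\ldots,c_{n})(\overline{a_{1}}c_{1}\,|\,\cdots\,|\,\overline{a_{n}}c_{n}\,|\,Q)$ are absolutely equal, where the $c_{i}$ are the previously extruded private names, re-restricted and exported at fresh channels $a_{i}$. Membership in $\mathcal{R}$ after a bound output requires only that the \emph{re-bottled} residuals be $=_{\pi}$-related --- which is precisely what the tester $a(x).[x\notin gn(P\,|\,Q)]\overline{a_{n+1}}x+a(x).d$ delivers --- rather than the unobtainable $P'=_{\pi}Q'$. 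This bookkeeping also forces the seven extra cases (actions whose subject is one of the extruded $c_{i}$), which your plan has no means of handling since your observers cannot name a previously extruded private channel. One further point: the paper's testers are two-branch output/input choices (e.g.\ $\overline{a}b+\overline{a}c.d$) so that the synchronisation is provably a change of state, which is what licenses the appeal to Lemma~\ref{inert} and the Bisimulation Lemma; the paper explicitly remarks that the output choice is essential here, so a single-success-name, choice-free observer is unlikely to suffice. You would need to rebuild your proof around a relation of the paper's bottled form (and the ``up to $\sim$'' technique) before the argument can be completed.
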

\begin{proof}
In view of Lemma~\ref{linb}, we only have to prove
$=_{\pi}\;\subseteq\;\simeq_{\pi}$. Let $\mathcal{R}$ be the relation
\[\left\{(P,Q) \left| \begin{array}{l}
(c_{1},\ldots,c_{n})(\overline{a_{1}}c_{1}\,|\,\ldots\,|\,\overline{a_{n}}c_{n}\,|\,P)=_{\pi}
\\
(c_{1},\ldots,c_{n})(\overline{a_{1}}c_{1}\,|\,\ldots\,|\,\overline{a_{n}}c_{n}\,|\,Q),
 \\
\{a_{1},\ldots,a_{n}\}\cap gn(P\,|\,Q)=\emptyset, \ n\ge 0
\end{array} \right\}.\right.\]
To appreciate the relation notice that
$(c'c'')(\overline{a'}c'\,|\,\overline{a''}c''\,|\,P)\ne_{\pi}(c)(\overline{a'}c\,|\,\overline{a''}c\,|\,Q)$
for all $P,Q$ such that $\{a',a''\}\cap gn(P\,|\,Q)=\emptyset$. We
prove that $\mathcal{R}$ is an external bisimulation up to $\sim$,
where $\sim$ is defined in
Section~\ref{sec-Strong-Equality-and-Weak-Equality}. Now suppose
$A=_{\pi}B$ where
\begin{eqnarray*}
A &\stackrel{\rm def}{=}&
(c_{1},\ldots,c_{n})(\overline{a_{1}}c_{1}\,|\,\ldots\,|\,\overline{a_{n}}c_{n}\,|\,P),
\\
B &\stackrel{\rm def}{=}&
(c_{1},\ldots,c_{n})(\overline{a_{1}}c_{1}\,|\,\ldots\,|\,\overline{a_{n}}c_{n}\,|\,Q),
\end{eqnarray*}
such that $\{a_{1},\ldots,a_{n}\}\cap gn(P\,|\,Q)=\emptyset$ and
$n\ge 0$. Consider an action $P\stackrel{\ell}{\longrightarrow}P'$
of $P$. There are four cases.
\begin{enumerate}
\item $\ell=ab$. Let $C$ be $\overline{a}b+\overline{a}c.d$ for some fresh name $c$. By equipollence and extensionality,
$A\,|\,C\stackrel{\tau}{\longrightarrow}A'\,|\,\mathbf{0}$ must be
matched up by $B\,|\,C\Longrightarrow
B''\,|\,C\stackrel{\tau}{\longrightarrow}B'\,|\,\mathbf{0}$ for some
$B',B''$. Since
$B''\,|\,C\stackrel{\tau}{\longrightarrow}B'\,|\,\mathbf{0}$ must be
a change of state, it must be the case that $A\,|\,C=_{\pi}B''\,|\,C$. It
follows easily from Bisimulation Lemma that
$B\rightarrow^{*}B''\stackrel{ab}{\longrightarrow}B'=_{\pi}A'$. Clearly
\begin{eqnarray*}
A' &\equiv&
(c_{1},\ldots,c_{n})(\overline{a_{1}}c_{1}\,|\,\ldots\,|\,\overline{a_{n}}c_{n}\,|\,P'),
\\
B' &\equiv&
(c_{1},\ldots,c_{n})(\overline{a_{1}}c_{1}\,|\,\ldots\,|\,\overline{a_{n}}c_{n}\,|\,Q'),
\\
B'' &\equiv&
(c_{1},\ldots,c_{n})(\overline{a_{1}}c_{1}\,|\,\ldots\,|\,\overline{a_{n}}c_{n}\,|\,Q''),
\end{eqnarray*}
for some $P',Q',Q''$. It follows from
$B\rightarrow^{*}B''\stackrel{ab}{\longrightarrow}B'$ that
$Q\Longrightarrow Q''\stackrel{ab}{\longrightarrow}Q'$. Moreover
$P\mathcal{R}Q''$ and $P'\mathcal{R}Q'$ by definition.

\item $\ell=\overline{a}b$ for some $b\not\in\{c_{1},\ldots,c_{n}\}$.
Let $c,d$ be fresh and let $D$ be defined by
$D \stackrel{\rm def}{=} a(x).[x{=}b]c+a(x).d$.
Now $A\,|\,D \stackrel{\tau}{\longrightarrow}
A'\,|\,[b{=}b]c$ must be simulated by
\[B\,|\,D \Longrightarrow  B''\,|\,D\stackrel{\tau}{\longrightarrow}B'\,|\,[b{=}b]c.\]
The rest of the argument is the same as in the previous case.

\item $\ell=\overline{a}(b)$ and $b\not\in\{c_{1},\ldots,c_{n}\}$.
Let $E$ be the following process
\[a(x).[x\notin gn(P\,|\,Q)]\overline{a_{n+1}}x+a(x).d,\]
where $d,a_{n+1}$ are fresh. The rest of the argument is similar.

\item $\ell=\overline{a}c_{i}$ for some $i\in\{1,\ldots,n\}$. Let
$F$ be the process
\[a(x).a_{i}(y).[x{=}y]\overline{a_{i}'}y+a(x).d,\]
where $d,a_{i}'$ are fresh. The rest of the argument is again similar.
\end{enumerate}
Additionally we need to consider the external actions of $P,Q$ that
communicate through one of the names $c_{1},\ldots,c_{n}$. There are
seven cases. In each case we are content with defining an
environment that forces external bisimulation.
\begin{enumerate}
\item $P\stackrel{c_{i}b}{\longrightarrow}P'$. Let $H$ be the
following process
\[a_{i}(z).(\overline{z}b.\overline{a_{i}'}z+\overline{z}d)\]
for some fresh $d,a_{i}'$.

\item $P\stackrel{c_{i}c_{j}}{\longrightarrow}P'$ such that $i\not=j$. Let $I$ be the
following process
\[a_{i}(z).a_{j}(y).(\overline{z}y.(\overline{a_{j}'}y\,|\,\overline{a_{i}'}z)+\overline{z}d)\]
for some fresh $d,a_{i}',a_{j}'$.

\item $P\stackrel{c_{i}c_{i}}{\longrightarrow}P'$. Let $J$ be the
following process
\[a_{i}(z).(\overline{z}z.\overline{a_{i}'}z+\overline{z}d)\]
for some fresh $d,a_{i}'$.

\item $P\stackrel{\overline{c_{i}}b}{\longrightarrow}P'$. Let $K$ be the
following process
\[a_{i}(z).(z(x).[x{=}b]\overline{a_{i}'}z+z(x).d)\]
for some fresh $d,a_{i}'$.

\item $P\stackrel{\overline{c_{i}}(c)}{\longrightarrow}P'$. Let $L$ be the
following process
\[a_{i}(z).(z(x).[x\notin gn(P\,|\,Q)](\overline{a_{n+1}}x\,|\,\overline{a_{i}'}z)+z(x).d),\]
where $d,a_{n+1},a_{i}'$ are fresh.

\item $P\stackrel{\overline{c_{i}}c_{j}}{\longrightarrow}P'$ such that $i\not=j$. Let $M$ be the
following process
\[a_{i}(z).(z(x).a_{j}(y).[x{=}y](\overline{a_{j}'}y\,|\,\overline{a_{i}'}z)+z(x).d)\]
for some fresh $d,a_{i}',a_{j}'$.

\item $P\stackrel{\overline{c_{i}}c_{i}}{\longrightarrow}P'$. Let $N$ be the
following process
\[a_{i}(z).(z(x).[x{=}z]\overline{a_{i}'}z+z(x).d)\]
for some fresh $d,a_{i}'$.
\end{enumerate}
We are done.
\end{proof}

We remark that the output choice is crucial in the above proof, but the input choice is not really necessary.
A slightly more complicated proof can be given without using any input choice.
At the moment we do not see how the theorem can be established without using the output choice.

The external characterization of the absolute equality for a particular model is an important issue.
It has been resolved for $\pi^{L},\pi^{R}$ in~\cite{XueLongFu2011}.
It is definitely a technically interesting but probably tricky exercise to remove the remaining question marks in Fig.~\ref{external-characterization}.

\begin{figure}[t]
\begin{center}
\begin{tabular}{|c|c|c|c|c|c|c|}
\hline
 $\pi$ & $\pi^{-}$ & $\pi^{M}$ & $\pi^{L}$ & $\pi^{R}$ & $\pi^{S}$ \\
\hline\hline
 $\surd$ & ? & ? & $\surd$ & $\surd$ & ? \\
\hline
\end{tabular}
\end{center}
\caption{External Characterization of $=_{\mathbb{V}}$.
\label{external-characterization}}
\end{figure}

\begin{problem}\label{ext=abs}
What are the answers to the questions raised in
Fig.~\ref{external-characterization}?
\end{problem}

There is a standard way to extend the absolute equality from
processes to terms. The following definition is well-known.

\begin{definition}\label{2009-01-25}
Let $\asymp$ be a process equivalence.
Then $S\asymp T$ if $S\rho\asymp T\rho$ for every assignment $\rho$ such that $fv(S\,|\,T)\subseteq dom(\rho)$.
\end{definition}

The equivalence $=$ on $\pi$-terms satisfies two crucial equalities.
One is
\begin{equation}\label{open-eq-1}
T = [x{=}y]T\,|\,[x{\not=}y]T.
\end{equation}
The other is
\begin{equation}\label{open-eq-2}
[x{\not=}y]\lambda.T = [x{\not=}y]\lambda.[x{\not=}y]T.
\end{equation}
Both equalities explain the role of the name variables. It is the
viewpoint of this paper that all process equivalences should
validate both (\ref{open-eq-1}) and (\ref{open-eq-2}).

\subsubsection{Strong Equality and Weak Equality}\label{sec-Strong-Equality-and-Weak-Equality}

The only way to refine the absolute equality in a model independent
way is to strengthen the bisimulation property. Among all the
refinements of Definition~\ref{bisimulation}, the most well known
one is Milner and Park's strong
bisimulation~\cite{Park1981,Milner1989}.

\begin{definition}
A symmetric relation $\mathcal{R}$ is a {\em strong bisimulation} if
$P\stackrel{\tau}{\longrightarrow}P'\mathcal{R}Q'$ for some $P'$
whenever $P\mathcal{R}Q\stackrel{\tau}{\longrightarrow}Q'$.
\end{definition}
It is clear that a strong bisimulation is automatically codivergent.
Hence the next definition.
\begin{definition}
The {\em strong equality} $=^{s}$ is the largest reflexive,
equipollent, extensional, strong bisimulation.
\end{definition}
The strong equality is the same as the strong bisimilarity
of Milner~\cite{Milner1989}.

\begin{definition}\label{external-strong-bi}
A strong bisimulation is an {\em external strong bisimulation} if
the following statements are valid for every $\ell\in\mathcal{L}$.
\begin{enumerate}
\item  If $Q\mathcal{R}^{-1}P \stackrel{\ell}{\longrightarrow}P'$ then
$Q\stackrel{\ell}{\longrightarrow}Q'\mathcal{R}^{-1}P'$ for some
$Q'$.
\item  If $P\mathcal{R}Q
\stackrel{\ell}{\longrightarrow}Q'$ then
$P\stackrel{\ell}{\longrightarrow}P'\mathcal{R}Q'$ for some $P'$.
\end{enumerate}
The {\em external strong bisimilarity} $\sim$ is the largest
external strong bisimulation.
\end{definition}

One finds a very useful application of the strong equality in the
`bisimulation up to $\sim$' technique~\cite{SangiorgiMilner1992}.
The strong equality equates all the structurally equivalent
processes. Some well known equalities are stated below.

\begin{lemma}\label{monoidal}
The following equalities are valid.
\begin{enumerate}
\item  $P\,|\,{\bf 0}\sim P$; $P\,|\,Q\sim Q\,|\,P$;
$P\,|\,(Q\,|\,R)\sim (P\,|\,Q)\,|\,R$.

\item  $(c){\bf 0}\sim {\bf 0}$; $(c)(d)P\sim (d)(c)P$.

\item  $(c)(P\,|\,Q)\sim P\,|\,(c)Q$ if $c\not\in gn(P)$.

\item  $[a{=}a]P\sim [a{\not=}b]P\sim P$; $[a{\not=}a]P\sim
[a{=}b]P\sim {\bf 0}$.

\item  $!\pi.P\sim \pi.P\,|\,!\pi.P$.
\end{enumerate}
\end{lemma}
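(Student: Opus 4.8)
The plan is to prove each equality by exhibiting an external strong bisimulation containing the relevant pair and then invoking the fact that $\sim$ is the largest external strong bisimulation (Definition~\ref{external-strong-bi}). Since $\sim$ matches every transition one step at a time with no weak closure, the verification of each candidate relation reduces to a case analysis on which operational rule fires. To avoid circular dependencies between the laws, I would bundle everything into a single relation $\mathcal{R}$: the symmetric, reflexive closure of the set of all instances of the stated equalities (each instance retaining its side condition), so that the residuals produced by a transition land back in $\mathcal{R}$. It then suffices to check that this one $\mathcal{R}$ validates the strong bisimulation clause and the external clauses for every $\ell\in\mathcal{L}$.

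For the easy cases this is immediate. In (4) the Condition rules settle everything: $[a{=}a]T$ and $[a{\ne}b]T$ inherit exactly the transitions of $T$, since the premise $a{=}a$ and the distinctness $a{\ne}b$ of constant names always hold, whereas $[a{\ne}a]T$ and $[a{=}b]T$ admit no transition and so match $\mathbf{0}$. Likewise $P\,|\,\mathbf{0}$ and $(c)\mathbf{0}$ have their transitions determined by a single applicable rule, giving $P\,|\,\mathbf{0}\sim P$ and $(c)\mathbf{0}\sim\mathbf{0}$ directly. The Replication rules already unfold $!\pi.P$ into $\pi.P\,|\,!\pi.P$, which settles (5) at once.

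The substance lies in the composition and localization laws. For commutativity and associativity in (1) I would split the analysis according to the first Composition rule (one side acts alone) and the two interaction rules; the symmetric communication rule supplies the matching $\tau$-transition in the commutative case, while associativity requires tracking the three ways a $\tau$ can arise and the placement of the extruded restriction produced by the third Composition rule, with the residuals reconciled inside $\mathcal{R}$ by the scope-migration and restriction-swap laws that are part of the same bundled relation. The swap $(c)(d)P \sim (d)(c)P$ in (2) is handled by the two Localization rules, a bound output on one name leaving the other restriction in place.

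The hard part will be the scope-migration law (3), namely $(c)(P\,|\,Q)\sim P\,|\,(c)Q$ under $c\notin gn(P)$. Here a transition of the right-hand side may come from $P$ alone, from $(c)Q$ via either Localization rule (including a bound output $\overline{a}(c)$ that frees $c$), or from an interaction between $P$ and $(c)Q$; each must be mirrored by the corresponding combination of Composition and Localization rules on the left, and conversely. The delicate subcases are the bound output and the $P$--$Q$ communication that extrudes $c$, where the left residual is $(c)(P'\,|\,Q')$ and the right residual is $P'\,|\,(c)Q'$: closing $\mathcal{R}$ under exactly these residuals is what forces the relation to be an external strong bisimulation, and this is legitimate because the side condition $c\notin gn(P)$ is preserved ($P$ cannot use $c$, so $P'$ cannot either). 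The $\alpha$-convention keeps $c$ distinct from the free names throughout, so no capture or name clash arises in any of these cases.
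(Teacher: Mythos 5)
The paper states this lemma without proof (it is one of the ``well known equalities'' whose verification the authors explicitly leave to the reader), and your proposal is precisely the standard argument they have in mind: exhibit a single external strong bisimulation built from the closure of all instances of the laws and check the transition cases against the operational rules, with the $\alpha$-convention handling the freshness of the restricted name in the scope-migration case. Your case analysis, including the treatment of the condition rules for (4) and the extrusion subcases for (3), is correct and essentially coincides with the intended textbook proof.
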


In the other direction we may weaken the absolute equality by
relaxing the bisimulation property \textit{a la}
Definition~\ref{weak-bisimulation}.

\begin{definition}
The {\em weak equality} $=^{w}$ is the largest reflexive,
equipollent, extensional, codivergent, weak bisimulation.
\end{definition}

The external characterization of the weak equality is Milner's weak
bisimilarity~\cite{Milner1989} enhanced with the codivergence
condition.

\begin{definition}\label{external-weak-bi}
A codivergent bisimulation is an {\em external weak bisimulation} if
the following statements are valid for every $\ell\in\mathcal{L}$.
\begin{enumerate}
\item  If $Q\mathcal{R}^{-1}P \stackrel{\ell}{\longrightarrow}P'$ then
$Q\stackrel{\ell}{\Longrightarrow}Q'\mathcal{R}^{-1}P'$ for some
$Q'$.

\item  If $P\mathcal{R}Q \stackrel{\ell}{\longrightarrow}Q'$ then
$P\stackrel{\ell}{\Longrightarrow}P'\mathcal{R}Q'$ for some $P'$.
\end{enumerate}
The {\em external weak bisimilarity} $\simeq $ is the largest
external weak bisimulation.
\end{definition}

The proof of Theorem~\ref{coincidence-theorem} can be extended to a
proof of the following coincidence result.

\begin{proposition}\label{2009-09-23}
The equivalences $=^{w},\simeq $ coincide in $\pi^{M}$, $\pi^{-}$
and $\pi$.
\end{proposition}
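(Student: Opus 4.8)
The plan is to establish the two inclusions ${\simeq}\subseteq{=^w}$ and ${=^w}\subseteq{\simeq}$ separately, recycling the architecture of the proof of Theorem~\ref{coincidence-theorem}. For ${\simeq}\subseteq{=^w}$ I would argue exactly as Lemma~\ref{linb} is obtained from Lemma~\ref{algebraic}: it suffices to verify that the external weak bisimilarity $\simeq$ is reflexive, equipollent, extensional and codivergent, and that it is a weak bisimulation in the sense of Definition~\ref{weak-bisimulation}. Codivergence and the weak bisimulation clause come straight from Definition~\ref{external-weak-bi}; equipollence follows from its clause for observable actions; and extensionality, i.e. closure under composition and localization, is the weak counterpart of Lemma~\ref{algebraic} and goes through by the standard bisimulation-up-to argument, using that every choice in $\pi$, $\pi^{-}$ and $\pi^{M}$ is guarded. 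By maximality of $=^w$, this gives ${\simeq}\subseteq{=^w}$ in all three models.

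For the converse I would reuse the tester relation $\mathcal{R}$ of Theorem~\ref{coincidence-theorem}, with $=^w$ in place of $=_{\pi}$ throughout, and show that it is an external weak bisimulation up to $\sim$. Given wrapped processes $A=^w B$ and an action $P\stackrel{\ell}{\longrightarrow}P'$, the target is now the weak matching $Q\stackrel{\ell}{\Longrightarrow}Q'$ with $P'\mathcal{R}Q'$, in place of the branching matching carrying the side condition $P\mathcal{R}Q''$. This is precisely where the argument becomes lighter: since no intermediate state has to be related back to $P$, I would not appeal to Lemma~\ref{inert}, which is a branching-specific property unavailable for $=^w$; I only need to produce a matching $\ell$-step and the residual equality $A'=^w B'$, whence $P'\mathcal{R}Q'$ by definition. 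Both are extracted by composing $A$ and $B$ with a discriminating tester $C$, passing to $A\,|\,C=^w B\,|\,C$ by extensionality, and then invoking the Bisimulation Lemma (Lemma~\ref{bisimulation-lemma}), which the text already records as valid for every observational equivalence, hence for $=^w$; peeling the tester's fresh success signal off the residual is a routine Bisimulation-Lemma computation, and the codivergence clause transfers as in Theorem~\ref{coincidence-theorem}.

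The substantive new content, and the step I expect to be the main obstacle, is that the statement ranges over $\pi^{-}$ and $\pi^{M}$, whereas the testers of Theorem~\ref{coincidence-theorem}, such as $\overline{a}b+\overline{a}c.d$ and $a(x).[x{=}b]c+a(x).d$, are built from output choice, match and mismatch; indeed the branching coincidence for these variants is the open entry in Figure~\ref{external-characterization}, and the remark following Theorem~\ref{coincidence-theorem} stresses that output choice appears indispensable there. The observation I would exploit is that those output choices exist only to install the branching side condition, which the weak semantics discards. I would therefore replace each choice tester by a choice-free one that merely leaks a fresh observable signal once the intended interaction has fired; for the input case $\ell=ab$ the tester $\overline{a}b.\overline{e}$ with $e$ fresh already forces $B\stackrel{ab}{\Longrightarrow}$, since the only route to observability at $e$ passes through consuming the output $\overline{a}b$. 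Such testers live in $\pi^{-}$. For $\pi^{M}$ one must in addition eliminate match and mismatch: a positive test $x{=}b$ is simulated up to weak equality by a private communication of the received name against the known channel $b$, whose success is again reported by the fresh signal. The delicate case is the bound output, the analogue of the testers $E$ and $L$, where the branching proof detects freshness through the mismatch guard $[x{\notin}gn(P\,|\,Q)]$: testing an inequality positively without a mismatch operator is the real crux, and I expect this freshness detection in $\pi^{M}$ to be the hardest point, to be handled by letting the context try to communicate the received name back along a channel reserved for names that are genuinely local, so that a negative outcome witnesses freshness.
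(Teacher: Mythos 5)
The paper itself offers no proof of Proposition~\ref{2009-09-23} beyond the remark that the proof of Theorem~\ref{coincidence-theorem} extends, and for the full calculus $\pi$ your plan is exactly that extension and is sound: the inclusion ${\simeq}\subseteq{=^{w}}$ follows from the closure properties as in Lemma~\ref{linb}, and for the converse the testers of Theorem~\ref{coincidence-theorem} (which use output choice and match/mismatch, both available in $\pi$) can be reused, the weak matching clause being genuinely lighter since no intermediate state has to be related back to the source and Lemma~\ref{inert} is not needed.

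The gap is in your treatment of $\pi^{-}$ and $\pi^{M}$, and it occurs earlier than you expect: not in the bound-output/freshness case, but already in the first case $\ell=ab$. With the choice-free tester $C=\overline{a}b.\overline{e}$, the weak bisimulation clause of $=^{w}$ only yields $B\,|\,C\Longrightarrow W=^{w}A'\,|\,\overline{e}$, and since $\Longrightarrow$ is reflexive, $W$ may be of the form $B_{1}\,|\,C$ with the tester \emph{unfired}. Your observability argument does establish that $B\stackrel{ab}{\Longrightarrow}$ holds (restrict every channel except a fresh success channel and invoke equipollence), but existence of the transition is not what the bisimulation clause demands: one must exhibit a specific $Q'$ with $Q\stackrel{ab}{\Longrightarrow}Q'$ and $P'=^{w}Q'$, and from $A'\,|\,\overline{e}=^{w}B_{1}\,|\,\overline{a}b.\overline{e}$ no such $Q'$ can be extracted --- restricting $e$ leaves the junk prefix $\overline{a}b$ attached to $B_{1}$, so the ``peeling off'' step is routine only in the case where the tester has actually fired. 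The summand $\overline{a}c.d$ in the paper's tester $\overline{a}b+\overline{a}c.d$ exists precisely to make firing irreversible: the unfired tester retains a capability (emitting the fresh $c$ and exposing $d$) that the fired one has lost, and that asymmetry is what excludes the unfired residual. A sequential prefix loses nothing by waiting, so no choice-free tester can create this asymmetry, and the paper's own admission after Theorem~\ref{coincidence-theorem} that it does not see how to dispense with output choice applies with equal force to your construction. The extension of the proposition to $\pi^{-}$ and $\pi^{M}$ therefore requires an idea not present in your proposal (nor, it must be said, in the paper).
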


The relationships between the strong equality, the absolute equality
and the weak equality are stated in the next proposition.

\begin{proposition}
The inclusions $=^{s} \;\subsetneq\;=\;\subsetneq\;=^{w}$ are
strict.
\end{proposition}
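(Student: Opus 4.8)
The plan is to get the two inclusions from the maximality clauses in the definitions of the three equalities, and then to separate them with two small witnesses. For $=^{s}\subseteq\,=$, I would note that $=^{s}$ is by definition a reflexive, equipollent, extensional strong bisimulation, so it suffices to check that every strong bisimulation is both codivergent and a (branching) bisimulation in the sense of Definition~\ref{bisimulation}. Codivergence is already remarked in the text. For the branching property, a strong answer $P\stackrel{\tau}{\longrightarrow}P'\mathcal{R}Q'$ to a challenge $P\mathcal{R}Q\stackrel{\tau}{\longrightarrow}Q'$ already validates clause~(b) of Definition~\ref{bisimulation} with the empty $\Longrightarrow$ prefix, taking the intermediate state to be $P$ itself, which is still related to $Q$. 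Hence $=^{s}$ satisfies every defining requirement of $=$, and maximality of $=$ yields the inclusion. Symmetrically, for $\,=\,\subseteq\,=^{w}$ I would check that every branching bisimulation is a weak bisimulation: in either clause of Definition~\ref{bisimulation} both alternatives (a) and (b) produce a response of the shape $Q\Longrightarrow Q'\mathcal{R}^{-1}P'$, which is precisely the weak requirement of Definition~\ref{weak-bisimulation}; since $=$ is also reflexive, equipollent, extensional and codivergent, maximality of $=^{w}$ delivers the inclusion.

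For the strictness of $=^{s}\subsetneq\,=$ I would use a deterministic internal action. Take $P\equiv\tau.\mathbf{0}$ and $Q\equiv\mathbf{0}$. The relation $\{(\tau.\mathbf{0},\mathbf{0}),(\mathbf{0},\tau.\mathbf{0})\}\cup{\sim}$ is a reflexive, equipollent, extensional, codivergent branching bisimulation: the unique move $\tau.\mathbf{0}\stackrel{\tau}{\longrightarrow}P'$ with $P'\sim\mathbf{0}$ is answered by $\mathbf{0}$ standing still via clause~(a), both residuals being related to the two sides. Thus $\tau.\mathbf{0}=\mathbf{0}$. On the other hand no strong bisimulation can relate them, since $\tau.\mathbf{0}\stackrel{\tau}{\longrightarrow}$ must be matched by a genuine $\tau$-move of $\mathbf{0}$, of which there is none; hence $\tau.\mathbf{0}\ne^{s}\mathbf{0}$.

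The strictness of $\,=\,\subsetneq\,=^{w}$ is the substantive part, and here I would pass through the external characterizations: by Theorem~\ref{coincidence-theorem} the absolute equality of $\pi$ is $\simeq_{\pi}$, and by Proposition~\ref{2009-09-23} the weak equality of $\pi$ is $\simeq$. It then suffices to exhibit processes that are externally weakly bisimilar but not externally branching bisimilar. Let $a,b,c$ be distinct names, write $R\stackrel{\rm def}{=}(k)(\overline{k}\,|\,k.\overline{b}\,|\,k.\overline{c})$ for the lock-based internal choice between $\overline{b}$ and $\overline{c}$, with its two silent residuals $R_{b}'\simeq\overline{b}$ and $R_{c}'\simeq\overline{c}$, and set $P\stackrel{\rm def}{=}a(x).R$ and $Q\stackrel{\rm def}{=}a(x).R+a(x).\overline{c}$. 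I would first verify $P\simeq Q$: the only nontrivial obligation is the extra summand, and $Q\stackrel{ac_{0}}{\longrightarrow}\overline{c}$ is answered by $P\stackrel{ac_{0}}{\longrightarrow}R\stackrel{\tau}{\longrightarrow}R_{c}'\simeq\overline{c}$, while every other move is matched identically, so the only pairs needed are identities together with $(R_{c}',\overline{c})$.

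I would then establish $P\not\simeq_{\pi}Q$ by tracing the branching challenge. From a supposed branching relation containing $(P,Q)$, the move $Q\stackrel{ac_{0}}{\longrightarrow}\overline{c}$ forces, since $P$ is stable and has the single $a$-successor $R$, the pair $R\,\mathcal{R}\,\overline{c}$; the move $R\stackrel{\tau}{\longrightarrow}R_{b}'$ can only be answered by $\overline{c}$ standing still, forcing $R_{b}'\,\mathcal{R}\,\overline{c}$; but then $R_{b}'\stackrel{\overline{b}}{\longrightarrow}$ cannot be matched by $\overline{c}$, a contradiction. The main obstacle is exactly this last separation: one must reproduce a ``mixed'' behaviour, namely an observable $\overline{b}$ competing with an internal commitment that discards it in favour of $\overline{c}$, using only the homogeneous guarded choices available in $\pi$. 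The lock $R$ accomplishes this because weak bisimilarity is insensitive to the extra silent step leading to $R_{c}'$, whereas the branching argument survives being pushed across that step and still exposes the unmatchable $\overline{b}$, thereby witnessing $P=^{w}Q$ with $P\ne Q$.
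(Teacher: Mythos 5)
Your proof is correct and, for the substantive part, uses exactly the paper's witness: the paper's one-line proof cites the pair $a.(\tau.b+\tau.c)$ and $a.(\tau.b+\tau.c)+a.c$, which is precisely your $a(x).R$ versus $a(x).R+a(x).\overline{c}$ once the internal choice is unfolded via the paper's encoding (\ref{choice-2}). The remaining pieces --- the two inclusions and the strictness of $=^{s}\subsetneq\,=$ via $\tau.\mathbf{0}$ versus $\mathbf{0}$ --- are routine verifications the paper leaves implicit, and you supply them correctly (modulo the small point that your two-pair relation should be closed under the environments to be literally extensional).
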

\begin{proof}
It is well known that $a.(\tau.b+\tau.c)=^{w}a.(\tau.b+\tau.c)+a.c$ holds but
$a.(\tau.b+\tau.c)=a.(\tau.b+\tau.c)+a.c$ is not valid.
\end{proof}

\subsubsection{Remark}

In the theory of bisimulation semantics, the three major
contributions are the bisimulation of Milner~\cite{Milner1989}
and Park~\cite{Park1981}, the branching bisimulation
of van Glabbeek and Weijland~\cite{vanGlabbeekWeijland1989-first-paper-bb}, and the barbed
bisimulation of Milner and Sangiorgi~\cite{MilnerSangiorgi1992}. Despite its nice
properties~\cite{DeNicolaMontanariVaandrager1990,vanGlabbeek1994,DeNicolaVaandrager1995,BaierHermanns1997},
the branching bisimulation is not as widely appreciated as it deserves.
The absolute equality is introduced in~\cite{FuYuxi} with three points of view.
The first is that bisimulation is a defining property for the internal actions, it is a derived property for the external actions.
This is very much the motivation for the barbed bisimulation.
The second is that a nondeterministic internal action is very different from a deterministic one.
The former must be strongly bisimulated, whereas the latter can be weakly bisimulated.
This is in our opinion the philosophy of the branching bisimulation.
The third is that codivergence and bisimulation complement each other.
Deterministic internal actions are ignorable locally but not globally.
The notion of codivergence is formulated by Priese~\cite{Priese1978}.
The correlation of the codivergence property to the bisimulation property is emphasized independently by Fu~\cite{FuYuxi} and by van Glabbeek, Luttik and Trcka~\cite{vanGlabbeekLuttikTrcka2009}.

Several other bisimulation based equivalences have been proposed for
the $\pi$-calculus. First of all let's take a look at the first
equivalence proposed for the $\pi$-calculus~\cite{MilnerParrowWalker1992}, the early equivalence.
For the sake of illustration let's denote by $\dot{\approx}$ the
largest equipollent codivergent weak bisimulation. We say that $P,Q$
are early equivalent, notation $P\approx_{e}Q$, if
$(\widetilde{c})(P\,|\,O)\dot{\approx}(\widetilde{c})(Q\,|\,O)$ for
every environment $(\widetilde{c})(\_\,|\,O)$. Despite the result
in~\cite{Sangiorgi1992}, it is still an open problem if
$\approx_{e}$ coincides with the weak bisimilarity $\simeq $. But
the issue is not that important in view of the following
incompatibility one finds in the definition of $\approx_{e}$.
\begin{itemize}
\item The bisimulation property together with the equipollence property assume that the environments are dynamically changing.

\item The closure under the environments in the beginning of the observation says the opposite.
\end{itemize}
The so-called late equivalence~\cite{MilnerParrowWalker1992,ParrowSangiorgi1995} has
also been studied in literature, especially when one constructs
proof systems. The late equivalence is not really
an observational equivalence. As long as the one step interactions
are atomic actions, there is no way to tell apart by a third party
the difference between $a(x).P+a(x).Q$ and
$a(x).P+a(x).Q+a(x).([x{=}d]P+[x{\ne}d]Q)$. Even if the late
equivalence is useful when names are treated uniformly, it is a
less attractive equivalence when name dichotomy is applied.

Another well known equivalence for the $\pi$-calculus is the open
bisimilarity~\cite{Sangiorgi1996AI}. The open approach was meant to
deal with the open $\pi$-terms head-on. A closely related
equivalence is the quasi open
bisimilarity~\cite{SangiorgiWalker2001,FuYuxi2005}. An open
bisimulation is a distinction indexed set of relations. In the
presence of the name dichotomy, an indexed family of relations is no
longer necessary. The following is the standard definition of the
open bisimilarity iterated in the present framework.
\begin{quote}
A codivergent bisimulation $\mathcal{R}$ on $\mathcal{T}$
is an {\em open bisimulation} if the following statements are valid.
\begin{enumerate}
\item If $S\mathcal{R}T$ then $S\sigma\;\mathcal{R}\;T\sigma$ for every substitution $\sigma$.
\item If $T\mathcal{R}^{-1}S \stackrel{\ell}{\longrightarrow}S'$ then
$T\stackrel{\ell}{\Longrightarrow}T'\mathcal{R}^{-1}S'$ for some
$T'$.
\item If $S\mathcal{R}T \stackrel{\ell}{\longrightarrow}T'$ then
$S\stackrel{\ell}{\Longrightarrow}S'\mathcal{R}T'$ for some $S'$.
\end{enumerate}
The {\em open bisimilarity} $\approx_o$ is the largest open
bisimulation.
\end{quote}
The relation $\approx_o$ should be given up. For one
thing it fails (\ref{open-eq-1}). The term $\overline{a}a$ for example cannot be simulated by
$[x{=}y]\overline{a}a\,|\,[x{\not=}y]\overline{a}a$. If one tries to correct the above definition, one soon realizes that what one gets is the weak equality.

\subsubsection{Algebraic Property}\label{sec-Algebraic-Property}

This section takes a closer look at the absolute equality on
the $\pi$-terms. The goal is to reduce the proof of an equality between
$\pi$-terms to the proof of an equality between $\pi$-processes. To
describe the reductions, we need to fix some terminology and
notation. We will write $\subseteq_{f}$ for the finite subset
relationship. If
$\mathcal{F}\subseteq_{f}\mathcal{N}\cup\mathcal{N}_{v}$, then
$\mathcal{F}_{c}$ stands for $\mathcal{F}\cap\mathcal{N}$ and
$\mathcal{F}_{v}$ for $\mathcal{F}\cap\mathcal{N}_{v}$. Given such a
finite set $\mathcal{F}$, a condition $\psi$ may completely
characterize the relationships between the elements of
$\mathcal{F}$. This intuition is formalized in the next definition.

\begin{definition}\label{2009-01-29}
Suppose $\mathcal{F}\subseteq_{f}\mathcal{N}\cup\mathcal{N}_{v}$. We
say that a satisfiable condition $\psi$ is {\em complete on}
$\mathcal{F}$, if $\mathcal{F}_{c}=n(\psi)$,
$\mathcal{F}_{v}=v(\psi)$ and for every $x\in\mathcal{F}_v$ and
every $q \in \mathcal{F}$ it holds that either $\psi \Rightarrow x =
q$ or $\psi \Rightarrow x \ne q$. A condition $\psi$ is {\em
complete} if it is complete on $n(\psi)\cup v(\psi)$.
\end{definition}

The defining property of Definition~\ref{2009-01-29} immediately
implies the following lemma.

\begin{lemma}\label{2009-11-13}
Suppose $\varphi$ is complete on $\mathcal{F}$ and $n(\psi)\cup
v(\psi)\subseteq \mathcal{F}$. Then either $\varphi\Rightarrow\psi$
or $\varphi\psi\Rightarrow\bot$.
\end{lemma}

It follows from Lemma~\ref{2009-11-13} that if both $\varphi,\psi$
are complete on $\mathcal{F}$ then either
$\varphi\Leftrightarrow\psi$ or $\varphi\psi\Leftrightarrow\bot$.
This fact indicates that we may apply complete conditions to carry
out case analysis when reasoning about term equality. A case
analysis is based on a complete disjoint partition.

\begin{definition}\label{2009-02-01}
Suppose $\mathcal{F}\subseteq_{f}\mathcal{N}\cup\mathcal{N}_{v}$. A
finite set $\{\varphi_{i}\}_{i\in I}$ is a {\em complete disjoint
partition} of $\mathcal{F}$ if the following conditions are met.
\begin{enumerate}
\item $\varphi_{i}$ is complete on $\mathcal{F}$ for every $i\in I$.
\item $\bigvee_{i\in I}\varphi_{i}\Leftrightarrow\top$.
\item $\varphi_{i}\wedge\varphi_{j}\Rightarrow\bot$ for all $i,j$
such that $i\not=j$.
\end{enumerate}
\end{definition}

Given a set
$\mathcal{F}\subseteq_{f}\mathcal{N}\cup\mathcal{N}_{v}$, there is
always a complete disjoint partition on $\mathcal{F}$. So one can
always apply a partition to a $\pi$-term.

\begin{lemma}\label{2009-05-27}
$T\sim\sum_{i\in I}\varphi_{i}T$, where $\{\varphi_{i}\}_{i\in I}$
is a complete disjoint partition of $gn(T)\cup fv(T)$.
\end{lemma}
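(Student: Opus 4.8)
The plan is to unfold the term-level equality through Definition~\ref{2009-01-25} and then to collapse the parallel product $\sum_{i\in I}\varphi_{i}T$ to a single copy of $T$ using the guard laws and monoid laws of Lemma~\ref{monoidal}. Since each $\varphi_{i}$ is complete on $\mathcal{F}=gn(T)\cup fv(T)$ (Definition~\ref{2009-01-29}), it satisfies $v(\varphi_{i})=\mathcal{F}_{v}=fv(T)$ and $n(\varphi_{i})=\mathcal{F}_{c}=gn(T)$, so the free name variables of both $T$ and $\sum_{i\in I}\varphi_{i}T$ are exactly $fv(T)$. By Definition~\ref{2009-01-25} it therefore suffices to fix an arbitrary assignment $\rho$ with $fv(T)\subseteq dom(\rho)$ and to establish $T\rho\sim(\sum_{i\in I}\varphi_{i}T)\rho$, where $\sim$ is the external strong bisimilarity of Definition~\ref{external-strong-bi}. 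Writing $I=\{1,\ldots,n\}$ and expanding the conditional choice according to (\ref{choice-1}), substitution distributes over composition and over the guards, giving the syntactic identity $(\sum_{i\in I}\varphi_{i}T)\rho\equiv(\varphi_{1}\rho)(T\rho)\,|\,\cdots\,|\,(\varphi_{n}\rho)(T\rho)$, a finite parallel composition of guarded copies of $T\rho$.

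The crucial combinatorial observation is that, because $\{\varphi_{i}\}_{i\in I}$ is a complete disjoint partition of $\mathcal{F}$ (Definition~\ref{2009-02-01}), exactly one index $k\in I$ satisfies $\rho\models\varphi_{k}$: the clause $\bigvee_{i\in I}\varphi_{i}\Leftrightarrow\top$ forces at least one, and $\varphi_{i}\wedge\varphi_{j}\Rightarrow\bot$ for $i\ne j$ forbids two. For this $k$, the condition $\varphi_{k}\rho$ is a concatenation of guards each of the form $[a{=}a]$ or $[a{\ne}b]$ with $a\ne b$, so iterating Lemma~\ref{monoidal}(4) yields $(\varphi_{k}\rho)(T\rho)\sim T\rho$. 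For every $i\ne k$ we have $\rho\not\models\varphi_{i}$, which means $\varphi_{i}$ contains either a match $[m{=}n]$ with $\rho(m)\ne\rho(n)$ or a mismatch $[m{\ne}n]$ with $\rho(m)=\rho(n)$; applying $\rho$ turns this into a false guard $[a{=}b]$ with $a\ne b$ or $[a{\ne}a]$, and Lemma~\ref{monoidal}(4) gives $(\varphi_{i}\rho)(T\rho)\sim\mathbf{0}$.

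It then remains to assemble these facts. Using that $\sim$ is preserved by composition, I would replace each factor indexed by $i\ne k$ with $\mathbf{0}$ and the factor indexed by $k$ with $T\rho$, and then discard the trivial factors by the monoid laws $P\,|\,\mathbf{0}\sim P$ together with commutativity and associativity from Lemma~\ref{monoidal}(1), concluding $(\varphi_{1}\rho)(T\rho)\,|\,\cdots\,|\,(\varphi_{n}\rho)(T\rho)\sim T\rho$. As $\rho$ was an arbitrary assignment defined on $fv(T)$, Definition~\ref{2009-01-25} delivers $T\sim\sum_{i\in I}\varphi_{i}T$.

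The argument is essentially routine once the partition property is brought to bear; the only points demanding care are the bookkeeping that guarantees $fv(\sum_{i\in I}\varphi_{i}T)=fv(T)$ and the verification that each assignment selects a unique satisfied disjunct, so that the two subcases $i=k$ and $i\ne k$ together account for every factor of the product. I do not anticipate a genuine obstacle here, since all the needed algebraic equalities are already recorded in Lemma~\ref{monoidal}.
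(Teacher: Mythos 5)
Your proof is correct. The paper states Lemma~\ref{2009-05-27} without proof (it is one of the lemmas the authors leave to the reader), and your argument --- unfold Definition~\ref{2009-01-25}, observe that a complete disjoint partition leaves exactly one $\varphi_{k}$ satisfied under any assignment $\rho$, then use Lemma~\ref{monoidal}(4) to reduce the satisfied factor to $T\rho$ and the others to $\mathbf{0}$, and finish with the monoid laws --- is precisely the routine argument the authors intend; the only implicit ingredient worth flagging is that iterating Lemma~\ref{monoidal}(4) through a concatenation of guards and replacing factors of a composition both rely on $\sim$ being a congruence for match, mismatch and composition, which is standard and covered by the discussion around Lemma~\ref{algebraic}.
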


Now suppose $\varphi$ is complete on $gn(S\,|\,T)\cup fv(S\,|\,T)$.
Usually it is much easier to prove $\varphi S= \varphi T$ than
to prove $S= T$. So there is a strong interest in the result
stated next.

\begin{lemma}\label{2009-05-28}
Let $\{\varphi_{i}\}_{i\in I}$ be a complete disjoint partition of
$gn(S\,|\,T)\cup fv(S\,|\,T)$. Then $S= T$ if and only if
$\varphi_{i}S= \varphi_{i}T$ for every $i\in I$.
\end{lemma}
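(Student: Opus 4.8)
The plan is to push the statement down to the process level through Definition~\ref{2009-01-25} and to observe that, once an assignment is applied, each guard in $\varphi_i$ evaluates to a constant-name guard that is either trivially true or trivially false. First I would record two bookkeeping facts. Writing $\mathcal{F}=gn(S\,|\,T)\cup fv(S\,|\,T)$, completeness of $\varphi_i$ on $\mathcal{F}$ gives $v(\varphi_i)=fv(S\,|\,T)$, hence $fv(\varphi_i S\,|\,\varphi_i T)=fv(S\,|\,T)$; so the assignments relevant to the equality $\varphi_i S=\varphi_i T$ are precisely those relevant to $S=T$, and each such $\rho$ is defined on all variables occurring in $\varphi_i$. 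The governing observation is then: for any term $X$ and any assignment $\rho$ defined on $v(\varphi_i)$, one has $(\varphi_i X)\rho\sim X\rho$ when $\rho\models\varphi_i$ and $(\varphi_i X)\rho\sim\mathbf{0}$ when $\rho\not\models\varphi_i$. This is because $(\varphi_i X)\rho$ is just $X\rho$ wrapped by the guards of $\varphi_i$ with each $p,q$ replaced by the names $\rho(p),\rho(q)$: when $\rho\models\varphi_i$ every guard takes the form $[a{=}a]$ or $[a{\ne}b]$ and is peeled off by Lemma~\ref{monoidal}(4) together with the congruence of $\sim$, whereas when $\rho\not\models\varphi_i$ some guard takes the form $[a{=}b]$ with $a\ne b$ or $[a{\ne}a]$ and collapses the term to $\mathbf{0}$.

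Granting also $\sim\;\subseteq\;=$ (every external strong bisimulation is a $\mathbb{V}$-bisimulation, so $\sim\;\subseteq\;\simeq_{\mathbb{V}}\;\subseteq\;=$ by Lemma~\ref{linb}), both directions become short. For the forward direction, assuming $S=T$, I would fix $i$ and an assignment $\rho$ with $fv(\varphi_i S\,|\,\varphi_i T)\subseteq dom(\rho)$, equivalently $fv(S\,|\,T)\subseteq dom(\rho)$. If $\rho\models\varphi_i$ then the governing observation together with $S\rho=T\rho$ (which holds since $S=T$ and $fv(S\,|\,T)\subseteq dom(\rho)$) gives $(\varphi_i S)\rho\sim S\rho=T\rho\sim(\varphi_i T)\rho$; if $\rho\not\models\varphi_i$ then both sides are $\sim\mathbf{0}$. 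Either way $(\varphi_i S)\rho=(\varphi_i T)\rho$, so $\varphi_i S=\varphi_i T$ by Definition~\ref{2009-01-25}. For the backward direction, assuming $\varphi_i S=\varphi_i T$ for all $i$, I would fix $\rho$ with $fv(S\,|\,T)\subseteq dom(\rho)$; clauses (2) and (3) of Definition~\ref{2009-02-01} supply a unique index $i$ with $\rho\models\varphi_i$, and then $S\rho\sim(\varphi_i S)\rho=(\varphi_i T)\rho\sim T\rho$, whence $S\rho=T\rho$. As $\rho$ was arbitrary, $S=T$.

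The guard-peeling inside the governing observation is the only routine calculation, and I do not anticipate a real obstacle. The two points that deserve care are the free-variable bookkeeping that identifies the family of relevant assignments on both sides, and the appeal to the completeness and disjointness clauses of the partition to produce, for each relevant $\rho$, the unique index $i$ with $\rho\models\varphi_i$. Once these are in place, the essential content is simply that applying an assignment turns the symbolic guards of $\varphi_i$ into decidable constant-name guards, after which Lemma~\ref{monoidal} and the transitivity of $=$ finish the argument.
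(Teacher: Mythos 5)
Your argument is correct, and since the paper states Lemma~\ref{2009-05-28} without proof (it is one of the lemmas left to the reader), your reduction---push everything to the process level via Definition~\ref{2009-01-25}, observe that an assignment turns each guard of $\varphi_i$ into a constant-name match or mismatch that Lemma~\ref{monoidal}(4) either peels off or collapses to $\mathbf{0}$, and use completeness/disjointness of the partition to select the unique satisfied $\varphi_i$ in the converse direction---is exactly the routine argument the authors intend. The two bookkeeping points you flag (that $v(\varphi_i)=\mathcal{F}_v=fv(S\,|\,T)$ so the relevant assignments coincide on both sides, and that $\sim\;\subseteq\;\simeq_{\mathbb{V}}\;\subseteq\;=$ via Lemma~\ref{linb}) are precisely what is needed, so nothing is missing.
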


How can we reduce the equality proof of $\varphi S= \varphi T$
further if $\varphi$ is complete on say $\{a,b,x,y,z\}=gn(S\,|\,T)\cup fv(S\,|\,T)$?
Suppose $\varphi$ is $[x{=}y][y{=}a][z{=}b]$.
Consider the equality $\varphi S= \varphi T$.
This equality is the same as
\[\varphi S\{a/x,a/y,b/z\} = \varphi T\{a/x,a/y,b/z\},\]
which can be reduced to
$S\{a/x,a/y,b/z\} = T\{a/x,a/y,b/z\}$.
The substitution $\{a/x,a/y,b/z\}$ has the property that it agrees
with the condition $x{=}a\wedge y{=}a\wedge z{=}b$ and its domain
set and range set are subsets of $\{a,b,x,y,z\}$. Substitutions
of this kind are very useful when rewriting $\pi$-terms.

\begin{definition}
An assignment $\rho$ {\em agrees with} $\psi$, and $\psi$ agrees
with $\rho$, if the following statements are valid.
  \begin{enumerate}
  \item $v(\psi)\subseteq dom(\rho)$.
  \item For each $x\in v(\psi)$ and each $a\in n(\psi)$, $\psi\Rightarrow x=a$ if and only if
  $\rho(x)=a$.
  \item For all $x,y\in v(\psi)$, $\psi \Rightarrow x = y$ if and only if $\rho(x) = \rho(y)$.
  \end{enumerate}
A substitution $\sigma$  is {\em induced} by $\psi$ if it agrees with $\psi$ such that
$n(\sigma) \subseteq n(\psi)$, $v(\sigma) \subseteq v(\psi)$ and $dom(\sigma)=v(\psi)$.
\end{definition}
There could be many assignments that agree with $\psi$ and many substitutions that are induced by $\psi$.
We shall write $\rho_{\psi}$ for some assignment that agrees with $\psi$ and
$\sigma_{\psi}$ for some substitution that is induced by $\psi$. It
should be clear that if $\psi$ is complete, then $\sigma_{\psi}(x)
\not= \sigma_{\psi}(y)$ if and only if $\psi \Rightarrow x \not= y$
if and only if $\rho_{\psi}(x) \not= \rho_{\psi}(y)$. The validity
of the next lemma is obvious.

\begin{lemma}
$\varphi T \sim \varphi T\sigma_{\varphi}$.
\end{lemma}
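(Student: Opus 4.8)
The plan is to unfold Definition~\ref{2009-01-25} and reduce the term equality to a comparison of processes, where the structural laws of Lemma~\ref{monoidal} do the real work. So first I would fix an arbitrary assignment $\rho$ with $fv(\varphi T\,|\,\varphi T\sigma_{\varphi})\subseteq dom(\rho)$ and establish $(\varphi T)\rho\sim(\varphi T\sigma_{\varphi})\rho$ as processes. Since $\rho$ turns every free name variable into a name, each match and mismatch occurring in the guard $\varphi$ becomes a guard relating two names, so the whole of $\varphi\rho$ collapses, by Lemma~\ref{monoidal}(4) and the fact that $\sim$ is a congruence under the match and mismatch operators, to either $\top$ or $\bot$. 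Crucially the guard $\varphi$ is literally the same on both sides of the claimed equality, so both collapse the same way. This splits the proof into two cases according to whether $\rho\models\varphi$.

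If $\rho\not\models\varphi$, then either some match $[m{=}n]$ in $\varphi$ has $\rho(m)\ne\rho(n)$, or some mismatch $[m{\ne}n]$ has $\rho(m)=\rho(n)$; in either case $\varphi\rho$ contains a guard equated to $\mathbf 0$ by Lemma~\ref{monoidal}(4), whence $(\varphi T)\rho\sim\mathbf 0\sim(\varphi T\sigma_{\varphi})\rho$, and this subcase is discharged by transitivity and symmetry of $\sim$.

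The substantive case is $\rho\models\varphi$. Here every guard of $\varphi\rho$ is of the form $[a{=}a]$ or $[a{\ne}b]$ with $a\ne b$, so $\varphi\rho$ is erased by Lemma~\ref{monoidal}(4); by congruence of $\sim$ it then suffices to prove $T\rho\sim T\sigma_{\varphi}\rho$. The key observation is that $\rho$ and the composite $\sigma_{\varphi}\rho$ coincide on $v(\varphi)\supseteq fv(T)$, so that $T\rho$ and $T\sigma_{\varphi}\rho$ are the same term up to the $\alpha$-convention and are therefore related by $\sim$ by reflexivity. I would verify this agreement variable by variable for $x\in v(\varphi)$ using that $\sigma_{\varphi}$ is induced by $\varphi$: if $\sigma_{\varphi}(x)=a$ for a name $a$, then the agreement clause gives $\varphi\Rightarrow x{=}a$, and $\rho\models\varphi$ forces $\rho(x)=\rho(a)=a=\rho(\sigma_{\varphi}(x))$; if $\sigma_{\varphi}(x)=y$ for a variable $y$, then completeness of $\varphi$ together with the agreement clauses yields $\varphi\Rightarrow x{=}y$, and again $\rho\models\varphi$ gives $\rho(x)=\rho(y)=\rho(\sigma_{\varphi}(x))$.

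The main obstacle is precisely the bookkeeping in this last subcase: concluding $\varphi\Rightarrow x{=}y$ when $\sigma_{\varphi}(x)=y$ is a variable uses all three clauses of ``agrees with $\varphi$'' together with completeness, and in particular relies on the representative chosen by $\sigma_{\varphi}$ being a fixed point of $\sigma_{\varphi}$, which is built into the construction of a substitution induced by $\varphi$. Once the identity $\rho(\sigma_{\varphi}(x))=\rho(x)$ on $v(\varphi)$ is in hand, everything else is routine guard elimination through Lemma~\ref{monoidal}(4) and the congruence of $\sim$.
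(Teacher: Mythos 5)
The paper supplies no argument here (it declares the lemma's validity obvious), so there is nothing to compare against in terms of approach; what you wrote --- unfolding Definition~\ref{2009-01-25}, splitting on whether $\rho\models\varphi$, eliminating the instantiated guards by Lemma~\ref{monoidal}(4) and closure of $\sim$ under match/mismatch, and then checking $\rho(\sigma_{\varphi}(x))=\rho(x)$ for $x\in v(\varphi)$ --- is exactly the computation the authors are waving at, and it is correct for the intended choice of $\sigma_{\varphi}$.

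The one claim you should not let pass so quickly is that the fixed-point property of the chosen representatives is ``built into the construction'' of an induced substitution. It is not: the paper only requires that $\sigma$ agree with $\varphi$ and that $dom(\sigma)=v(\varphi)$, $n(\sigma)\subseteq n(\varphi)$, $v(\sigma)\subseteq v(\varphi)$. If $\varphi$ is a complete condition entailing $x{\ne}y$, the transposition $\{y/x,x/y\}$ satisfies every one of these clauses, yet $\sigma(x)=y$ while $\varphi\not\Rightarrow x{=}y$; taking $T\equiv\overline{a}x$, the two sides of the lemma then differ observably under any assignment separating $x$ and $y$, so both your step ``$\sigma_{\varphi}(x)=y$ implies $\varphi\Rightarrow x{=}y$'' and the lemma itself fail for that choice. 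Your step is in fact equivalent to requiring $\sigma_{\varphi}(\sigma_{\varphi}(x))=\sigma_{\varphi}(x)$, and this has to be imposed as part of the choice of $\sigma_{\varphi}$ (pick one representative per $\varphi$-equivalence class, a name if the class contains one, and map the representative to itself) rather than derived from the definition of ``induced by''. This is a defect in the paper's definition, not in your strategy --- you have correctly isolated the crux --- but the honest statement is that the lemma holds for \emph{some} (idempotent) induced substitution, not for all of them.
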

So we have reduced the proof of $S= T$ to the proof of $\varphi
S\sigma_{\varphi}=  \varphi T\sigma_{\varphi}$ for some
$\varphi$ complete on $gn(S\,|\,T)\cup fv(S\,|\,T)$. Now the match
conditions which appear in $\varphi$ are inessential. We may apply the
following lemma to remove all the match conditions.

\begin{lemma}\label{airport}
The following statements are valid.
\begin{enumerate}
\item If $n\notin gn(S\,|\,T)\cup fv(S\,|\,T)$, then
$[x{\not=}n]S= [x{\not=}n]T$ if and only if $S= T$.
\item If $x\notin fv(S\,|\,T)$, then $[x{=}n]S= [x{=}n]T$ if and
only if $S= T$ if and only if $[x{\not=}n]S=
[x{\not=}n]T$.
\end{enumerate}
\end{lemma}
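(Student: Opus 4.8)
The plan is to unfold the term equality of Definition~\ref{2009-01-25} and reason one assignment at a time, using the strong-equality laws $[a{=}a]P\sim P$, $[a{\ne}b]P\sim P$, $[a{\ne}a]P\sim\mathbf{0}$ and $[a{=}b]P\sim\mathbf{0}$ from Lemma~\ref{monoidal}, together with the inclusion $\sim\;\subseteq\;=$. The two ``congruence'' directions (from $S=T$ to a guarded equality) require neither freshness nor $x\notin fv(S\,|\,T)$: if $S=T$ then for every assignment $\rho$ whose domain contains the relevant variables we have $S\rho=T\rho$, and $([x{\ne}n]S)\rho$ is $\sim$-equal to $S\rho$ when $\rho(x){\ne}\rho(n)$ and to $\mathbf{0}$ otherwise, and symmetrically for $T$; in either case $([x{\ne}n]S)\rho=([x{\ne}n]T)\rho$, so $[x{\ne}n]S=[x{\ne}n]T$. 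The match guard is handled the same way. This disposes of the backward implication of part~1 and the forward ($S=T\Rightarrow$ guarded) halves of part~2.

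For the forward directions of part~2 I would exploit $x\notin fv(S\,|\,T)$, which gives complete freedom over the value assigned to $x$. Given $\rho$ with $fv(S\,|\,T)\subseteq dom(\rho)$, extend it to $\rho'$ with $\rho'(x)=\rho'(n)$ (for the match version) or with $\rho'(x)\ne\rho'(n)$, possible since $\mathcal{N}$ is infinite (for the mismatch version). Then $([x{=}n]S)\rho'\sim S\rho'$ and, because $x$ does not occur freely, $S\rho'=S\rho$, and likewise for $T$; the hypothesis $([x{=}n]S)\rho'=([x{=}n]T)\rho'$ then yields $S\rho=T\rho$. Since $\rho$ was arbitrary, $S=T$.

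The interesting direction is the forward implication of part~1. Fix $\rho$ with $fv(S\,|\,T)\subseteq dom(\rho)$ (extended to $x$ and, if $n$ is a variable, to $n$); I must show $S\rho=T\rho$. If $\rho(x)\ne\rho(n)$ the guard is active and $S\rho\sim([x{\ne}n]S)\rho=([x{\ne}n]T)\rho\sim T\rho$ at once. If $n$ is a name variable, then since $n\notin fv(S\,|\,T)$ I may reassign $\rho(n)$ to a name distinct from $\rho(x)$ without altering $S\rho$ or $T\rho$, reducing to the active case. The only remaining situation is $n\in\mathcal{N}$ with $\rho(x)=n$, and here the freshness hypothesis $n\notin gn(S\,|\,T)$ is essential. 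Let $Y=\{y\in fv(S\,|\,T)\mid\rho(y)=n\}$ and choose a name $e$ that is fresh: outside $gn(S\,|\,T)$, different from $n$, and different from every $\rho(y)$ with $y\in fv(S\,|\,T)$. Define $\rho_e$ by $\rho_e(y)=e$ for $y\in Y$ and $\rho_e=\rho$ elsewhere. Then $\rho_e(x)=e\ne n$, so the active case gives $S\rho_e=T\rho_e$. Since $n\notin gn(S\,|\,T)$ and no variable is sent to $n$ by $\rho_e$, the name $n$ does not occur in $S\rho_e$ or $T\rho_e$; hence the renaming $\{n/e\}$ is injective on the names actually occurring, and $(S\rho_e)\{n/e\}\equiv S\rho$ and $(T\rho_e)\{n/e\}\equiv T\rho$. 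Invoking the preservation of $=$ under injective renamings, I conclude $S\rho=(S\rho_e)\{n/e\}=(T\rho_e)\{n/e\}=T\rho$.

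I expect the subcase $n\in\mathcal{N}$ with $\rho(x)=n$ to be the main obstacle: the mismatch cannot be activated by changing the image of $x$ alone, because several variables may be pinned to $n$ at once. The device of collapsing \emph{all} of $Y$ onto a single fresh $e$ is exactly what makes $n$ vanish from $S\rho_e$ and $T\rho_e$, so that renaming $e$ back to $n$ introduces no name clash and therefore respects $=$. The single external fact I rely on is that an injective renaming preserves the absolute equality; this follows routinely from the transition preservation of Lemma~\ref{2008-09-25}(1) (an injective renaming is a reversible relabeling, so it maps bisimulations, equipollence, extensionality and codivergence to themselves), but it should be recorded explicitly before the argument is run.
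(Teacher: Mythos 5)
The paper states Lemma~\ref{airport} without proof (it is one of the lemmas left to the reader), so there is nothing to compare against line by line; judged on its own, your argument is correct and complete. The easy directions (guard activation via Lemma~\ref{monoidal} and $\sim\;\subseteq\;=$, and the freedom to choose $\rho(x)$ when $x\notin fv(S\,|\,T)$) are handled exactly as one would expect, and you correctly identify the only delicate case, namely part~1 with $n\in\mathcal{N}$ and $\rho(x)=n$, where the mismatch guard cannot be activated by perturbing $x$ alone. Your device of collapsing the whole fibre $Y=\rho^{-1}(n)\cap fv(S\,|\,T)$ onto a single fresh name $e$, so that $n$ disappears entirely from $S\rho_e$ and $T\rho_e$ and the renaming $\{n/e\}$ is capture-free, is the right move and does use the freshness hypothesis $n\notin gn(S\,|\,T)$ in an essential way. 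The one external fact you flag --- preservation of $=$ under renamings --- is precisely the paper's Lemma~\ref{eq-stable}, stated two lemmas later in the same subsection, so you can simply cite it rather than re-derive it from Lemma~\ref{2008-09-25}(1); the only (harmless) implicit assumption in both your argument and the statement itself is that $x$ and $n$ are syntactically distinct, since otherwise $[x{\not=}n]$ degenerates to $\bot$.
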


After Lemma~\ref{airport} we may assume that an equality to be
proved is of the form $\delta_{\mathcal{F}}S=
\delta_{\mathcal{F}}T$, where $\mathcal{F}=gn(S\,|\,T)\cup
fv(S\,|\,T)$ and $\delta_{\mathcal{F}}$ is the conjunction
\[\bigwedge\{x\ne n \mid x\in\mathcal{F}_{v},\ n\in \mathcal{F}\ \mathrm{and}\ x,n\ \mathrm{are}\ \mathrm{distinct}\}.\]
In the last step of the reduction we apply the following result.

\begin{lemma}\label{term2process}
$\delta_{\mathcal{F}}S= \delta_{\mathcal{F}}T$ if and only if
$S\rho_{\delta_{\mathcal{F}}}= T\rho_{\delta_{\mathcal{F}}}$,
where $\mathcal{F}=gn(S\,|\,T)\cup fv(S\,|\,T)$.
\end{lemma}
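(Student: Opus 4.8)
The plan is to unfold the term equality of Definition~\ref{2009-01-25} on both sides and reduce everything to a single observation about how the mismatch guards in $\delta_{\mathcal{F}}$ behave once an assignment is applied. Recall that $\delta_{\mathcal{F}}$ is a concatenation of mismatches $[x{\ne}n]$ with $x\in\mathcal{F}_{v}$ and $n\in\mathcal{F}$, and that an assignment $\rho$ with $\mathcal{F}_{v}\subseteq dom(\rho)$ satisfies $\delta_{\mathcal{F}}$ exactly when it sends the free variables to pairwise distinct names all avoiding $\mathcal{F}_{c}$; this is precisely what it means for $\rho$ to agree with $\delta_{\mathcal{F}}$, so $\rho_{\delta_{\mathcal{F}}}$ is one such assignment. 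The first step is to record, using Lemma~\ref{monoidal}(4), that for any admissible $\rho$ one has $(\delta_{\mathcal{F}}S)\rho\sim S\rho$ when $\rho\models\delta_{\mathcal{F}}$ (every guard becomes a mismatch of distinct names and can be peeled off) and $(\delta_{\mathcal{F}}S)\rho\sim\mathbf{0}$ when $\rho\not\models\delta_{\mathcal{F}}$ (some guard becomes $[a{\ne}a]$, collapsing the whole term); the same holds for $T$. Since $\sim$ is contained in $=$ through Lemma~\ref{linb}, these equivalences may be used freely inside equality chains.

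For the forward implication I would instantiate the hypothesis $\delta_{\mathcal{F}}S=\delta_{\mathcal{F}}T$ at the single assignment $\rho_{\delta_{\mathcal{F}}}$, which is legal because $fv(\delta_{\mathcal{F}}S\,|\,\delta_{\mathcal{F}}T)=\mathcal{F}_{v}\subseteq dom(\rho_{\delta_{\mathcal{F}}})$. This yields $(\delta_{\mathcal{F}}S)\rho_{\delta_{\mathcal{F}}}=(\delta_{\mathcal{F}}T)\rho_{\delta_{\mathcal{F}}}$; since $\rho_{\delta_{\mathcal{F}}}\models\delta_{\mathcal{F}}$, the first step rewrites both sides and gives $S\rho_{\delta_{\mathcal{F}}}\sim\cdots=\cdots\sim T\rho_{\delta_{\mathcal{F}}}$, hence $S\rho_{\delta_{\mathcal{F}}}=T\rho_{\delta_{\mathcal{F}}}$ by transitivity.

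For the converse I must promote the single equality $S\rho_{\delta_{\mathcal{F}}}=T\rho_{\delta_{\mathcal{F}}}$ to $(\delta_{\mathcal{F}}S)\rho=(\delta_{\mathcal{F}}T)\rho$ for every admissible $\rho$. If $\rho\not\models\delta_{\mathcal{F}}$ both sides are $\sim\mathbf{0}$ and there is nothing to prove. If $\rho\models\delta_{\mathcal{F}}$, the key point is that $\rho$ and $\rho_{\delta_{\mathcal{F}}}$ both send $\mathcal{F}_{v}$ injectively to names outside $\mathcal{F}_{c}$, so the map that fixes $\mathcal{F}_{c}$ pointwise and sends $\rho_{\delta_{\mathcal{F}}}(x)\mapsto\rho(x)$ is a well-defined injection, i.e. a renaming $\alpha$, and it satisfies $S\rho_{\delta_{\mathcal{F}}}\alpha=S\rho$ and $T\rho_{\delta_{\mathcal{F}}}\alpha=T\rho$ because $\alpha$ leaves the names of $S,T$ untouched and relabels only the fresh witnesses. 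Applying $\alpha$ to the assumed equality then gives $S\rho=T\rho$, and the evaluation facts of the first step finish the case.

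The step I expect to be the genuine obstacle is the invariance of the absolute equality under renaming, namely $P=Q\Rightarrow P\alpha=Q\alpha$, which I used in the last paragraph and which is not listed among the quoted lemmas. I would establish it once and for all by checking that $\{(P\alpha,Q\alpha)\mid P=Q\}$ is reflexive, equipollent, extensional, codivergent and bisimilar; the transition clauses go through because Lemma~\ref{2008-09-25}(1) transports every move along $\alpha$ and, $\alpha$ being injective on the finitely many relevant names, along $\alpha^{-1}$ as well, so that the matching is faithful in both directions. Everything else in the argument is the routine guard bookkeeping of the first paragraph together with transitivity of $=$.
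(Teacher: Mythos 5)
Your proposal is correct and follows essentially the same route as the paper: the paper declares Lemma~\ref{term2process} ``immediate in view of'' Lemma~\ref{eq-stable} (invariance of $=$ under renaming), which is precisely the key fact you isolate---and propose to prove---in order to transport $S\rho_{\delta_{\mathcal{F}}}=T\rho_{\delta_{\mathcal{F}}}$ to an arbitrary satisfying assignment. The guard bookkeeping and the case split on $\rho\models\delta_{\mathcal{F}}$ that you spell out are exactly the routine details the paper leaves implicit.
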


It should be remarked that $\delta_{\mathcal{F}}S=
\delta_{\mathcal{F}}T$ is an equality between two open $\pi$-terms,
whereas $S\rho_{\delta_{\mathcal{F}}}= T\rho_{\delta_{\mathcal{F}}}$ is an equality between two $\pi$-processes.
The proof of Lemma~\ref{term2process} is immediate in view of the following fact.

\begin{lemma}\label{eq-stable}
If $P= Q$ then $P\alpha= Q\alpha$ for every renaming
$\alpha$.
\end{lemma}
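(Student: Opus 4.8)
The plan is to route the proof through the external characterisation rather than through Definition~\ref{absolute-equality} directly. By Theorem~\ref{coincidence-theorem} we have $=_{\pi}\,=\,\simeq_{\pi}$, so it suffices to show that the $\pi$-bisimilarity is preserved by renamings, i.e. that $P\simeq_{\pi}Q$ implies $P\alpha\simeq_{\pi}Q\alpha$. The advantage of working with $\simeq_{\pi}$ is that a $\pi$-bisimulation (Definition~\ref{external-bi}) is characterised purely by transition matching and codivergence, carrying no extensionality or equipollence closure requirement. This matters: the natural renaming-closed relation $\{(P\alpha,Q\alpha)\mid P\simeq_{\pi}Q\}$ is not visibly closed under composition with an arbitrary environment $O$, since $O$ need not lie in the range of $\alpha$, so a definitional route would force an awkward reconciliation of different renamings when checking extensionality. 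With $\simeq_{\pi}$ that difficulty disappears, since I only ever need to transport single moves.

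The technical core is a two-way correspondence between the moves of $P$ and those of $P\alpha$. The forward half is exactly Lemma~\ref{2008-09-25}(1): from $P\stackrel{\lambda}{\longrightarrow}P'$ one gets $P\alpha\stackrel{\lambda\alpha}{\longrightarrow}P'\alpha$. For the backward half I use that $\alpha$, being injective, admits an inverse renaming; applying Lemma~\ref{2008-09-25}(1) to that inverse shows that every move $P\alpha\stackrel{\mu}{\longrightarrow}R$ arises from a move $P\stackrel{\lambda}{\longrightarrow}P'$ with $\mu=\lambda\alpha$ and $R\equiv P'\alpha$. I would also record two elementary facts: renaming sends $\tau$ to $\tau$ and each label in $\mathcal{L}$ to a label in $\mathcal{L}$, so $\lambda$ is internal (respectively external) iff $\lambda\alpha$ is; and renaming commutes with $\Longrightarrow$, by iterating the forward half over $\tau$-steps.

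With this correspondence in hand I would show that $\mathcal{S}=\{(P\alpha,Q\alpha)\mid P\simeq_{\pi}Q,\ \alpha\text{ a renaming}\}$ is a $\pi$-bisimulation, whence $\mathcal{S}\subseteq\simeq_{\pi}$ and in particular $P\alpha\simeq_{\pi}Q\alpha$. Each clause is verified by the same pattern: given a move of $P\alpha$ (or $Q\alpha$), reflect it back to a move of $P$ (or $Q$) along the inverse renaming, answer it using $P\simeq_{\pi}Q$, and push the answer forward along $\alpha$ by Lemma~\ref{2008-09-25}(1); preservation of the internal/external nature of labels guarantees that the branching-$\tau$ clause, the external-$\ell$ clause, and, applied to infinite $\tau$-sequences, the codivergence clause are each matched in kind, while $\mathcal{S}$ is symmetric because $\simeq_{\pi}$ is. The main obstacle lies entirely in the backward correspondence: making precise that renamings \emph{reflect} transitions needs the injectivity of $\alpha$, understood as acting injectively on the names of $P$ and $Q$ so that no two channels are merged. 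This hypothesis is genuinely essential, for a merging map can create new $\tau$-actions and the statement would then fail; once it is in force, the only remaining subtlety is the bound-name bookkeeping for the input and bound-output rules, where the received or extruded name may fall outside the range of $\alpha$ and is recovered via the inverse renaming together with $\alpha$-conversion. Everything past that point is a routine transport argument.
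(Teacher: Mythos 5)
The paper states Lemma~\ref{eq-stable} without proof (it is one of the lemmas left to the reader), so there is no official argument to compare against; judged on its own terms, your route is the right one and is essentially the standard way to supply the missing details: pass to $\simeq_{\pi}$ via Theorem~\ref{coincidence-theorem}, and show that $\mathcal{S}=\{(P\alpha,Q\alpha)\mid P\simeq_{\pi}Q,\ \alpha\ \mathrm{a}\ \mathrm{renaming}\}$ is a $\pi$-bisimulation in the sense of Definition~\ref{external-bi} by transporting moves back and forth with Lemma~\ref{2008-09-25}(1). Your caveat about injectivity is not pedantry but is needed for the statement to be true at all: for a renaming that merges names of $P$ and $Q$ the lemma fails, e.g.\ $[a{\ne}b]\tau.\overline{c}c=\tau.\overline{c}c$, yet applying $\{b/a\}$ yields $[b{\ne}b]\tau.\overline{c}c$, which is unobservable while $\tau.\overline{c}c$ is not. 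So the intended reading must be that $\alpha$, extended by the identity outside $dom(\alpha)$, is injective on $gn(P\,|\,Q)$; the same proviso is already implicit in Lemma~\ref{2008-09-25}(1). (A direct proof from Definition~\ref{absolute-equality} is also workable, incidentally: the clash of renamings you worry about in the extensionality clause is resolved by the same permutation device described below, so the detour through Theorem~\ref{coincidence-theorem} is a convenience rather than a necessity.)

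The one step you should tighten is the backward correspondence as you state it. It is not true that every move $P\alpha\stackrel{\mu}{\longrightarrow}R$ has $\mu=\lambda\alpha$ for some move $P\stackrel{\lambda}{\longrightarrow}P'$: an input of $P\alpha$ may receive a name in $dom(\alpha)$ that is not in the range of $\alpha$. Concretely, with $P=a(x).T$ and $\alpha=\{b/a\}$ for fresh $b$, the process $P\alpha$ has the move labelled $ba$, whereas the labels $\lambda\alpha$ with $\lambda$ a move of $P$ are exactly those $bc$ with $c\ne a$; and this is a free name of the label, so it is not cured by the $\alpha$-conversion you invoke for bound names. Two routine patches are available. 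Either replace $\alpha$ at the outset by a finite permutation of $\mathcal{N}$ agreeing with it on $gn(P\,|\,Q)$ --- possible precisely because $\alpha$ does not merge, and harmless because only the action of $\alpha$ on $gn(P\,|\,Q)$ matters --- after which every name lies in the range and your two-way label correspondence is exact. Or answer such a move by letting $P$ input a fresh name $e$ instead, matching that with $Q$ via $P\simeq_{\pi}Q$, and pushing the answer forward along the renaming $\alpha\cup\{a/e\}$ rather than $\alpha$ itself; this is legitimate because your candidate relation $\mathcal{S}$ quantifies over all renamings. With either repair the remainder of your argument --- preservation of the internal/external character of labels, the branching clause, the external clause, and codivergence via reflected infinite $\tau$-sequences --- goes through as you describe.
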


We conclude that all equality proofs for $\pi$-terms can be
transformed to equality proofs for $\pi$-processes.

Lemma~\ref{2009-05-28}, Lemma~\ref{airport},
Lemma~\ref{term2process} and Lemma~\ref{eq-stable} are also valid
for $\sim$.

It is difficult to attribute the above lemmas to the original authors.
It would be fair to say that the early researchers on the $\pi$-calculus were all aware of the properties described in these lemmas.
See for example the work of Parrow and Sangiorgi~\cite{ParrowSangiorgi1995} and Lin~\cite{Lin2003}.

\subsection{Testing Equivalence}

The testing equivalence of De Nicola and
Hennessy~\cite{DeNicolaHennessy1984,Hennessy1988} is probably the
best known nonbisimulation equivalence for processes. In the testing
approach, one only cares about the {\em result} of observation, not
the {\em course} of observation. To define a testing equivalence,
one needs to explain what the observers are, how the observers carry
out tests, and what counts as the result of a test. In the spirit of
observation theory, there is not much room for variation. The
observers are the environments. A test by the observer
$(\widetilde{c})(\_\,|\,O)$ on a process $P$ is a {\em complete} internal action
sequence of $(\widetilde{c})(P\,|\,O)$. The result of a test can be
either successful or unsuccessful. From the point of view of
observation, the success of a test can only be indicated by an
observable action.
The only variations that may arise are concerned with the way the successes are reported.

This subsection serves two purposes.
One is to demonstrate that the testing theory for the $\pi$-calculus is just as simple as that for CCS.
The second is to give a model independent characterization of the testing equivalence.
In Section~\ref{sec-May-and-Must-Equivalence} and Section~\ref{sec-Testing-Equivalence-without-Testing}, we will use the {\em separated choice} to achieve greater observing power.
Syntactically the separated choice terms are given by
\[\sum_{i\in I}n_{i}(x).T_{i}\ \mathrm{and}\ \sum_{i\in I}\overline{n_{i}}m_{i}.T_{i}.\]
The semantics is defined by the following standard rules:
\[\begin{array}{ccc}
\inference{} {\sum_{i\in I}a_{i}(x).T_{i}\stackrel{a_{i}c}{\longrightarrow}T_{i}\{c/x\}} \ i\in I \ \ \ &
\inference{} {\sum_{i\in I}\overline{a_{i}}c_{i}.T_{i}\stackrel{\overline{a}c_{i}}{\longrightarrow}T_{i}} \ i\in I
\end{array}\]
The separated choice operator allows one to define derived terms like $\tau.S+a(x).T$.

\subsubsection{May Equivalence and Must Equivalence}\label{sec-May-and-Must-Equivalence}

The testing machinery of De Nicola and Hennessy~\cite{DeNicolaHennessy1984} can be summarized as follows.
\begin{itemize}
\item Success is indicated by the special action $\omega$.

\item In the presence of $\omega$,
the observers can be simplified to the environments of the form
$\_\,|\,O$. There is no need for the localization operator. An observer $O$ is obtained from some process by replacing some occurrences of ${\bf 0}$ by $\omega$.

\item A test of $P$ by $O$ is a complete internal action sequence of $P\,|\,O$.

\item A test of $P$ by $O$ is {\em DH}-{\em successful} if
    at some state of the test, the $\omega$ action is {\em immediately} fireable.
    It is unsuccessful otherwise.

\item A binary relation $\mathcal{R}$ on processes satisfies the {\em may predicate}
if $P\mathcal{R}Q$ implies that, for every observer $O$, some test
of $P\,|\,O$ is DH-successful if and only if some test of $Q\,|\,O$
is DH-successful.

\item A binary relation $\mathcal{R}$ on processes satisfies the {\em must predicate}
if $P\mathcal{R}Q$ implies that, for every observer $O$, all tests
of $P\,|\,O$ are DH-successful if and only if all tests of $Q\,|\,O$
are DH-successful.
\end{itemize}
Using the above terminology, we may define the may equivalence
$\approx_{may}$ as the largest relation on the $\pi$-processes that satisfies the
may predicate, and and the must equivalence $\approx_{must}$,
as the largest relation that satisfies the
the must predicate. Let's see some
illustrating examples.
\begin{itemize}
\item $A\,|\,!\tau\approx_{may}A$. The may equivalence ignores
divergence.

\item $A\,|\,!\tau\not\approx_{must}A$ if $A$ is terminating. The must equivalence is
discriminating against divergent processes.

\item $A\,|\,!\tau\approx_{must}B\,|\,!\tau$ even if
$A\not\approx_{must}B$. It is often said that divergence is
catastrophic for must equivalence.
\end{itemize}
It is obvious from these examples that $\approx_{must}$ is
incompatible to both $\approx_{may}$ and Milner's weak bisimilarity,
which is really undesirable. The may equivalence behaves better. It
is well known that $\approx_{may}$ is nothing but the so called
trace equivalence. A proof of this coincidence can be found in~\cite{DeNicolaHennessy1984}.
\begin{lemma}\label{2009-06-22}
$P\approx_{may}Q$ if and only if
$\;\forall\ell^{*}{\in}\mathcal{L}^{*}.\;(P\stackrel{\ell^{*}}{\Longrightarrow})\Leftrightarrow(Q\stackrel{\ell^{*}}{\Longrightarrow})$.
\end{lemma}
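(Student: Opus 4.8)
The plan is to first reduce the may predicate to a reachability condition, and then treat the two implications of the biconditional separately. For the reduction, note that a test of $P\,|\,O$ is a complete internal action sequence, and it is DH-successful exactly when $\omega$ is fireable at some intermediate state. Since any finite $\tau$-path reaching such a state can be extended to a complete internal action sequence through that state, \emph{some} test of $P\,|\,O$ is DH-successful if and only if $P\,|\,O\Longrightarrow R$ for some $R$ with $R\downarrow_{\omega}$. Thus I would restate the may predicate as: for every observer $O$, $P\,|\,O$ can reach an $\omega$-fireable state if and only if $Q\,|\,O$ can.

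For the direction from traces to testing ($\Leftarrow$), the central tool is a decomposition lemma for parallel composition: every internal computation $P\,|\,O\Longrightarrow R$ factors, after projecting onto the two components and pairing each synchronization with a complementary pair of external actions, as $P\stackrel{s}{\Longrightarrow}P'$ and $O\stackrel{\overline{s}}{\Longrightarrow}O'$ for some trace $s\in\mathcal{L}^{*}$, where $\overline{s}$ is the complement of $s$ and $R\equiv(\widetilde{c})(P'\,|\,O')$ with $\widetilde{c}$ collecting the names extruded by the bound outputs in $s$. Because $\omega$ is a global success action it commutes with the restriction $(\widetilde{c})$, so $R\downarrow_{\omega}$ reduces to $O'\downarrow_{\omega}$. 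Consequently ``$P\,|\,O$ reaches an $\omega$-fireable state'' holds iff there is a trace $s$ with $P\stackrel{s}{\Longrightarrow}$ and with $O$ able to consume $\overline{s}$ and then fire $\omega$; the second half of this condition refers only to $O$ and $s$, not to $P$. Hence which observers succeed depends on $P$ only through its trace set $\{s\mid P\stackrel{s}{\Longrightarrow}\}$, and equality of the trace sets of $P$ and $Q$ yields $P\approx_{may}Q$.

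For the converse ($\Rightarrow$), I argue by contraposition. If the trace sets differ, pick (without loss of generality) a trace $s=\ell_{1}\ldots\ell_{n}$ with $P\stackrel{s}{\Longrightarrow}$ but not $Q\stackrel{s}{\Longrightarrow}$, and construct a sequential observer $O_{s}$ whose only success path consumes exactly the complement of $s$ and then signals $\omega$. Each prefix of $O_{s}$ is the complementary action of the corresponding $\ell_{i}$: a complementary output for an input of $P$; an input guarded by a match $[x{=}b]$ for a free output $\overline{a}b$ of $P$, so that a wrong object leaves the observer stuck; and a plain input for a bound output $\overline{a}(c)$, with the received-name variable threaded into the later prefixes that mention the extruded name, all auxiliary names being chosen fresh. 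Because the prefixes of $O_{s}$ are linearly ordered, guarded only by their predecessors, the only internal computations of $R\,|\,O_{s}$ that reach $\omega$ force the partner $R$ to exhibit $\ell_{1},\ldots,\ell_{n}$ in order, the partner's own $\tau$-moves being absorbed into $\Longrightarrow$. Therefore $P\,|\,O_{s}$ reaches an $\omega$-fireable state while $Q\,|\,O_{s}$ does not, witnessing $P\not\approx_{may}Q$.

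The main obstacle is the name-handling in both directions. In the decomposition lemma one must track scope extrusion so that the names extruded by $P$ are correctly shared with $O$ and the complementary trace $\overline{s}$ is well defined up to $\alpha$-conversion. In the observer construction one must verify that $O_{s}$ succeeds \emph{only} along $s$, which requires the match tests for free-output objects and the faithful threading of bound-output names through the remaining prefixes, together with enough freshness to preclude accidental synchronizations. The surrounding bookkeeping is routine, but getting the bound-output and extrusion case exactly right is where the care lies.
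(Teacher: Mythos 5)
The paper itself does not prove this lemma (it defers to De Nicola--Hennessy), so your task is to supply the standard argument, and your overall plan --- reduce may-success to reachability of an $\omega$-fireable state, decompose computations of $P\,|\,O$ into complementary traces for the easy direction, and build a sequential trace observer for the contrapositive --- is the right one and matches the machinery the paper sets up in Fig.~\ref{Trace-Observer}. The reduction to reachability and the decomposition direction are fine modulo the bookkeeping you acknowledge.

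There is, however, a concrete gap in your observer construction for the bound-output case. You propose ``a plain input for a bound output $\overline{a}(c)$, with the received-name variable threaded into the later prefixes.'' This does not make $O_{s}$ succeed \emph{only} along $s$: a plain input also synchronizes with a \emph{free} output of an already-known name, and threading does not repair this. Take $P\equiv(c)\overline{a}c.\overline{c}c$ and $Q\equiv\overline{a}b.\overline{b}b$; their trace sets differ ($P\stackrel{\overline{a}(c)\,\overline{c}c}{\Longrightarrow}$ but $Q$ has no bound-output trace), yet your observer $a(x).x(y).[y{=}x]\omega$ succeeds on both, so the contrapositive produces no distinguishing test. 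What is needed at a bound-output position is a mismatch against every name currently known to the observer --- the component $[z{\notin}\mathcal{F}]$ in the paper's $O_{\overline{a}(b),\ell^{*}}^{\mathcal{F}}$ of Fig.~\ref{Trace-Observer}, with $\mathcal{F}$ initialized to $gn(P\,|\,Q)$ and grown by each name received along the way --- so that the observer proceeds only when the emitted name is genuinely fresh. Your closing remark that ``enough freshness'' is required refers to choosing the observer's auxiliary names fresh, which is a different issue and does not supply this test. With the $[z{\notin}\mathcal{F}]$ guard added (and $\mathcal{F}$ likewise recorded so that free-output positions can be matched against previously extruded names), the contrapositive goes through as you describe.
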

All observational equivalences should subsume the trace equivalence~\cite{vanGlabbeek1990}. For the must equivalence this is
true only for a set of hereditarily terminating processes. The
following definition and lemma are from~\cite{DeNicolaHennessy1984}.
\begin{definition}
A process is {\em strongly convergent} if all the descendants of the
process are terminating.
\end{definition}

\begin{figure*}[t]
\begin{center}
\begin{tabular}{|rclr|}
\hline
$O_{\epsilon}^{\mathcal{F}}$ &$\stackrel{\rm def}{=}$& $\tau.{\bf 0}$ &  \\
$O_{ab,\ell^{*}}^{\mathcal{F}}$ &$\stackrel{\rm def}{=}$& $\tau.\omega+\overline{a}b.O_{\ell^{*}}^{\mathcal{F}}$ &  \\
$O_{\overline{a}b,\ell^{*}}^{\mathcal{F}}$ &$\stackrel{\rm def}{=}$&
$\tau.\omega+a(z).([z{\ne}b]\tau.\omega\,|\,[z{=}b]\tau.O_{\ell^{*}}^{\mathcal{F}\cup\{b\}})$ &  \\
$O_{\overline{a}(b),\ell^{*}}^{\mathcal{F}}$ &$\stackrel{\rm
def}{=}$&
$\tau.\omega+a(z).([z{\in}\mathcal{F}]\tau.\omega\,|\,[z{\notin}\mathcal{F}]\tau.O_{\ell^{*}}^{\mathcal{F}\cup\{z\}})$ &  \\
\hline
\end{tabular}
\end{center}
\caption{Trace Observer. \label{Trace-Observer}}
\end{figure*}

\begin{lemma}\label{mustinmay}
$P\approx_{must}Q$ implies $P\approx_{may}Q$ if $P,Q$ are strongly
convergent.
\end{lemma}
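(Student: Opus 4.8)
The plan is to route the argument through the trace characterization of $\approx_{may}$ supplied by Lemma~\ref{2009-06-22}, and to link traces to must-testing via the trace observers $O_{\ell^*}^{\mathcal{F}}$ of Figure~\ref{Trace-Observer}. By Lemma~\ref{2009-06-22} it suffices to show that $P$ and $Q$ admit exactly the same traces, and by symmetry it is enough to prove that $P\stackrel{\ell^*}{\Longrightarrow}$ implies $Q\stackrel{\ell^*}{\Longrightarrow}$ for every $\ell^*\in\mathcal{L}^*$. Throughout I would fix $\mathcal{F}=gn(P\,|\,Q)$ and abbreviate $O=O_{\ell^*}^{\mathcal{F}}$, noting that $gn(P),gn(Q)\subseteq\mathcal{F}$.

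The heart of the matter is the following claim: \emph{for a strongly convergent $R$ with $gn(R)\subseteq\mathcal{F}$, we have $R\stackrel{\ell^*}{\Longrightarrow}$ if and only if some test of $R\,|\,O$ is not DH-successful.} Granting this, the lemma is immediate. If $P\stackrel{\ell^*}{\Longrightarrow}$ then some test of $P\,|\,O$ fails, so it is not the case that all tests of $P\,|\,O$ are DH-successful; since $P\approx_{must}Q$, it is then also not the case that all tests of $Q\,|\,O$ are DH-successful, and the claim applied to $Q$ yields $Q\stackrel{\ell^*}{\Longrightarrow}$.

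I would prove the claim by inspecting how the observer evolves, exploiting the fact that every nonterminal clause of Figure~\ref{Trace-Observer} has the shape $\tau.\omega+\pi.(\ldots)$: the action $\omega$ becomes fireable exactly when the $\tau.\omega$ summand is taken, and the guards $[z{\ne}b]$ and $[z{\in}\mathcal{F}]$ enable $\omega$ whenever $R$ emits a wrong free name, or an already known name in place of a bound one. For the forward direction, given a run $R\stackrel{\ell^*}{\Longrightarrow}$, I would build the test that at each step discards the $\tau.\omega$ summand and instead synchronizes along the trace, driving $O$ down to $O_{\epsilon}^{\mathcal{F}'}=\tau.{\bf 0}$; after that final $\tau$ the observer is ${\bf 0}$, and strong convergence of the residual terminates the remaining internal actions, producing a complete internal sequence in which $\omega$ is never immediately fireable. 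For the converse I would argue contrapositively: an unsuccessful test never performs a $\tau.\omega$ step, so $O$ advances only through synchronizations that match the labels of $\ell^*$ exactly (any mismatch would have enabled $\omega$); since every nonterminal observer keeps offering its $\tau.\omega$ branch, a test that is both complete and unsuccessful can terminate only after $O$ has reached $O_{\epsilon}^{\mathcal{F}'}$, which means $R$ supplied all of $\ell^*$, i.e.\ $R\stackrel{\ell^*}{\Longrightarrow}$.

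The main obstacle is the converse direction, and specifically the role of strong convergence. Without it there would be a third kind of unsuccessful test: an infinite internal computation that neither fires $\omega$ nor completes $\ell^*$. Strong convergence of $R$, together with the observation that the finite, replication-free observer $O$ can synchronize at most $|\ell^*|$ times, forces $R\,|\,O$ to be terminating, so every test is finite and this degenerate case is excluded. The residual work is bookkeeping: verifying that the freshness tracking carried by $\mathcal{F}$ in the $O_{\overline{a}(b),\ell^*}^{\mathcal{F}}$ clause correctly separates bound outputs from free outputs (this is where $\alpha$-convention and the choice $\mathcal{F}=gn(P\,|\,Q)$ are used), and checking that guarded continuations such as $[z{=}b]\tau.O_{\ell^*}^{\mathcal{F}\cup\{b\}}$ reduce as expected once $z$ has been instantiated, using Lemma~\ref{monoidal}.
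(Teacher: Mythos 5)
Your proposal is correct and follows essentially the same route as the paper: both arguments reduce $\approx_{may}$ to trace equality via Lemma~\ref{2009-06-22} and then use the trace observer $O_{\ell^{*}}^{\mathcal{F}}$ of Fig.~\ref{Trace-Observer} to convert a trace difference into a must-testing difference, invoking strong convergence exactly where you do, namely to rule out infinite unsuccessful tests of the process that cannot perform the trace. The paper merely phrases this contrapositively and leaves the analysis of the observer's behaviour implicit, which your claim and its two directions spell out.
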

\begin{proof}
We start by defining the {\em trace observer} generated by some
$\ell^{*}\in\mathcal{L}^{*}$ and some finite subset $\mathcal{F}$ of
$\mathcal{N}$. The structural definition is given in
Fig.~\ref{Trace-Observer}. Notice that the input choice and the
output choice are necessary to define the trace observer. Without
loss of generality assume that $P \not\approx_{may} Q$. Then there
must exist some nonempty $\ell^{*}_{1} \in \mathcal{L}^{*}$ such
that $P \stackrel{\ell^{*}_{1}}{\Longrightarrow}$ and $\neg(Q
\stackrel{\ell^{*}_{1}}{\Longrightarrow})$. Let $\mathcal{F}'$ be
$gn(P\,|\,Q)$. Obviously $P\,|\,O_{\ell^{*}_{1}}^{\mathcal{F}'}$ has
an unsuccessful test. Since $Q$ is strongly convergent, it does not
induce any infinite computation. It follows that all the tests of
$Q\,|\,O_{\ell^{*}_{1}}^{\mathcal{F}'}$ are successful.
\end{proof}

\subsubsection{Remark}\label{sec-Notions-of-Success}

The proceeding section has pinpointed the problems with the must
equivalence. Let's summarize.
\begin{enumerate}
\item [I.] Something must be wrong with the fact that Milner's weak bisimilarity
is not a sub-relation of the testing equivalence. Both the weak
bisimilarity and the testing equivalence are to blame since neither
deals with divergence properly. Moreover there is no excuse for
$\approx_{must}\not\subseteq\approx_{may}$.

\item [II.] It is a bit odd to introduce a special symbol $\omega$
to indicate success. The action $\omega$ introduces an asymmetry
between testers and testees. In De Nicola and Hennessy's approach, a
testing equivalence is always defined for a particular process
calculus. The definition may vary from one calculus to another. If
the testing equivalence is really a fundamental equivalence for
processes, there ought to be a model independent definition of the
equivalence that applies to all calculi. The model independent
characterization should be able to provide some canonicity for the
testing approach.
\end{enumerate}
There have been several attempts~\cite{Phillips1987,BrinksmaRensinkVogler1995,NatarajanCleaveland1995,BorealeDeNicolaPugliese1999,BorealeDeNicolaPugliese2001} to modify the definition of must equivalence in order to resolve issues I and II,
Let's review some of the proposals.
\begin{enumerate}
\item [I.]
Brinksma, Rensink and Volger's {\em should equivalence}~\cite{BrinksmaRensinkVogler1995}
and Natarajan and Cleaveland's {\em fair testing
equivalence}~\cite{NatarajanCleaveland1995} are two modifications proposed to address
issue I. These two variants are essentially defined over the same
success predicate. A test of $P$ by $O$ is {\em FS}-{\em successful}
if $Q\stackrel{\omega}{\Longrightarrow}$ whenever
$P\,|\,O\Longrightarrow Q$. A binary relation $\mathcal{R}$ on
processes satisfies {\em fair/should predicate} if $P\mathcal{R}Q$
implies that, for every observer $O$, all tests of $P\,|\,O$ are
FS-successful if and only if all tests of $Q\,|\,O$ are
FS-successful. Let $\approx_{FS}$ be the largest relation that
satisfies this predicate. This equivalence stays between Milner's
weak bisimilarity and the trace equivalence.

\item [II.]
Boreale, De Nicola and Pugliese address issue II
in~\cite{BorealeDeNicolaPugliese1999,BorealeDeNicolaPugliese2001}. Instead of using the special symbol
$\omega$, they let all the observable actions to indicate success.
Their version of the fair/should predicate is based on a slightly
different notion of success. A test of $P$ by $O$ is {\em BDP}-{\em
successful} at $\ell$ if $Q\stackrel{\ell}{\Longrightarrow}$
whenever $P\,|\,O\Longrightarrow Q$. Notice that there is a guarantee at
$\ell$ predicate for every $\ell\in\mathcal{L}$. Boreale, De Nicola
and Pugliese's equivalence $\approx_{BDP}$ is the largest reflexive
contextual relation satisfying all the guarantee at $\ell$
predicates. It is proved in~\cite{BorealeDeNicolaPugliese1999} that $\approx_{BDP}$
coincides with $\approx_{FS}$ for the calculus considered
in~\cite{BorealeDeNicolaPugliese1999}. The significance of their approach is that it
provides a characterization of the must equivalence without
resorting to any testing machinery.
\end{enumerate}
We shall cast more light on $\approx_{FS}$ and $\approx_{BDP}$ next.

\subsubsection{Testing Equivalence without Testing}\label{sec-Testing-Equivalence-without-Testing}

Although the study in~\cite{BorealeDeNicolaPugliese1999} is carried out for a particular
calculus with a particular notion of context, Boreale, De Nicola and
Pugliese's approach to testing equivalence is basically model
independent. In this section we shall give an even more abstract
characterization of $\approx_{BDP}$ in the style of the absolute
equality. We start with a similar abstract characterization of the
trace equivalence. The proof of the following lemma is essentially
given in~\cite{BorealeDeNicolaPugliese1999}.

\begin{lemma}\label{diamond=may}
The equivalence $\approx_{may}$ is the largest reflexive, equipollent, extensional relation.
\end{lemma}

In view of Lemma~\ref{diamond=may} we may introduce the following
definition.

\begin{definition}\label{diamond-eq}
The {\em diamond equality} $=_{\Diamond}$ is the largest reflexive,
equipollent, extensional relation.
\end{definition}

The diamond equality is about the existence of a successful testing.
A logical dual would be about the inevitability of successful
testings. For the purpose of introducing such a dual, one needs to
strengthen the equipollence condition.

\begin{definition}\label{hereditarily-equipollence}
A process $P$ is {\em strongly  observable}, notation
$P{\downdownarrows}$, if $P'{\Downarrow}$ for all $P'$ such that
$P\Longrightarrow P'$. A relation $\mathcal{R}$ is {\em strongly
equipollent} if $P\mathcal{R}Q\Rightarrow(P{\downdownarrows}\Leftrightarrow
Q{\downdownarrows})$.
\end{definition}

After Definition~\ref{hereditarily-equipollence}, the following
definition must be expected.

\begin{definition}\label{box-equality}
The {\em box equality} $=_{\Box}$ is the largest reflexive, strongly
equipollent, extensional relation.
\end{definition}

Without further ado, we summarize the properties of $=_{\Box}$ by the next theorem.
It says that the box equality improves upon the testing equivalence as expected.

\begin{theorem}\label{2008-09-24}
The following statements are valid.
\begin{enumerate}
\item The strict inclusions $=\,\subsetneq\,=_{\Box}\,\subsetneq\,=_{\Diamond}$ hold.
\item The strict inclusion $=_\Box\;\subsetneq\;\approx_{must}$ holds for the strongly convergent $\pi$-processes.
\item The coincidence of $=_{\Box}$ and $\approx_{must}$ holds for the finite $\pi$-processes.
\end{enumerate}
\end{theorem}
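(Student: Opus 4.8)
The plan is to prove the three items largely separately, using the ``largest relation'' characterizations to reduce each inclusion to a closure check, and the trace/testing observers to supply the strictness and the coincidence with $\approx_{must}$. Since $=_{\Box}$ (Definition~\ref{box-equality}) and $=_{\Diamond}$ (Definition~\ref{diamond-eq}) are the largest relations with prescribed closure properties, an inclusion $\mathcal{R}\subseteq{=_{\Box}}$ (resp.\ $\subseteq{=_{\Diamond}}$) follows once $\mathcal{R}$ is shown reflexive, extensional and strongly equipollent (resp.\ equipollent). For $=\,\subseteq\,=_{\Box}$ the only missing ingredient is that $=$ is strongly equipollent: if $P=Q$ and $Q\Longrightarrow Q'$, the bisimulation property yields $P\Longrightarrow P'=Q'$, whence equipollence of $=$ gives $P'{\Downarrow}\Leftrightarrow Q'{\Downarrow}$; ranging over all descendants $Q'$ upgrades ordinary equipollence to $P{\downdownarrows}\Leftrightarrow Q{\downdownarrows}$. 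For $=_{\Box}\,\subseteq\,=_{\Diamond}$ it is cleanest to bypass a direct equipollence proof and instead show that $=_{\Box}$ refines trace equivalence, which by Lemma~\ref{2009-06-22} and Lemma~\ref{diamond=may} is exactly $=_{\Diamond}$. To refine traces I would reuse the trace observer of Figure~\ref{Trace-Observer}, placing the whole test under a localization that hides every channel except the report, so that ``$R$ performs the trace $\ell^{*}$'' becomes ``$(\widetilde{c})(R\,|\,O_{\ell^{*}}^{\mathcal{F}})$ has an unobservable descendant''; strong equipollence then forces equal traces.

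For strictness I would exhibit small witnesses. Written with the separated choice of Section~\ref{sec-May-and-Must-Equivalence}, the pair $a.b+a.c$ and $a.(b+c)$ is trace equivalent, hence identified by $=_{\Diamond}$, but realizes different refusals after the trace $a$ and so (by item~3) is separated by $=_{\Box}$, giving $=_{\Box}\,\subsetneq\,=_{\Diamond}$. For $=\,\subsetneq\,=_{\Box}$ I would take a finite must-equivalent but non-bisimilar pair, for instance $a.(b.c+b.d)$ and $a.b.c+a.b.d$ (all choices share a channel, so the terms are legal $\pi$-terms): by item~3 they are $=_{\Box}$-equal, while being non-bisimilar they are not absolutely equal by Theorem~\ref{coincidence-theorem}.

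The heart of the proof is the comparison with $\approx_{must}$. Throughout I would work with the hiding environment $E[\,\cdot\,]=(\widetilde{c})(\,\cdot\,|\,O)$ that restricts every name except the success action, so that the complete internal computations of $R\,|\,O$ coincide with the maximal $\tau$-paths of $E[R]$ and the only visible action of $E[R]$ is the report. The clean half is that an \emph{unsuccessful} test of $R\,|\,O$ ends, by strong convergence, at a $\tau$-deadlock that cannot fire the report, i.e.\ at an unobservable descendant, so ``$R\,|\,O$ has an unsuccessful test'' implies $E[R]{\not\downdownarrows}$. For $=_{\Box}\,\subseteq\,\approx_{must}$ on strongly convergent processes I would argue contrapositively: a must-difference gives a difference of refusals (a trace $s$ after which one process can deadlock against a finite offer set while the other cannot), which I encode as a \emph{committing} observer whose report is reached only through a dedicated $\tau.\omega$ move; for such observers ``all tests succeed'' coincides exactly with $E[R]{\downdownarrows}$, and extensionality plus strong equipollence of $=_{\Box}$ then transport the resulting strong-observability difference. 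For item~3 I would prove the converse $\approx_{must}\subseteq{=_{\Box}}$ on finite processes by checking that $\approx_{must}$ is reflexive, extensional (a routine congruence argument for $|$ and $(c)$) and strongly equipollent; strong equipollence uses a universal observer that synchronizes with any visible action on the finitely many global names and, by dynamically extending the set of listened-to names, with actions on names released by bound outputs, exactly as the set $\mathcal{F}$ is threaded in Figure~\ref{Trace-Observer}. Finiteness (hence strong convergence) guarantees that a non-strongly-observable $Q$ reaches a genuine deadlock that this observer turns into a failing test, while $P{\downdownarrows}$ makes every test of $P\,|\,O$ succeed; combining this with item~2 specialized to finite processes yields the coincidence.

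The main obstacle is precisely the bridge between DH-success and strong observability. For an arbitrary observer the two notions do \emph{not} match: an observer offering the report immediately in competition with an escaping $\tau$ can make every test succeed while $E[R]{\not\downdownarrows}$. This mismatch is exactly what makes item~2 a \emph{strict} inclusion, and it is what forces the detour through committing/refusal observers in the inclusion proof and through finiteness in item~3; a strictness witness for item~2 is any strongly convergent pair that agrees on refusals but differs in whether some otherwise-observable branch can silently reach a deadlock. I expect the two delicate points to be the correct threading of extruded (bound-output) names in the universal and trace observers, and the precise refusal extraction underlying the inclusion in item~2.
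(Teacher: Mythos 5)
Your architecture is sound and in places genuinely different from the paper's. For the hard direction of item~3 you verify that $\approx_{must}$ on finite processes is extensional and strongly equipollent (via a universal one-shot observer) and then invoke the maximality in Definition~\ref{box-equality}; the paper instead argues contrapositively, taking a context $(\widetilde{c})(\_\,|\,O)$ that separates $P,Q$ with respect to ${\downdownarrows}$ and surgically decorating $O$ with $\tau.\omega$ summands away from the critical path to obtain a must-separating observer. Your route is conceptually cleaner but needs the congruence of $\approx_{must}$ under composition and localization, which the paper's argument does not. For item~1 your witnesses work, except that $a.(b+c)$ places a separated choice inside a process term, outside the official $\pi$ syntax; the paper sidesteps this with $a.(!b\,|\,!c)$ versus $a.b.(!b\,|\,!c)+a.c.(!b\,|\,!c)$, offering $b$ and $c$ by composition rather than by choice.

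The genuine gap is the strictness of item~2, and it stems from a misdiagnosis of where DH-success and strong observability part company. The mismatch you name --- an observer offering the report in competition with an escaping $\tau$ --- does not arise here: an observer is obtained by replacing occurrences of ${\bf 0}$ by $\omega$, so once $\omega$ becomes immediately fireable it sits as a parallel component and persists; this is why the paper's renaming of $\omega$ to a fresh label in the finite case is unproblematic. The real mismatch is fairness on infinite runs: a single infinite computation that forever avoids every $\omega$-state is a failed test, whereas ${\downdownarrows}$ only demands that every reachable state can still reach an observable action. Consequently your witness template --- a strongly convergent pair that ``agrees on refusals but differs in whether some branch can silently reach a deadlock'' --- is contradictory: for strongly convergent processes a silently reachable deadlock contributes the empty acceptance set after the corresponding trace, so two such processes cannot agree on refusals; and by item~3 no finite witness exists at all. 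The paper's witness, the Bergstra--Klop--Olderog pair $a.\mu X.(a.X+b)+a.\mu X.(a.X+c)$ and $a.\mu X.(a.(a.X+c)+b)+a.\mu X.(a.(a.X+b)+c)$, is $\tau$-free and deadlock-free and is separated by $=_{\Box}$ through a context with which one process, but not the other, can run forever while success remains perpetually one step away. Relatedly, your claim that an unsuccessful test of a strongly convergent process ``ends at a $\tau$-deadlock'' is false --- the interaction of two strongly convergent processes can diverge --- so the inclusion proof of item~2 also needs the paper's separate treatment of infinite failing tests via trace equality and finite branching.
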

\begin{proof}
(1) The inclusion $=\,\subseteq\,=_{\Box}$ is valid by definition.
The strictness of the inclusion is witnessed by the pair $a.b+a.c$, $a.(\tau.b+\tau.c)$.
The proof of Lemma~\ref{mustinmay} can be reiterated to show $=_{\Box}\,\subsetneq\,=_{\Diamond}$.
The strictness is witnessed by the process pair $a.(!b\,|\,!c)$ and $a.b.(!b\,|\,!c)+a.c.(!b\,|\,!c)$.

(2) Suppose  $P \approx_{\Box}Q$ and $P \not\approx_{must}Q$ for strongly convergent processes $P,Q$.
Assume without loss of generality that some observer $O$ exists such that $P\,|\,O$ has an unsuccessful test while all the tests of $Q\,|\,O$ are successful.
There are two cases:
\begin{itemize}
\item
Suppose the failure test of $P\,|\,O$ is the infinite tau action sequence
\begin{equation}\label{2010-04-25}
P\,|\,O \stackrel{\tau}{\longrightarrow}
(\widetilde{c_{1}})(P_{1}\,|\,O_{1}) \ldots
\stackrel{\tau}{\longrightarrow}
(\widetilde{c_{i}})(P_{i}\,|\,O_{i}) \ldots.
\end{equation}
If $O$ performs an infinite {\em consecutive} tau action sequence in (\ref{2010-04-25}), then it must be the case that $P$ has performed only a finite number of interactions with $O$. According to (1), $Q$ must be able to perform the same finite sequence of external actions.
But then $Q\,|\,O$ has an infinite unsuccessful test of the shape (\ref{2010-04-25}), which contradicts the assumption.
If $O$ does not perform any infinite {\em consecutive} tau action sequence in (\ref{2010-04-25}), then the strong convergence property guarantees that $P$ must perform an infinite number of external actions in (\ref{2010-04-25}).
It follows from the assumption $P \not\approx_{must}Q$ that $Q$ cannot perform the same infinite number of external actions.
It follows from the finite branching property that $P\not=_{\diamond}Q$ must hold, contradicting the other assumption.

\item
Suppose the failure test of $P\,|\,O$ is a finite internal action sequence.
Let $\lambda$ be a fresh label and $\widetilde{c}=gn(P\,|\,Q\,|\,O)$.
It is clear that $(\widetilde{c})(Q\,|\,O\{\lambda/\omega\})\downdownarrows$ but not $(\widetilde{c})(P\,|\,O\{\lambda/\omega\})\downdownarrows$.
This is again a contradiction.
\end{itemize}
We conclude that $=_{\Box}\;\subseteq\;\approx_{must}$ holds for the strongly convergent $\pi$-processes.
The strictness of this inclusion is interesting.
Rensink and Vogler~\cite{RensinkVogler2007} point out that
\begin{equation}\label{2010-06-15-Zhang}
 a.\mu X.(a.X+b)+a.\mu X.(a.X+c)
\end{equation}
and
\begin{equation}\label{2010-06-15-Ying}
 a.\mu X.(a.(a.X+c)+b)+a.\mu X.(a.(a.X+b)+c),
\end{equation}
originally from~\cite{BergstraKlopOlderog1987}, are testing equivalent. They are not fair/should equivalent.
In the terminology of this paper (\ref{2010-06-15-Zhang}) and (\ref{2010-06-15-Ying}) are not box equal.

(3) We now prove that $P\approx_{must}Q$ implies $P=_{\Box}Q$ for all finite $\pi$-processes $P,Q$.
Suppose $P\approx_{must}Q$ and $P\not=_\Box Q$ for finite $P$ and $Q$.
Take $\mathcal{F}=gn(P\,|\,Q)$. Without loss of generality, assume that there were $\widetilde{c}$ and $O$ such that
$(\widetilde{c})(P \,|\,O){\downdownarrows}$ holds but $(\widetilde{c})(Q \,|\,O){\downdownarrows}$ is not valid.
Moreover since $P,Q$ are finite, we may assume that $O$ is finite. We may even assume that $O$ does not contain the composition operator.
Let $(\widetilde{c}')(Q'\,|\,O')$ be a descendant of $(\widetilde{c})(Q \,|\,O)$ such that the followings hold:
\begin{itemize}
\item
$(\widetilde{c})(Q\,|\,O)\Longrightarrow(\widetilde{c}')(Q'\,|\,O')\not\Downarrow$ and $(\widetilde{c}')(Q'\,|\,O')$ may not perform any $\tau$-actions.

\item
There is a sequence $\ell^{*}$ such that $Q\stackrel{\ell^{*}}{\Longrightarrow}\stackrel{\lambda}{\longrightarrow}\Longrightarrow Q'$ and $O\stackrel{\overline{\ell^{*}}}{\Longrightarrow}\stackrel{\overline{\lambda}}{\longrightarrow}O_{1}\Longrightarrow O'$.
\end{itemize}
Now construct a new process $\llbracket O\rrbracket$ from $O$.
The effect of the function $\llbracket\_\rrbracket$ on the guarded choice terms can be described as follows:
\begin{eqnarray*}
\llbracket a(x).S+a(x).T\rrbracket &\stackrel{\rm def}{=}& \tau.\omega+a(x).\llbracket S\rrbracket+a(x).\llbracket T\rrbracket, \\
\llbracket \overline{a}m.S+\overline{a}n.T\rrbracket &\stackrel{\rm def}{=}&
\tau.\omega+\overline{a}m.\llbracket S\rrbracket+\overline{a}m.\llbracket T\rrbracket.
\end{eqnarray*}
The function $\llbracket\_\rrbracket$ is structural on the other operators.
The process $\llbracket O\rrbracket$ is further modified as follows:
An observer $O^{\omega}_{\ell^{*}\lambda}$ is obtained from $\llbracket O\rrbracket$ by removing all the components $\tau.\omega$ from the choice operations of $\llbracket O\rrbracket$ that lie in the path, as it were, beginning at $O_{1}$ and ending at $O'$.
It is clear that all tests of $P\,|\,O^{\omega}_{\ell^{*}\lambda}$ are successful, whereas there is at least one unsuccessful test of $Q\,|\,O^{\omega}_{\ell^{*}\lambda}$. This contradicts to the assumption $P\approx_{must}Q$.
\end{proof}

The counter example given by the processes (\ref{2010-06-15-Zhang}) and (\ref{2010-06-15-Ying}) point out that the box equality is fairer than the testing equivalence.
If we assume that the nondeterminism of the choice operator is implemented in a fair or probabilistic manner,
a test of the former by $\mu X.(\overline{a}.X+\overline{b}.\omega)$ may or may not succeed.
But all tests of (\ref{2010-06-15-Ying}) by the tester are successful.
In the opinion of Rensink and Vogler~\cite{RensinkVogler2007}, the fairness assumption is built in the box equality.

So far all the results in testing theory are model dependent. For
example the coincidence between $\approx_{FS}$ and $\approx_{BDP}$
cannot be established in a model independent way. This is because to
make sense of the fair/should testing one has to incorporate the
special symbol $\omega$ into a calculus, which is impossible without
knowing the details of the model. Similarly the coincidence of
$\approx_{FS}$, $\approx_{BDP}$, $\approx_{must}$ on the strongly
convergent processes is also model specific. For a particular model
like the $\pi$-calculus one may show that $\approx_{BDP}$ is the largest
reflexive, strongly  equipollent, extensional relation. But there is
no way to prove the validity of this characterization for all
models. Similarly the coincidence between $\approx_{BDP}$ and
$=_{\Box}$ can only be proved for individual models. The same remark
applies to the trace equivalence. There are labeled transition
systems for which one might not like to talk about traces at all. An
example is a labeled transition semantics for a higher order
calculus that defines transitions from processes to abstractions or
concretions.

It is in the light of the above discussion that the significance of
the diamond equality and the box equality emerge. The former provides {\em
the} universal definition of the trace equivalence, whereas the
latter offers {\em the} universal definition of the testing equivalence.
These definitions apply to all models of interaction. For a
particular model it is possible to give an explicit counterpart of
$=_{\Box}$ or $=_{\Diamond}$. The explicit versions of
$=_{\Diamond}$ and $=_{\Box}$ may depend heavily on the details of
the model. The relationship between $=_{\Box}$ and its explicit
characterization is like that between the absolute equality and the
external bisimilarity.

The outline for a model independent testing theory is now complete.

\subsection{Remark}

The model independent methodology can be applied to analyze dozens
of process equivalences defined in literature. In particular it can
be applied to evaluate the equivalence relations for example in the
linear time branching time spectrum~\cite{vanGlabbeek1990}. We have
confirmed in this paper that the diamond equality is the right
generalization of the trace equivalence, the bottom of the spectrum.
On the top of the spectrum, the absolute equality and the weak equality enjoy the model independent characterizations.
It would be interesting to take a look at the other equivalences of the spectrum from this angle.
Some of the equivalences are studied in the strong case. Their generalizations to the weak case might not be closed under composition.
The model independent approach may help us in searching for the correct definitions.
Good process equivalences are inedpendent of any particular model.

It must be pointed out that it is often difficult to come up with an external characterization of an equality defined in a model independent manner.
Let's take a look at an interesting example.
An explicit characterization of the absolute equality for $\pi^{A}$-calculus is tricky.
It is definitely different from the popular equivalence widely accepted for $\pi^{A}$.
The asynchronous equivalence studied by Honda and Tokoro~\cite{HondaTokoro1991ObjCal}, Boudol~\cite{Boudol1992} and Amadio, Castellani and Sangiorgi~\cite{AmadioCastellaniSangiorgi1996},
satisfies the following equality, where $\simeq_{a}$ stands for the asynchronous bisimilarity.
\begin{equation}\label{eq4asynpi-1}
a(x).\overline{a}x \simeq_{a} {\bf 0}.
\end{equation}
The equality (\ref{eq4asynpi-1}) does not hold for the absolute
equality since it violates the equipollence condition. Another
equality validated by the asynchronous equivalence is
(\ref{eq4asynpi-2}).
\begin{equation}\label{eq4asynpi-2}
!a(x).\overline{a}x \simeq_{a} a(x).\overline{a}x
\end{equation}
Equality (\ref{eq4asynpi-2}) is rejected by the absolute equality
because $\overline{a}c\,|\,!a(x).\overline{a}x$ is divergent but
$\overline{a}c\,|\,a(x).\overline{a}x$ is not. However the following
absolute equality holds.
\begin{equation}\label{eq4asynpi-3}
a(x).\overline{a}x\,|\,a(x).\overline{a}x =_{\pi^{A}}
a(x).\overline{a}x
\end{equation}
How should we reconcile the difference between the absolute equality and the asynchronous equivalence?
A new approach to the asynchronous calculus that gives a satisfactory answer to the question is outlined in~\cite{FuYuxi-Theory-by-Process}.
Asynchrony is more of an application issue than a model theoretical issue.

An important topic that is not discussed in the present paper is the pre-order relations on processes.
A pre-order can be interpreted as an implementation relation or a refinement relation.
Examples of pre-orders are simulation and testing order.
Notice that the box equality immediately suggests the box order $\leq_{\Box}$ relation, which is the model independent counterpart of the testing order.
So far the order theory of processes has been basically model specific.
It should be fruitful to carry out a systematic model independent study on the order relations on the processes.
Initial efforts have been made in~\cite{He2010} along this line of investigation.

We conclude this section by remarking that the branching style
bisimulation property is what makes the coincidence proofs
difficult. External characterizations of the weak equalities of the
$\pi$-variants are much easier to come by.

\section{Expressiveness}\label{sec-Expressiveness}

How do different variants of the $\pi$-calculus compare? One fundamental criterion for comparison is relative expressiveness.
Typical quesitons about expressiveness can be addressed in the following fashion:
Is a calculus $\mathbb{M}$ as powerful as another calculus $\mathbb{L}$?
Or are $\mathbb{L}$ and $\mathbb{M}$ non-comparable in terms of expressiveness?
To answer these questions, we need a notion of relative expressiveness that does not refer to any particular model.
The philosophy developed in~\cite{FuYuxi} is that the relative
expressiveness is the same thing as the process equality. The latter
is a relation on one calculus, whereas the former is a relation from
one calculus to another. To say that $\mathbb{M}$ is at least
as expressive as $\mathbb{L}$ means that for every process $L$ in
$\mathbb{L}$ there is some process $M$ in $\mathbb{M}$ such that $M$
is somehow equal to $L$. It is clear from this statement that
relative expressiveness must break the symmetry of the absolute
equality. Condition (2) of Definition~\ref{absolute-equality} can be
safely kept when comparing two process calculi. But condition (1)
has to be modified. The reader is advised to consult~\cite{FuYuxi}
for the argument why the reflexivity condition for the absolute
equality turns into a totality condition and a soundness condition. The
following definition is taken from~\cite{FuYuxi}.

\begin{definition}\label{subbisimilarity}
Suppose $\mathbb{L},\mathbb{M}$ are two process calculi. A binary
relation $\Re$ from the set $\mathcal{P}_{\mathbb{L}}$ of
$\mathbb{L}$-processes to the set $\mathcal{P}_{\mathbb{M}}$ of
$\mathbb{M}$-processes is a {\em subbisimilarity} if the following
statements are valid.
\begin{enumerate}
\item $\Re$ is {\em reflexive from} $\mathbb{L}$ {\em to} $\mathbb{M}$ in the sense that the following properties hold.
\begin{enumerate}
\item $\Re$ is {\em total}, meaning that
$\forall L\in\mathcal{P}_{\mathbb{L}}.\exists
M\in\mathcal{P}_{\mathbb{M}}.L\Re M$.

\item $\Re$ is {\em sound}, meaning that
$M_{1}\Re^{-1}L_{1}=_{\mathbb{L}}L_{2}\Re M_{2}$ implies
$M_{1}=_{\mathbb{M}}M_{2}$.
\end{enumerate}
\item $\Re$ is equipollent, extensional, codivergent and
bisimilar.
\end{enumerate}
We say that $\mathbb{L}$ is {\em subbisimilar to} $\mathbb{M}$,
notation $\mathbb{L}\sqsubseteq\mathbb{M}$, if there is a
subbisimilarity from $\mathbb{L}$ to $\mathbb{M}$.
\end{definition}

An elementary requirement for the relative expressiveness relationship is transitivity.
The subbisimilarity relationship is well defined in this aspect since its transitivity is apparent from the definition.
So we have formalized the intuitive notion ``$\mathbb{M}$ being at
least as expressive as $\mathbb{L}$'' by
``$\mathbb{L}\sqsubseteq\mathbb{M}$''. We write
$\mathbb{L}\sqsubset\mathbb{M}$ if $\mathbb{L}\sqsubseteq\mathbb{M}$
and $\mathbb{M}\not\sqsubseteq\mathbb{L}$, meaning that $\mathbb{M}$
is strictly more expressive than $\mathbb{L}$. The reader might
wonder why the soundness condition of
Definition~\ref{subbisimilarity} is not strengthened to the full
abstraction. The truth is that the soundness condition is equivalent
to the full abstraction condition as far as
Definition~\ref{subbisimilarity} is concerned.

It is shown in~\cite{FuYuxi} that in general there are an infinite number of pairwise incompatible subbisimilarities from $\mathbb{L}$ to $\mathbb{M}$.
However for the $\pi$-variants studied in~\cite{FuLu2010} there is essentially a unique subbisimilarity between any two of them.
It is shown in~\cite{FuYuxi} that a subbisimilarity for these variants is basically a total relation $\Re$ satisfying the following external characterization property:
\begin{enumerate}
\item  If $Q\Re^{-1}P \stackrel{\ell}{\longrightarrow}P'$ then
$Q\Longrightarrow Q''\stackrel{\ell}{\longrightarrow}Q'\Re^{-1}P'$
and $P\Re Q''$ for some $Q',Q''$.
\item  If $P\Re Q\stackrel{\ell}{\longrightarrow}Q'$ then $P\Longrightarrow
P''\stackrel{\ell}{\longrightarrow}P'\Re Q'$ and $P''\Re Q$ for some
$P',P''$.
\end{enumerate}

Now consider the self translation $\llbracket\_\rrbracket_{\bowtie}$ from the $\pi$-calculus to itself.
The nontrivial part of the translation is defined as follows:
\begin{eqnarray*}
\llbracket a(x).T\rrbracket_{\bowtie} &\stackrel{\rm def}{=}&
\overline{a}(c).c(x).\llbracket T\rrbracket_{\bowtie}, \\
\llbracket \overline{a}b.T\rrbracket_{\bowtie} &\stackrel{\rm
def}{=}& a(y).(\overline{y}b\,|\,\llbracket T\rrbracket_{\bowtie}).
\end{eqnarray*}
It is structural on the non-prefix terms. Is
$\llbracket\_\rrbracket_{\bowtie};=$ a subbisimilarity? The negative
answer is proved in~\cite{FuYuxi}. This is an example of a typical
phenomenon. Without the soundness condition, many liberal encodings
like the above one would be admitted. In most cases soundness is the
only thing to prove when comparing a syntactical subcalculus against a super calculus.

If the external characterizations of the absolute equality in $\pi_{1}$ and $\pi_{2}$ are available, then it is often easy to see if $\pi_{1}\sqsubseteq\pi_{2}$.
Otherwise it might turn out difficult to prove $\pi_{1}\sqsubseteq\pi_{2}$.
For example so far we have not been able to confirm that $\pi^{M}\sqsubseteq\pi^{-}\sqsubseteq\pi$, although the following negative results are easy to derive.

\begin{proposition}\label{2010-01-02}
$\pi\not\sqsubseteq\pi^{-}\not\sqsubseteq\pi^{M}$.
\end{proposition}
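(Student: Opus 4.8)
The plan is to refute each non-inclusion by assuming a subbisimilarity and exhibiting a source process whose behaviour no process of the smaller calculus can reproduce. First I would record the reduction that makes this work. For the variants in question a subbisimilarity $\Re$ is, by the external characterization recalled just above, a total relation from $\mathbb{L}$-processes to $\mathbb{M}$-processes validating the two branching clauses together with codivergence; these are exactly the clauses of Definition~\ref{external-bi}. Since every $\mathbb{M}$-process is also an $\mathbb{L}$-process, $\Re$ viewed inside $\mathbb{L}$ is an $\mathbb{L}$-bisimulation, so $\Re\subseteq\simeq_{\mathbb{L}}$ and hence $\Re\subseteq{=_{\mathbb{L}}}$ by Lemma~\ref{linb}. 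Thus if $\mathbb{L}\sqsubseteq\mathbb{M}$ then, by totality, every $\mathbb{L}$-process $P$ satisfies $P=_{\mathbb{L}}M$ for some $M\in\mathbb{M}$. It therefore suffices, for each non-inclusion, to produce one source process that is $\simeq$-related to no process of the smaller calculus.

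For $\pi\not\sqsubseteq\pi^{-}$ I would take the input choice
\[P\;\stackrel{\rm def}{=}\;a.\overline{d}d+a.\overline{e}e,\]
with $a,d,e$ pairwise distinct. Its only transitions are the two $a$-inputs $P\stackrel{ac}{\longrightarrow}\overline{d}d$ and $P\stackrel{ac}{\longrightarrow}\overline{e}e$; the two successors are inequivalent and each incapable of a further $a$-input. Assuming $M\in\pi^{-}$ with $M\simeq_{\pi}P$, the branching clauses (cf. Lemma~\ref{inert}) keep every intermediate $\tau$-state equal to $P$, and matching both branches forces some $\tau$-reachable state $N\simeq_{\pi}P$ with $N\stackrel{ac}{\longrightarrow}N'\simeq_{\pi}\overline{d}d$. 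The key structural observation is that in the choice-free calculus $\pi^{-}$ an observable input is fired from a single top-level input thread, so if $N$ carried two competing $a$-threads the residual $N'$ would retain an $a$-capability, contradicting $N'\simeq_{\pi}\overline{d}d\not\Downarrow_{ac'}$. Hence each state commits to a unique successor class, and reassembling both classes $\{\overline{d}d\}$ and $\{\overline{e}e\}$ out of the $\tau$-reachable states of a choice-free process is precisely what the independence of the choice operator forbids; I would finish this step by invoking the separation results of~\cite{Palamidessi2003,FuLu2010}, noting that codivergence rules out the usual divergent or coupled encodings.

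For $\pi^{-}\not\sqsubseteq\pi^{M}$ the separating feature is the mismatch, i.e. the ability to react observably to the \emph{difference} of two names. I would use
\[P\;\stackrel{\rm def}{=}\;a(x).a(y).[x{\ne}y]\overline{e}e,\]
whose global names are only $a,e$, so that no constant is hard-wired into the test. By Lemma~\ref{monoidal}, $P$ is observable exactly when the two received names differ and unobservable when they coincide. Assuming $M\in\pi^{M}$ with $M\simeq_{\pi^{-}}P$, the same single-thread analysis shows that after receiving a fresh $c$ the continuation is determined, and its behaviour on the second received name is governed by one input thread whose body is applied uniformly, by Lemma~\ref{2008-09-25}, to whatever name arrives. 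The point is that $\pi^{M}$ has no construct that tests two names: its only name comparison is \emph{positive}, realised when a received name used as a channel happens to synchronise, and such a synchronisation can confirm coincidence but cannot produce an observable in the \emph{absence} of coincidence without leaking observability on some computation path.

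I expect this last step to be the main obstacle. Confirming coincidence is cheap in $\pi^{M}$, but the mismatch demands an action enabled by \emph{non}-coincidence, and turning ``no synchronisation occurred'' into an honest observable, in a way that survives the branching and codivergence requirements, is exactly what $\pi^{M}$ cannot do. Making this precise amounts to proving that every $\pi^{M}$ attempt to gate $\overline{e}e$ on $x\ne y$ either stays observable when $x=y$ or loses observability when $x\ne y$. I would isolate this as a lemma on the impossibility of negative name-tests in $\pi^{M}$, established by a renaming/uniformity argument on the single responsible input thread, and then read off the contradiction with the observability profile of $P$.
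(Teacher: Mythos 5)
Your proposal takes essentially the same route as the paper: the paper's proof is a two-line appeal to ``the same idea from~\cite{FuYuxi}'', exhibiting the input choice $a(x).\overline{b}b+a(x).\overline{c}c$ as undefinable in $\pi^{-}$ and the match process $a(x).[x{=}c]\overline{b}b$ as undefinable in $\pi^{M}$, which is exactly your strategy of producing one source process that no process of the smaller calculus can be subbisimilar to, with your first witness identical up to sugaring. The only divergence is your second witness $a(x).a(y).[x{\ne}y]\overline{e}e$, which tests mismatch between two received names rather than match against a hard-wired constant; since after the first input the received name becomes a global name of the residual, your deferred ``impossibility of negative name-tests'' lemma reduces to the same comparison of an old-name instantiation $U\{c_{1}/y\}$ with $c_{1}\in gn(U)$ against a fresh-name instantiation $U\{c_{2}/y\}$ that the paper's match witness requires (so the injective-renaming argument of Lemma~\ref{2008-09-25} alone cannot close it), and both your proof and the paper's leave that core non-definability fact to the cited routine argument.
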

\begin{proof}
Using the same idea from~\cite{FuYuxi} it is routine to show that
the $\pi$-process $a(x).\overline{b}b+a(x).\overline{c}c$ cannot be
defined in $\pi^{-}$ and the $\pi^{-}$-process
$a(x).[x{=}c]\overline{b}b$ cannot be defined in $\pi^{M}$.
\end{proof}

\begin{figure}[t]
\begin{center}
\begin{tabular}{|c||c|c|c|c|c|}
\hline
 $\sqsubseteq$ & $\pi$ & $\pi^{L}$ & $\pi^{R}$ & $\pi^{S}$  \\
\hline\hline
$\pi$ & $\surd$ & $\times$ & $\times$ & $?$ \\
\hline
$\pi^{L}$ & $\times$ & $\surd$ & $\times$ & $?$ \\
\hline
$\pi^{R}$ & $\times$ & $\times$ & $\surd$ & $?$ \\
\hline
$\pi^{S}$ & $?$ & $?$ & $?$ & $\surd$ \\
\hline
\end{tabular}
\end{center}
\caption{Expressiveness Relationship
\label{Expressiveness-Relationship}}
\end{figure}

The relative expressiveness between $\pi,\pi^{L},\pi^{R}$ is settled in~\cite{XueLongFu2011}.
Fig.~\ref{Expressiveness-Relationship} shows that they are pairwise incompatible.
The relative expressiveness of $\pi^{S}$ is still unknown.

Before ending this section, let's see a positive result. Consider the boolean expressions constructed from the binary relation $=$ and the logical operators $\wedge,\neg$. For example $\neg(\neg(x=a)\wedge\neg(x=b))$ is such an expression, which is normally abbreviated to $x=a\vee x=b$. Let $\pi^{\vee}$ denote the $\pi$-calculus with the match and mismatch operators replaced by the operator $[\varphi](\_)$, where $\varphi$ is a boolean expression.

\begin{proposition}\label{2010-03-01}
$\pi\sqsubseteq\pi^{\vee}\sqsubseteq\pi$.
\end{proposition}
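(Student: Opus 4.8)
The plan is to exhibit two subbisimilarities, one witnessing $\pi\sqsubseteq\pi^{\vee}$ and one witnessing $\pi^{\vee}\sqsubseteq\pi$, and to reduce the only substantial obligation of Definition~\ref{subbisimilarity}, namely soundness, to a single fact: the absolute equalities $=_{\pi}$ and $=_{\pi^{\vee}}$ coincide on $\pi$-processes. For $\pi\sqsubseteq\pi^{\vee}$ I would use the evident inclusion $\iota$ that reads a match $[p{=}q]$ as the boolean guard $[p{=}q]$ and a mismatch $[p{\ne}q]$ as $[\neg(p{=}q)]$. For $\pi^{\vee}\sqsubseteq\pi$ I would use the structural translation $\llbracket\_\rrbracket$ whose only nontrivial clause replaces a boolean guard by a disjoint conditional choice,
\[\llbracket[\varphi]T\rrbracket \stackrel{\rm def}{=} \sum_{i\in J}\varphi_{i}\llbracket T\rrbracket,\]
where $\{\varphi_{i}\}_{i\in I}$ is a complete disjoint partition of $n(\varphi)\cup v(\varphi)$ (Definition~\ref{2009-02-01}) and $J=\{i\mid\varphi_{i}\Rightarrow\varphi\}$; each $\varphi_{i}$ is a conjunction of matches and mismatches, so the right-hand side is a genuine $\pi$-term, and the sum is a legitimate conditional choice since the $\varphi_{i}$ are disjoint.

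The technical heart of the second direction is the \emph{Key Lemma}: in the combined calculus, $P\sim\llbracket P\rrbracket$ for every $\pi^{\vee}$-process $P$. Since $\varphi_{i}$ is complete on $n(\varphi)\cup v(\varphi)$, it decides every atomic equality occurring in $\varphi$ and hence the whole boolean expression, so Lemma~\ref{2009-11-13} yields either $\varphi_{i}\Rightarrow\varphi$ or $\varphi_{i}\varphi\Rightarrow\bot$; a case analysis in the style of Lemma~\ref{2009-05-27} then gives $[\varphi]S\sim\sum_{i\in J}\varphi_{i}S$, with the input case and the commutation of the translation with runtime substitutions handled by Lemma~\ref{2008-09-25}. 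Granting the Key Lemma, the relations $\Re_{1}=\{(L,\iota(L))\}$ and $\Re_{2}=\{(P,\llbracket P\rrbracket)\}$ are total and structural, $\Re_{2}\subseteq\,\sim$ and $\iota(L)$ shares the labeled transitions of $L$, so both are equipollent, extensional, codivergent and bisimilar. The only remaining obligation is soundness, which for $\Re_{2}$ reduces, via $\sim\,\subseteq\,\simeq_{\pi^{\vee}}\,\subseteq\,=_{\pi^{\vee}}$ (Lemma~\ref{linb}), to the inclusion $=_{\pi^{\vee}}\!\restriction_{\pi}\,\subseteq\,=_{\pi}$, and for $\Re_{1}$ is exactly the reverse inclusion $=_{\pi}\,\subseteq\,=_{\pi^{\vee}}$.

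The easy half, $=_{\pi^{\vee}}\!\restriction_{\pi}\,\subseteq\,=_{\pi}$, follows by maximality of $=_{\pi}$: every $\pi$-environment is a $\pi^{\vee}$-environment and $\pi$-processes are closed under composition and localization within $\pi$, so the restriction of $=_{\pi^{\vee}}$ to $\pi$-processes is a reflexive, equipollent, extensional, codivergent bisimulation of $\pi$ and thus lies inside $=_{\pi}$. The main obstacle is the reverse inclusion $=_{\pi}\,\subseteq\,=_{\pi^{\vee}}$: the richer boolean-guard environments of $\pi^{\vee}$ must not separate $\pi$-processes that are absolutely equal in $\pi$. I would establish this by showing that the composite $\mathcal{D}={\sim};{=_{\pi}};{\sim}$, taken on $\pi^{\vee}$-processes, is itself a reflexive, equipollent, extensional, codivergent bisimulation of $\pi^{\vee}$, whence $\mathcal{D}\,\subseteq\,=_{\pi^{\vee}}$ by Definition~\ref{absolute-equality}. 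Reflexivity holds because the Key Lemma gives $P\sim\llbracket P\rrbracket=_{\pi}\llbracket P\rrbracket\sim P$ for every $\pi^{\vee}$-process $P$; equipollence, codivergence and bisimilarity follow by composing the corresponding properties of $\sim$ (a strong congruence, Lemmas~\ref{algebraic} and~\ref{monoidal}) with those of $=_{\pi}$; and extensionality follows because a $\pi^{\vee}$-context can be pushed through $\sim$ onto the $\pi$-processes sitting in the middle of $\mathcal{D}$, where extensionality of $=_{\pi}$ applies. Evaluating $\mathcal{D}$ at $\pi$-processes $P=_{\pi}Q$ then gives $P=_{\pi^{\vee}}Q$, which completes the coincidence and hence both subbisimilarities. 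I expect the delicate step to be the verification that $\mathcal{D}$ is a bisimulation and is extensional, since there the branching bisimulation property of $=_{\pi}$ must be threaded through the strong bisimilarity $\sim$ inside an arbitrary $\pi^{\vee}$-context.
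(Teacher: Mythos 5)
Your proposal is correct in substance and rests on the same encoding as the paper: the paper likewise eliminates a boolean guard $[\varphi]$ by refining the disjunctive normal form of $\varphi$ to a complete disjoint partition $\{\psi_{j}\}_{j\in J}$ and replacing $[\varphi]T$ by $\prod_{j\in J}[\psi_{j}]T$, which is exactly your disjoint conditional choice under definition~(\ref{choice-1}). Where you genuinely diverge is in how soundness is discharged. The paper's proof is four sentences long: it asserts that the external bisimilarity and the absolute equality coincide for $\pi^{\vee}$ (an extension of Theorem~\ref{coincidence-theorem}), from which $\pi\sqsubseteq\pi^{\vee}$ is immediate because an external bisimulation for $\pi$ is verbatim one for $\pi^{\vee}$, and the converse follows once the encoding is in place. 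You avoid re-proving that coincidence for $\pi^{\vee}$ and instead verify directly that the composite $\mathcal{D}=\,\sim;=_{\pi};\sim$ is a reflexive, equipollent, extensional, codivergent bisimulation of $\pi^{\vee}$. This is more work, but it is more self-contained and makes explicit the transfer $=_{\pi}\,\subseteq\,=_{\pi^{\vee}}$ that the paper leaves implicit; the price is that you must lean on the congruence properties of $\sim$ in the extended calculus and on your Key Lemma, both of which are routine.

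One small repair is needed. The raw graph $\Re_{2}=\{(P,\llbracket P\rrbracket)\}$ is not itself a bisimulation: a derivative of $\llbracket P\rrbracket$ (for instance the residual of an input under a translated guard, which drags along dead guarded copies in parallel) is only $\sim$-equivalent to, not syntactically identical with, the translation of the corresponding derivative of $P$, and being a subrelation of $\sim$ does not confer the bisimulation property on a relation. Take $\Re_{2}=\llbracket\_\rrbracket;=_{\pi}$ (or $\llbracket\_\rrbracket;\sim$) instead, exactly as the paper does for $\llbracket\_\rrbracket_{\bowtie};=$ in Section~\ref{sec-Expressiveness}; since your soundness argument already routes through $\sim$ and $=_{\pi}$, nothing else in the proposal changes. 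The identity embedding $\Re_{1}$ does not suffer from this problem.
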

\begin{proof}
The external bisimilarity and the absolute equality coincide for $\pi^{\vee}$.
So $\pi\sqsubseteq\pi^{\vee}$ is obvious. The proof of the converse is equally simple as soon as the encodings of the processes of the form $[\varphi]T$ become clear.
Given any $\varphi$, one may transform it into a disjunctive normal form $\bigvee_{i\in I}\varphi_{i}$.
By applying the equivalence
\[\bigvee_{i\in I}\varphi_{i} \Leftrightarrow \bigvee_{i\in I}\varphi_{i}\wedge(u=v) \vee \bigvee_{i\in I}\varphi_{i}\wedge(u\not=v),\]
one gets a complete disjoint partition $\bigvee_{j\in J}\psi_{j}$ of $\varphi$ on $fv([\varphi]T)\cup gn([\varphi]T)$.
It follows that
$[\varphi]T = \prod_{j\in J}[\psi_{j}]T$
holds in $\pi^{\vee}$. Therefore $[\varphi]T$ can be bisimulated by $\prod_{j\in J}[\psi_{j}]T$.
\end{proof}

\section{Proof System}\label{sec-Proof-System}

An important issue for a process calculus is to design algorithms
for the decidable fragments of the calculus. The behavior of a
finite term can obviously be examined by a terminating
procedure~\cite{Milner1989}. More generally if a term has only
a finite number of descendants and is finite branching, then there
is an algorithm that generates its transition graph~\cite{Lin1996},
which can be seen as the abstract representation of the term. An
equivalence checker for the terms now works on the transitions
graphs. To help transform the terms to their underlying transition
graphs, a recursive equational rewrite system would be helpful.

In this section we construct two equational systems of the finite
$\pi$-terms, one for the absolute equality and the other for the box
equality. Our prime objective is to demonstrate how the name
dichotomy simplifies equational reasoning.

\subsection{Normal Form}

To construct an equational system, it is always convenient to introduce a combinator that is capable of describing the nondeterministic choices inherent in concurrent systems.
This is the {\em general choice} operator `$+$'.
The semantics of this operator is defined by the following rules.
\[\begin{array}{cc}
\inference{S\stackrel{\lambda}{\longrightarrow}S'}{S+T\stackrel{\lambda}{\longrightarrow}S'}\
\ \ \ \  &
\inference{T\stackrel{\lambda}{\longrightarrow}T'}{S+T\stackrel{\lambda}{\longrightarrow}T'}
\end{array}\]
The choice operation is both commutative and transitive. We will
write $T_{1}+\ldots+T_{n}$ or $\sum_{i\in I}T_{i}$, called a {\em
summation}, without further comment. Here $T_{i}$ is a summand.
Occasionally this notation is confused with the one defined in
(\ref{choice-1}) or (\ref{choice-2}). When this happens, the
confusions are harmless. Using the unguarded choice operator, one may define for example $[p{\in}\mathcal{F}]T$ by
\[\sum_{m\in\mathcal{F}}[p{=}m]T.\]
In general one could define $(\varphi)T$ for a boolean expression constructed from $=,\wedge,\neg$.
It follows that $\textit{if}\;\varphi\;\textit{then}\;S\;\textit{else}\;T$ is definable using the general choice operator.
The choice operator is a debatable operator
since it destroys the congruence property of the absolute equality.
In most of the axiomatic treatment of this operator an induced
congruence relation is introduced. This is avoided in this paper
since we see the choice operator not as a proper process operator
but as an auxiliary one used in the proof systems.

One criterion for an equational system is if it allows one to reduce
every term to some normal form. The exercise carried out in
Section~\ref{sec-Algebraic-Property} suggests the following
definitions.

\begin{definition}\label{normal-form}
Let $\mathcal{F}$ be $gn(T)\cup fv(T)$. The $\pi$-term $T$ is a {\em
normal form} on $\mathcal{F}$ if it is of the form
\[\sum_{i \in I}\lambda_i.T_i\]
such that for each $i \in I$ one of the followings holds.
  \begin{enumerate}
  \item If $\lambda_i = \tau$ then $T_{i}$ is a normal
  form on $\mathcal{F}$.
  \item If $\lambda_i = \overline{n}m$ then $T_{i}$ is a normal
  form on $\mathcal{F}$.
  \item If $\lambda_i = \overline{n}(c)$ then $T_{i}\equiv [c{\notin}\mathcal{F}]T_{i}^{c}$ for some normal
  form $T_{i}^{c}$ on $\mathcal{F}\cup\{c\}$.
  \item If $\lambda_i = n(x)$ then $T_{i}$ is of the form
\[[x{\notin}\mathcal{F}]T_{i}^{\ne} + \sum_{m{\in}\mathcal{F}}[x{=}m]T_{i}^{m}\]
such that $T_{i}^{\ne}$ is a normal form on $\mathcal{F}\cup\{x\}$
and, for each $m\in\mathcal{F}$, $x\notin fv(T_{i}^{m})$ and
$T_{i}^{m}$ is a normal form on $\mathcal{F}$.
  \end{enumerate}
\end{definition}

\begin{definition}
$T$ is a {\em complete normal form} on $\mathcal{F}$ if it is of the
form $\delta_{\mathcal{F}}T'$ for some normal form $T'$ such that
$gn(T')\cup
fv(T')\subseteq\mathcal{F}\subseteq_{f}\mathcal{N}\cup\mathcal{N}_{v}$.
\end{definition}

\subsection{Axiom for Absolute Equality}\label{sec-Axiom-4-Absolute-Equality}

\begin{figure}[t]
\begin{center}
  \begin{tabular}{|lrcl|}
    \hline
    L1 & $(c) {\bf 0}$ & = & ${\bf 0}$ \\
    L2 & $(c) (d) T$ & = & $(d) (c) T$ \\
    L3 & $(c)\pi.T$ & = & $\pi.(c)T$\ \ \ \ \ \ \ \ \ \ \ \ \ \ \ \ \ \ \ \ \ \ \ \ \ \ \ if $c\notin n(\pi)$ \\
    L4 & $(c)\pi.T$ & = & ${\bf 0}$\ \ \ \ \ \ \ \ \ \ \ \ \ \ \ \ \ \ \ \ \ \ \ \ \ \ \ \ \ \ \ \ $\;$ if $c = subj(\pi)$ \\
    L5 & $(c) \varphi T$ & = & $\varphi(c) T$\ \ \ \ \ \ \ \ \ \ \ \ \ \ \ \ \ \ \ \ \ \ \ \ \ \ $\;$ if $c \notin n(\varphi)$ \\
    L6 & $(c) [x{=}c]T$ & = & ${\bf 0}$  \\
    L7 & $(c) (S + T)$ & = & $(c)S + (c)T$ \\
    \hline
    M1 & $(\top)T$ & = & $T$ \\
    M2 & $(\bot)T$ & = & ${\bf 0}$ \\
    M3 & $\varphi T$ & = & $\psi T$\ \ \ \ \ \ \ \ \ \ \ \ \ \ \ \ \ \ \ \ \ \ \ \ \ \ \ \ \ \ \ if $\varphi \Leftrightarrow \psi$ \\
    M4 & $[x{=}p] T$ & = & $[x{=}p] T\{p/x\}$ \\
    M5 & $[x{\not=}p]\pi.T$ & = & $[x{\not=}p]\pi.[x{\not=}p]T$\ \ \ \ \ \ \ \ \ \ \ \ \ \ \ \ if $x\notin bv(\pi)\not\ni p$ \\
    M6 & $\varphi(S+T)$ & = & $\varphi S+\varphi T$ \\
    \hline
    S1 & $T + {\bf 0}$ & = & $T$ \\
    S2 & $S + T$ & = & $T + S$ \\
    S3 & $R + (S + T)$ & = & $(R + S) + T$ \\
    S4 & $T + T$ & = & $T$ \\
    S5 & $T$ & = & $[x{=}p]T + [x{\ne}p]T$ \\
    S6 & $n(x).S + n(x).T$ & = & $n(x).S +n(x).T + n(x).([x{=}p]S + [x{\ne}p]T)$ \\
    \hline
    C1 & $\lambda.\tau.T$ & = & $\lambda.T$ \\
    C2 & $\tau.(S+T)$ & = & $\tau.(\tau.(S+T)+T)$ \\
    \hline
  \end{tabular}
  \caption{Axioms for Absolute Equality.}
  \label{axiom-4-absolute-equality}
\end{center}
\end{figure}

The equational system for the absolute equality on the finite
$\pi$-terms is given in Fig.~\ref{axiom-4-absolute-equality}. The
axioms C1 and C2 are {\em computation laws}. Notice that C2 implies the following equality
\[\tau.T = \tau.(T+\tau.T).\]
We write $AS\vdash
S=T$ if $S=T$ can be derived from the axioms and the rules in
Fig.~\ref{axiom-4-absolute-equality} and the equivalence and
congruence rules. Some derived axioms are summarized in the next
lemma. The proofs of these derived axioms can be found in for
example~\cite{FuYang2003}.

\begin{lemma}\label{derivedaxiom}
The following propositions are valid.
\begin{enumerate}
\item $AS\vdash T=T+\varphi T$.

\item $AS\vdash \varphi\pi.T=\varphi\pi.\varphi T$.

\item $AS\vdash (c)[x{\not=}c]T=(c)T$ and $(c)[x{=}c]T={\bf 0}$.

\item $AS\vdash \varphi T=\varphi T\sigma_{\varphi}$.

\item $AS\vdash \lambda.\varphi\tau.T=\lambda.\varphi T$.

\item $AS\vdash m(x).([x{\in}\mathcal{F}]T'+[x{\notin}\mathcal{F}]\tau.T) + \sum_{i=1}^{k}m(x).([x{\not=}n_{i}]T_{i}'+[x{=}n_{i}]\tau.T) =
m(x).([x{\in}\mathcal{F}]T'+[x{\notin}\mathcal{F}]\tau.T) + \sum_{i=1}^{k}m(x).([x{\not=}n_{i}]T_{i}'+[x{=}n_{i}]\tau.T)+m(x).T$, where $\mathcal{F}=\{n_{1},\ldots,n_{k}\}$.
\end{enumerate}
\end{lemma}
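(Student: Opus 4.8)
The plan is to reduce every item to a single technique: case analysis by a \emph{complete disjoint partition}, combined with the substitution induced by a complete condition. Given a condition $\varphi$ and the relevant carrier $\mathcal{F}=gn(T)\cup fv(T)$ (enlarged to contain $n(\varphi)\cup v(\varphi)$), I would first record the $AS$-provable form of Lemma~\ref{2009-05-27}: iterating S5 over the pairs from $\mathcal{F}$ yields $AS\vdash T=\sum_{i\in I}\varphi_iT$ for a complete disjoint partition $\{\varphi_i\}_{i\in I}$. The second pillar is item (4) itself, which I would prove first and use as the workhorse: since $\sigma_{\varphi}$ records exactly the matches forced by $\varphi$, one has $\varphi\Leftrightarrow\varphi\wedge\bigwedge_{x\in dom(\sigma_\varphi)}[x{=}\sigma_\varphi(x)]$ (M3), and applying M4 once per variable (processing name-valued variables first) rewrites $\varphi T$ to $\varphi T\sigma_\varphi$. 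The crucial consequence, used everywhere below, is that for a \emph{complete} $\varphi_i$ the substitution $\sigma_{\varphi_i}$ trivialises every implied guard: if $\varphi_i\Rightarrow\varphi$ then $\varphi\sigma_{\varphi_i}\Leftrightarrow\top$, and if $\varphi_i\Rightarrow\neg\varphi$ then $\varphi\sigma_{\varphi_i}\Leftrightarrow\bot$, by the representative property of $\sigma_{\varphi_i}$.

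Items (1) and (2) then fall out quickly. For (1), write $T=\sum_{i\in I}\varphi_iT$ and distribute $\varphi$ with M6; each summand $\varphi\varphi_iT$ becomes $\varphi_iT$ on the live cases $i\in I^{+}$ with $\varphi_i\Rightarrow\varphi$ (M3) and $\mathbf{0}$ otherwise (M2, S1), so $\varphi T=\sum_{i\in I^{+}}\varphi_iT$ is a sub-sum of $T$, and $T+\varphi T=T$ by idempotency S4. For (2) I would split the outer condition, reduce each live case $\varphi_i\varphi\pi.T$ to $\varphi_i\pi.T$ (M3) and push the substitution inside by (4): $\varphi_i\pi.T=\varphi_i\pi\sigma_{\varphi_i}.(T\sigma_{\varphi_i})$; the right-hand side $\varphi\pi.\varphi T$ reduces identically because the inner $\varphi$ is killed by $\varphi\sigma_{\varphi_i}\Leftrightarrow\top$ (M1). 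The two sides agree case by case, and the dead cases collapse to $\mathbf{0}$ on both.

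Item (3) is essentially built in: the second equation $(c)[x{=}c]T=\mathbf{0}$ is literally L6 (with M2, L1 covering the degenerate constant-guard variants), and the first follows by splitting with S5, commuting through the localisation with L7, discarding $(c)[x{=}c]T=\mathbf{0}$ by L6, and cleaning up with S1. Item (5) is where care is needed, because $\tau.U=U$ is \emph{not} valid; only C1, $\lambda.\tau.U=\lambda.U$, is available, so a guarding prefix must be preserved throughout. I would split the whole term with S5, and on each live case apply (4) to carry $\sigma_{\varphi_i}$ under the outer $\lambda$, obtaining $\varphi_i\lambda\sigma_{\varphi_i}.(\varphi\sigma_{\varphi_i})\tau.(T\sigma_{\varphi_i})=\varphi_i\lambda\sigma_{\varphi_i}.\tau.(T\sigma_{\varphi_i})$ (M1); now C1 removes the exposed $\tau$, and running (4) backwards gives $\varphi_i\lambda.T$. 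The same computation starting from $\lambda.\varphi T$ lands on the identical $\varphi_i\lambda.T$, and the dead cases collapse to $\varphi_i\lambda.\mathbf{0}$ on both sides, which proves (5).

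The main obstacle is (6), the saturated input law, where I would use S6 as a branch-combining engine. Starting from the given summation $\Sigma$, I repeatedly apply S6 to the main branch $m(x).([x{\in}\mathcal{F}]T'+[x{\notin}\mathcal{F}]\tau.T)$ and the branches $m(x).([x{\ne}n_i]T_i'+[x{=}n_i]\tau.T)$, each step adding a branch $A_j$ whose continuation equals $\tau.T$ on the cases $x{=}n_1,\ldots,x{=}n_j$ and on $x{\notin}\mathcal{F}$, and equals $T'$ on the remaining matches. After $k$ steps the continuation of $A_k$ equals $\tau.T$ on \emph{every} case of the partition $\{[x{=}n_1],\ldots,[x{=}n_k],[x{\notin}\mathcal{F}]\}$, so by the S5-partition and M3 it normalises to $\tau.T$, giving $A_k=m(x).\tau.T=m(x).T$ by C1. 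Since each S6 step only adds summands, $AS\vdash\Sigma=\Sigma+A_1+\cdots+A_k$; adding $A_k$ once more and absorbing duplicates by S4 yields $\Sigma+m(x).T=\Sigma$, which is the claim. The delicate points are the exact form of the intermediate continuations after each merge (which $S,U$ and which witness $p$ to feed into S6) and the final normalisation of the nested guarded continuation to $\tau.T$; both are routine but must be tracked carefully, and the $\alpha$-convention must be invoked so that the bound $x$ of each input prefix is shared across the merged branches.
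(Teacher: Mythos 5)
Your reduction machinery works for items (1), (3), (4) and (6) — the S6 saturation argument for (6) in particular is the standard and correct route, and your handling of the removal of the intermediate branches $A_1,\dots,A_{k-1}$ by reading S6 right-to-left is sound. But the ``crucial consequence, used everywhere below'' is false, and it is load-bearing exactly where you need it, namely in items (2) and (5). You claim that for a complete $\varphi_i$ with $\varphi_i\Rightarrow\varphi$ one has $\varphi\sigma_{\varphi_i}\Leftrightarrow\top$. A substitution induced by $\varphi_i$ satisfies $n(\sigma_{\varphi_i})\subseteq n(\varphi_i)$ and $v(\sigma_{\varphi_i})\subseteq v(\varphi_i)$, so whenever a variable's equivalence class under $\varphi_i$ contains no constant name, $\sigma_{\varphi_i}$ must send that variable to a \emph{variable}. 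Take $\varphi=\varphi_i=[x{\ne}y]$, complete on $\{x,y\}$, with $\sigma_{\varphi_i}$ the identity: then $\varphi\sigma_{\varphi_i}=[x{\ne}y]$, which is not valid (an assignment can identify $x$ and $y$), so M1/M3 cannot discharge it. The dual claim ($\varphi_i\Rightarrow\neg\varphi$ gives $\varphi\sigma_{\varphi_i}\Leftrightarrow\bot$) fails the same way for a match between elements that $\sigma_{\varphi_i}$ keeps as distinct variables. Only an \emph{assignment} $\rho_{\varphi_i}$ trivialises every guard, and assignments are not available inside $AS$. The residual guard $\varphi\sigma_{\varphi_i}$ is implied by the outer $\varphi_i$, but it sits on the far side of a prefix, so M3 alone cannot absorb it; moving a guard across a prefix is precisely the content of item (2), which makes your argument for (2) circular at that point.

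The repair is the direct proof you bypassed, and it explains why M5 — which you never invoke — is in the system at all. Decompose $\varphi$ into its components $\chi_1\cdots\chi_k$ and push them one at a time: for a mismatch $\chi_j=[x{\ne}p]$ the step $\chi_j\pi.T=\chi_j\pi.\chi_j T$ is literally M5; for a match $\chi_j=[x{=}p]$ apply M4 to both sides so that $([x{=}p]T)\{p/x\}$ becomes $[p{=}p]T\{p/x\}$, kill the inner $[p{=}p]$ by M3/M1, and undo M4. Commuting components with M3 then yields item (2) with no partition at all, and item (5) follows from (2) together with C1 (use the partition only to compare the outer guard $\varphi_i$ with the adjacent $\varphi$ produced by (2), where Lemma~\ref{2009-11-13} and M3/M2 do apply). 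With that substitution your treatment of the remaining items stands.
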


Using these derived axioms it is easy to prove the following lemma by structural induction.

\begin{lemma}\label{cnf-normalization}
Suppose $T$ is a finite $\pi$-term and $\varphi$ is complete on a
finite set $\mathcal{F}\supseteq gn(T)\cup fv(T)$. Then $AS\vdash
\varphi T=\varphi^{=}\delta T'$ for some complete normal form
$\delta T'$ on $gn(\delta T')\cup fv(\delta T')$.
\end{lemma}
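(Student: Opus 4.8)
The plan is to argue by structural induction on the finite term $T$, after first normalizing the condition $\varphi$ so that only distinctness information survives. Since $\varphi$ is complete (and satisfiable) we have $\varphi\Leftrightarrow\varphi^{=}\varphi^{\ne}$, so by Lemma~\ref{derivedaxiom}(4) I would collapse the forced equalities, $\varphi T=\varphi T\sigma_{\varphi}$, and then rewrite the guard by M3 to obtain $\varphi^{=}\varphi^{\ne}T\sigma_{\varphi}$. After the substitution $\sigma_{\varphi}$ every free variable that $\varphi$ equated to a name or to another variable has disappeared, so on $gn(T\sigma_{\varphi})\cup fv(T\sigma_{\varphi})$ the remaining disequalities of $\varphi^{\ne}$ are exactly those recorded by $\delta$; one more use of M3 rewrites $\varphi^{\ne}$ to $\delta$. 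This reduces the lemma to the inner claim: for every finite $U$ and every $\delta$ consisting only of mismatches that is complete on some $\mathcal{F}\supseteq gn(U)\cup fv(U)$, one has $AS\vdash\delta U=\delta U'$ for a normal form $U'$, the outer $\varphi^{=}$ being merely carried along.

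The inner induction on $U$ then proceeds case by case. For $\mathbf{0}$ the empty summation already qualifies. For a guard $[p{=}q]U_1$ or $[p{\ne}q]U_1$, since $p,q\in\mathcal{F}$ and $\delta$ is complete on $\mathcal{F}$, Lemma~\ref{2009-11-13} says $\delta$ either entails the guard, in which case M3 absorbs it and I recurse, or contradicts it, in which case M2 and M3 collapse the term to $\mathbf{0}$. For an output or $\tau$ prefix $\overline{n}m.U_1$ I would use Lemma~\ref{derivedaxiom}(2) to move $\delta$ across the prefix and apply the induction hypothesis to $U_1$, yielding a summand of the shape demanded by clauses~2 (resp.~1) of Definition~\ref{normal-form}. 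The input prefix is the first genuinely substantive case: starting from $n(x).U_1$ with $x$ fresh I would split $U_1$ by repeated use of S5 into $[x{\notin}\mathcal{F}]U_1+\sum_{m\in\mathcal{F}}[x{=}m]U_1$, apply M4 inside each equality branch to remove $x$ from $U_1\{m/x\}$, and then recurse under $\mathcal{F}$ on the equality branches and under $\mathcal{F}\cup\{x\}$ on the disequality branch, producing precisely the form of clause~4.

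The localization case is where the bound-output shape must be built. By the induction hypothesis $\delta U_1=\delta U_1'$ with $U_1'=\sum_j\lambda_j.U_{1,j}'$, and with L5 and L7 I would distribute $(c)$ over the summation. Each summand is then simplified by the L-axioms: L3 commutes $(c)$ past a prefix not mentioning $c$, L4 annihilates a summand whose subject is $c$, and a summand $(c)\overline{n}c.U_{1,j}'$ is, by definition of the derived bound-output operator, the term $\overline{n}(c).U_{1,j}'$, which I would cast into the clause~3 form $[c{\notin}\mathcal{F}]U_{1,j}^{c}$ — the guard $[c{\notin}\mathcal{F}]$ being insertable because $c$ is local and hence distinct from every member of $\mathcal{F}$, using Lemma~\ref{monoidal}(4) for the name entries and Lemma~\ref{derivedaxiom}(3) for the variable entries. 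Finally, the composition $U_1\,|\,U_2$ is treated by normalizing $U_1,U_2$ first and then applying the expansion law of $AS$ to rewrite the parallel product as a guarded summation, to which the induction hypothesis applies once more.

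I expect the composition step to be the main obstacle. It forces the interaction of the expansion law with the branching-bisimulation computation laws C1 and C2, and its synchronization summands introduce $\tau$-prefixes as well as, in the bound-output synchronization, scope extrusion — so one must re-run exactly the localization analysis above to keep every generated summand inside one of the four clauses of Definition~\ref{normal-form}, on the correct name set. The real work is this bookkeeping; once it is organized, the match/mismatch, output, and $\tau$ cases are routine applications of the derived axioms of Lemma~\ref{derivedaxiom}.
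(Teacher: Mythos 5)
Your proposal is correct and follows essentially the same route as the paper's proof: first rewrite $\varphi T$ as $\varphi^{=}(\varphi^{\not=}T)\sigma_{\varphi^{=}}$ via M3 and Lemma~\ref{derivedaxiom}(4), then carry out inside $AS$ the reductions of Section~\ref{sec-Algebraic-Property} by structural induction, with the input-prefix case handled through S5/M4 over the complete disjoint partition $\{(x{\notin}\mathcal{F})\wedge\delta_{\mathcal{F}}\}\cup\{(x{=}m)\wedge\delta_{\mathcal{F}}\}_{m{\in}\mathcal{F}}$ --- which is precisely the one point the paper's terse proof singles out. The only caveat, shared with the paper's own argument, is that your composition case invokes ``the expansion law of $AS$'' even though no such axiom appears in Fig.~\ref{axiom-4-absolute-equality}; it has to be taken as an implicit part of the system, since none of the listed axioms can eliminate the composition operator.
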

\begin{proof}
To start with, $AS\vdash \varphi
T=\varphi^{=}\varphi^{\not=}T=\varphi^{=}(\varphi^{\not=}T)\sigma_{\varphi^{=}}$.
In the second step observe that within $AS$ we can carry out the
reductions, explained in Section~\ref{sec-Algebraic-Property}, to
$(\varphi^{\not=}T)\sigma_{\varphi^{=}}$. Notice that, for each
$\mathcal{F}\subseteq_{f}\mathcal{N}\cup\mathcal{N}_{v}$, although
$\{x{\notin}\mathcal{F}\}\cup\{x{=}m\}_{m{\in}\mathcal{F}}$ is not a
complete disjoint partition of $\mathcal{F}\cup\{x\}$, the set
$\{(x{\notin}\mathcal{F})\wedge\delta_{\mathcal{F}}\}\cup\{(x{=}m)\wedge\delta_{\mathcal{F}}\}_{m{\in}\mathcal{F}}$
{\em is} a complete disjoint partition of $\mathcal{F}\cup\{x\}$.
\end{proof}

Suppose we are to prove $AS\vdash \varphi S=\varphi T$ where
$\varphi$ is complete on $gn(S\,|\,T)\cup fv(S\,|\,T)$. According to
Lemma~\ref{cnf-normalization} we only need to prove $AS\vdash \delta
S'=\delta T'$ for some complete normal forms $\delta S',\delta T'$.
In the light of this fact the next lemma should play a crucial role
in the completeness proof.

\begin{lemma}\label{lem:promotion}
Suppose $S,T$ are normal forms on the finite set
$\mathcal{F}\supseteq gn(S\,|\,T)\cup fv(S\,|\,T)$. If
$\delta_{\mathcal{F}}S = \delta_{\mathcal{F}}T$ then $AS \vdash
\delta_{\mathcal{F}}\tau.S = \delta_{\mathcal{F}}\tau.T$.
\end{lemma}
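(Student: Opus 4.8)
The plan is to promote the semantic equality to a syntactic derivation by a double absorption argument, reducing $\delta_{\mathcal{F}}\tau.S = \delta_{\mathcal{F}}\tau.T$ to absorbing the top-level summands of each normal form into the other, one at a time. First I would unpack the hypothesis: by Lemma~\ref{term2process} the assumption $\delta_{\mathcal{F}}S = \delta_{\mathcal{F}}T$ is equivalent to $S\rho = T\rho$ as processes, where $\rho = \rho_{\delta_{\mathcal{F}}}$, and by Theorem~\ref{coincidence-theorem} this supplies a concrete $\pi$-bisimulation relating $S\rho$ and $T\rho$. The argument proceeds by induction on the maximal prefix-nesting depth of $S\,|\,T$; the point is that every residual occurring in a normal form (Definition~\ref{normal-form}) is itself a normal form of strictly smaller depth, which feeds the induction.

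By S1--S4 and S2 it suffices to show $AS\vdash \delta_{\mathcal{F}}\tau.S = \delta_{\mathcal{F}}\tau.(S+T)$ together with its symmetric counterpart, since combining them yields $\delta_{\mathcal{F}}\tau.S = \delta_{\mathcal{F}}\tau.(S+T) = \delta_{\mathcal{F}}\tau.T$. For the first, I would absorb each top-level summand $C$ of $T$ into $\tau.S$ separately, accumulating them; the absorption derivation is insensitive to summands already present, so the individual absorptions compose to give $\delta_{\mathcal{F}}\tau.S = \delta_{\mathcal{F}}\tau.(S+T)$.

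Fix a summand $C = \mu.T_{j}$ of $T$. Its transition $T\rho \stackrel{\mu\rho}{\longrightarrow} T_{j}\rho$ is matched by the bisimulation as $S\rho \Longrightarrow S''\rho \stackrel{\mu\rho}{\longrightarrow} S'\rho$ with $S\rho = S''\rho$ and $S'\rho = T_{j}\rho$ (the deterministic-$\tau$ summand case is trivial, as the summand is already inert). Two ingredients then combine. Since $S',T_{j}$ are residual normal forms of smaller depth and $\delta'S' = \delta'T_{j}$ for the appropriate condition $\delta'$, the induction hypothesis gives $AS\vdash \delta'\tau.S' = \delta'\tau.T_{j}$, whence C1 yields $AS\vdash \mu.T_{j} = \mu.\tau.T_{j} = \mu.\tau.S' = \mu.S'$; so it suffices to absorb $\mu.S'$. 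Since $\mu.S'$ is a genuine top-level summand of $S''$ and, by Lemma~\ref{inert}, the path $S\rho \Longrightarrow S''\rho$ consists of deterministic internal actions with all intermediate states equal to $S\rho$, I would float $\mu.S'$ up to the top level of $\tau.S$ by a secondary induction on the length of this $\tau$-path, each step an application of the branching law C2, which is precisely $\tau.(X+Y)=\tau.(\tau.(X+Y)+Y)$. This delivers $AS\vdash \delta_{\mathcal{F}}\tau.S = \delta_{\mathcal{F}}\tau.(S+\mu.S')$, as needed.

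The main obstacle will be the input-prefix case. An input summand of a normal form is not a single transition but the schematic term $[x{\notin}\mathcal{F}]T^{\ne} + \sum_{m\in\mathcal{F}}[x{=}m]T^{m}$, encoding infinitely many concrete input actions. Matching all of these through the bisimulation and reflecting the match back into a finite derivation forces me to reason by the valid case-split (\ref{2009-02-02}), to invoke the input saturation axiom S6 and the derived axiom Lemma~\ref{derivedaxiom}(6), and to keep the guarding conditions coherent across the fresh bound variable $x$ so that the residual equalities have exactly the form required by the induction hypothesis. Managing this bookkeeping, rather than the $\tau$-law manipulations, which are routine given C1 and C2, is where the real work lies.
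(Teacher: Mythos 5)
Your overall architecture --- induction on the structure of the normal forms, absorption of each top-level summand of $T$ into $\tau.S$ via the bisimulation matching, use of the induction hypothesis on residuals together with C1 to replace $\mu.T_{j}$ by $\mu.S'$, and the identification of the input case as the place where S6 and Lemma~\ref{derivedaxiom}(6) are needed --- agrees with the paper's proof. The gap is the step where you ``float $\mu.S'$ up to the top level of $\tau.S$ by a secondary induction on the length of this $\tau$-path, each step an application of the branching law C2.'' Writing $S\equiv U+\tau.S_{1}$ for one link of the path, that step requires
\[
AS\vdash \delta_{\mathcal{F}}\tau.(U+\tau.(S_{1}+\mu.S')) = \delta_{\mathcal{F}}\tau.(U+\tau.(S_{1}+\mu.S')+\mu.S'),
\]
and this is not an instance of C2: C2 has the shape $\tau.(X+Y)=\tau.(\tau.(X+Y)+Y)$, so it promotes a summand $Y$ past a $\tau$-prefix only when the \emph{entire} outer sum is that single $\tau$-prefixed term, i.e.\ when $U\equiv\mathbf{0}$. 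For $U\not\equiv\mathbf{0}$ the displayed equation is not even semantically valid without the hypothesis that the step $S\rho\rightarrow S_{1}\rho$ is deterministic (take $U=a$, $S_{1}=b$, $\mu.S'=c$: the right-hand side fires $c$ from a state that can also do $a$, which the left-hand side cannot branching-bisimulate), and under that hypothesis its derivability is essentially an instance of the lemma you are proving, so the secondary induction is circular rather than a routine C2 manipulation.

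The paper avoids this by decomposing differently. It first proves an auxiliary property (S): if $T\rho\rightarrow^{+}T'\rho\nrightarrow$ then $AS\vdash\delta_{\mathcal{F}}\tau.T=\delta_{\mathcal{F}}\tau.(T+T')=\delta_{\mathcal{F}}\tau.T'$, established by absorbing \emph{every} summand of $T$ into the inner, guarded occurrence of $T'$ --- producing $\tau.(T+\tau.(T'+T))$ --- and only then applying C2 once; the application succeeds precisely because the added inner summands reproduce the whole outer sum $T$. The main property (P) then replaces $S$ and $T$ by their $\tau$-irreducible descendants $S',T'$, for which every bisimulation matching is necessarily single-step (Lemma~\ref{inert} forces any preceding $\Longrightarrow$ to be deterministic, and $S'\rho\nrightarrow$ forbids that), so there is nothing to float through. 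If you reorganize your absorption so that all summands are collected at the inner guarded position before the single collapsing use of C2, your argument goes through; absorbing them one at a time at the top level does not.
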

\begin{proof}
Assume that $S\equiv\sum_{i\in I}\lambda_{i}.S_{i}$ and
$T\equiv\sum_{j\in J}\lambda_{j}.T_{j}$ and that
$\delta_{\mathcal{F}}S = \delta_{\mathcal{F}}T$ for some
$\mathcal{F}\supseteq gn(S\,|\,T)\cup fv(S\,|\,T)$. Let $\rho$ be an assignment that agrees with $\delta_{\mathcal{F}}$.
 We are going to establish by simultaneous induction on the structure of $S,T$ the properties stated below.
\begin{quote}
(S) If $T\rho\rightarrow^{+}T'\rho\nrightarrow$ then $AS\vdash \delta_{\mathcal{F}}\tau.T = \delta_{\mathcal{F}}\tau.(T+T') = \delta_{\mathcal{F}}\tau.T'$.
\end{quote}
\begin{quote}
(P) If $\delta_{\mathcal{F}}S = \delta_{\mathcal{F}}T$ then $AS \vdash
\delta_{\mathcal{F}}\tau.S = \delta_{\mathcal{F}}\tau.(S+T) = \delta_{\mathcal{F}}\tau.T$.
\end{quote}
Suppose $T\rho\rightarrow T_{1}\rho\rightarrow\ldots\rightarrow T_{n}\rho\rightarrow T'\rho\nrightarrow$.
Lemma~\ref{term2process} and Lemma~\ref{eq-stable} imply that $\delta_{\mathcal{F}}T_{1}=\ldots= \delta_{\mathcal{F}}T_{n}= \delta_{\mathcal{F}}T'$.
By induction hypothesis on (P), $AS\vdash\delta_{\mathcal{F}}\tau.T_{1}=\ldots=\delta_{\mathcal{F}}\tau.T_{n}=\delta_{\mathcal{F}}\tau.T'$.
Therefore
\[
AS\vdash \delta_{\mathcal{F}}\tau.T = \delta_{\mathcal{F}}\tau.(T+\tau.T').
\]
Let $\lambda_{j}.T_{j}$ be a summand of $T$.
Consider how $T'\rho$ might bisimulate $T\rho\stackrel{\lambda_{j}}{\longrightarrow}T_{j}\rho$. There are four cases.
\begin{enumerate}
\item $\lambda_j = \tau$ and $T\rho\stackrel{\tau}{\longrightarrow}T_{j}\rho$.
Suppose the $\tau$-action is bisimulated by
$T'\rho\stackrel{\tau}{\longrightarrow} T_{j'}'\rho
= T_{j}\rho$ for some $T_{j'}'$. Then $\delta_{\mathcal{F}} T_{j'}'
= \delta_{\mathcal{F}} T_{j}$ by Lemma~\ref{term2process} and Lemma~\ref{eq-stable}. It
follows from the induction hypothesis on (P) that $AS \vdash
\delta_{\mathcal{F}}\tau.T_{j'}'=\delta_{\mathcal{F}}\tau.T_{j}$.
Using (2) of Lemma~\ref{derivedaxiom} one gets the following
inference
\begin{eqnarray*}
AS\vdash \delta_{\mathcal{F}}\tau.(T+\tau.T')
 &=& \delta_{\mathcal{F}}\tau.(T+\tau.(T'+\tau.T_{j'}')) \\
 &=& \delta_{\mathcal{F}}\tau.(T+\tau.(T'+\tau.T_{j})).
\end{eqnarray*}
If $T\rho\stackrel{\tau}{\longrightarrow}T_{j}\rho$ is bisimulated vacuously by $T'\rho$.
Then $\delta_{\mathcal{F}}T_{j}=\delta_{\mathcal{F}}T'$ according to Lemma~\ref{term2process} and Lemma~\ref{eq-stable} and consequently $AS\vdash\delta_{\mathcal{F}}\tau.T_{j}=\delta_{\mathcal{F}}\tau.T'$ by the induction hypothesis on (P).
It follows from C2 that
\begin{eqnarray*}
AS\vdash \delta_{\mathcal{F}}\tau.(T+\tau.T')
 &=& \delta_{\mathcal{F}}\tau.(T+\tau.(T'+\tau.T')) \\
 &=& \delta_{\mathcal{F}}\tau.(T+\tau.(T'+\tau.T_{j})).
\end{eqnarray*}

  \item $\lambda_j=\overline{n}m$ and $T\rho\stackrel{\overline{a}b}{\longrightarrow}T_{j}\rho$ for some $a,b$.
  We can prove as in the next case that
  $AS\vdash \delta_{\mathcal{F}}\tau.(T+\tau.T')=\delta_{\mathcal{F}}\tau.(T+\tau.(T'+\overline{n}m.T_{j}))$.

\item $\lambda_j=\overline{m}(c)$ and $T\rho\stackrel{\overline{a}(c)}{\longrightarrow}T_{j}\rho$ for some $a$.
Suppose this action is bisimulated by
$T'\rho\stackrel{\overline{a}(c)}{\longrightarrow}T_{j'}'\rho=
T_{j}\rho$ for some $T_{j'}'$. Then
$\delta_{\mathcal{F}}[c{\notin}\mathcal{F}]T_{j'}'=
\delta_{\mathcal{F}}[c{\notin}\mathcal{F}]T_{j}$ and \[AS \vdash
\delta_{\mathcal{F}}[c{\notin}\mathcal{F}]\tau.T_{j'}' =
\delta_{\mathcal{F}}[c{\notin}\mathcal{F}]\tau.T_{j}\] by induction hypothesis on (P). Therefore \[AS
\vdash \delta_{\mathcal{F}}\overline{m}(c).[c{\notin}\mathcal{F}]T_{j'}'
= \delta_{\mathcal{F}}\overline{m}(c).[c{\notin}\mathcal{F}]T_{j}.\]
It follows from (3) of Lemma~\ref{derivedaxiom} that \[AS\vdash \delta_{\mathcal{F}}\tau.(T+\tau.T')=\delta_{\mathcal{F}}\tau.(T+\tau.(T'+\overline{m}(c).T_{j})).\]

  \item $\lambda_j=m(x)$. Assume that $\rho(m)=a$ and $gn((S\,|\,T)\rho)=\{b_1, \ldots, b_n\}$.
  Let $n_{k}\in \mathcal{F}$ be such that $n_{k}=b_{k}$ if $b_{k}\in\mathcal{F}$
  and $\rho(n_{k})=b_{k}$ if $b_{k}\not\in\mathcal{F}$.
  There are two subcases.
\begin{enumerate}
\item For every $k\in\{1,\ldots,n\}$, one has $T\rho\stackrel{ab_{k}}{\longrightarrow}T_{j}\rho\{b_{k}/x\}$.
Let this action be bisimulated by
$T'\rho\stackrel{ab_{k}}{\longrightarrow}T_{n_{k}}'\rho\{b_{k}/x\} = T_{j}\rho\{b_{k}/x\}$.
Then \[AS\vdash \delta_{\mathcal{F}}[x{=}n_{k}]\tau.T_{n_{k}}'=\delta_{\mathcal{F}}[x{=}n_{k}]\tau.T_{j}\] by induction hypothesis on (P).
Consequently
\begin{eqnarray*}
\delta_{\mathcal{F}}\tau.(T+\tau.T') &=& \delta_{\mathcal{F}}\tau.(T+\tau.(T'+m(x).T_{n_{k}}')) \\
 &=& \delta_{\mathcal{F}}\tau.(T+\tau.(T'+m(x).([x{\not=}n_{k}]\tau.T_{n_{k}}'+[x{=}n_{k}]\tau.T_{n_{k}}'))) \\
 &=& \delta_{\mathcal{F}}\tau.(T+\tau.(T'+m(x).([x{\not=}n_{k}]\tau.T_{n_{k}}'+[x{=}n_{k}]\tau.T_{j}))).
\end{eqnarray*}

\item Suppose $T\rho\stackrel{ad}{\longrightarrow}T_{j}\rho\{d/x\}$, where $d\not\in\mathcal{F}$, is
simulated by
\[T'\rho \stackrel{ad}{\longrightarrow}T_{x}\rho\{d/x\} = T_{j}\rho\{d/x\}.\]
Like in the previous subcase, we may prove that
\[
AS \vdash \delta_{\mathcal{F}}\tau.(T+\tau.T') =
\delta_{\mathcal{F}}\tau.(T+\tau.(T'+m(x).([x{\in}\mathcal{F}]\tau.T_{x}+[x{\notin}\mathcal{F}]\tau.T_{j}))).
\]
\end{enumerate}
Putting together the two equalities obtained in (a) and (b), we
get the following equational rewriting
\begin{eqnarray*}
AS \vdash \delta_{\mathcal{F}}\tau.(T+\tau.T')
&=& \delta_{\mathcal{F}}\tau.(T+\tau.(T' + m(x).([x{\in}\mathcal{F}]\tau.T_{x} +[x{\notin}\mathcal{F}]\tau.T_{j}) \\
 && +\; \sum_{j=1}^{n}m(x).([x{\not=}n_{k}]\tau.T_{n_{k}}+[x{=}n_{k}]\tau.T_{j}))) \\
&=& \delta_{\mathcal{F}}\tau.(T+\tau.(T' + m(x). T_{j})),
\end{eqnarray*}
where the second equality holds by (6) of
Lemma~\ref{derivedaxiom}.
\end{enumerate}
Using the fact that $\delta_{\mathcal{F}}T=\delta_{\mathcal{F}}T'$ we may apply the above argument to every summand of $T$ to derive that
\[
AS \vdash \delta_{\mathcal{F}}\tau.T = \delta_{\mathcal{F}}\tau.(T+\tau.(T'+T)).
\]
Resorting to the full power of C2 we get from the above equality the following:
\begin{equation}\label{2010-04-05-shaoxing}
AS \vdash \delta_{\mathcal{F}}\tau.T = \delta_{\mathcal{F}}\tau.(T+T').
\end{equation}
If we examine the proof of (\ref{2010-04-05-shaoxing}) carefully we realize that it also establishes the following fact:
\[
AS \vdash \delta_{\mathcal{F}}\tau.(T+T') = \delta_{\mathcal{F}}\tau.T'.
\]
This finishes the inductive proof of (S).

The inductive proof of (P) is now simpler.
Suppose $\delta_{\mathcal{F}}S = \delta_{\mathcal{F}}T$. Let $S',T'$ be such that $S\rho\rightarrow^{*}S'\rho\nrightarrow$ and $T\rho\rightarrow^{*}T'\rho\nrightarrow$. According to the induction hypothesis of (S), the followings hold.
\begin{eqnarray*}
AS &\vdash& \delta_{\mathcal{F}}\tau.S = \delta_{\mathcal{F}}\tau.S', \\
AS &\vdash& \delta_{\mathcal{F}}\tau.T = \delta_{\mathcal{F}}\tau.T'.
\end{eqnarray*}
The inductive proof of (S) can be reiterated to show $AS\vdash \delta_{\mathcal{F}}\tau.S'=\delta_{\mathcal{F}}\tau.T'$.
Hence $AS\vdash \delta_{\mathcal{F}}\tau.S=\delta_{\mathcal{F}}\tau.T$.
\end{proof}

It is a small step from Lemma~\ref{lem:promotion} to the
completeness result.

\begin{theorem}\label{ae:completeness}
Suppose $S,T$ are finite $\pi$-terms.
Then $S= T$ if and only if $AS \vdash \tau. S = \tau.T$.
\end{theorem}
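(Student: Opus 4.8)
The statement packages a soundness claim and a completeness claim, and my plan is to treat the two inclusions separately.

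For the easy direction, $AS\vdash\tau.S=\tau.T$ implies $S=T$, I would first verify that every axiom in Fig.~\ref{axiom-4-absolute-equality} is valid for $=$, so that derivability entails the semantic equality $\tau.S=\tau.T$. The localization laws L1--L7, the condition laws M1--M6 and the summation laws S1--S5 are all consequences of Lemma~\ref{monoidal} and the congruence properties recorded in Lemma~\ref{algebraic}; the only laws needing a genuine branching-bisimulation argument are S6 and the computation laws C1 and C2, for which I would exhibit the obvious bisimulations up to $\sim$, using Lemma~\ref{inert} to justify that the inserted $\tau$-steps are deterministic. Once $\tau.S=\tau.T$ is known semantically, I would cancel the leading $\tau$: the unique top transitions $\tau.S\stackrel{\tau}{\longrightarrow}S$ and $\tau.T\stackrel{\tau}{\longrightarrow}T$ must be matched against one another, after which Lemma~\ref{bisimulation-lemma} delivers $S=T$.

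The substantive direction is completeness. Assuming $S=T$, set $\mathcal{F}=gn(S\,|\,T)\cup fv(S\,|\,T)$ and fix a complete disjoint partition $\{\varphi_i\}_{i\in I}$ of $\mathcal{F}$. By Lemma~\ref{2009-05-28} the hypothesis is equivalent to $\varphi_i S=\varphi_i T$ for every $i$, so it suffices to prove $AS\vdash\varphi_i\tau.S=\varphi_i\tau.T$ for each $i$ and then reassemble. For a fixed $i$, Lemma~\ref{cnf-normalization} rewrites $\varphi_i S$ and $\varphi_i T$ within $AS$ to complete normal forms $\varphi_i^{=}\delta S_i'$ and $\varphi_i^{=}\delta T_i'$. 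Applying the already-established soundness to these derivations, and then stripping the match prefix $\varphi_i^{=}$ using Lemma~\ref{airport} and Lemma~\ref{term2process}, leaves a semantic equality $\delta_{\mathcal{F}'}S_i'=\delta_{\mathcal{F}'}T_i'$ between two normal forms. This is precisely the hypothesis of the promotion lemma, so Lemma~\ref{lem:promotion} yields $AS\vdash\delta_{\mathcal{F}'}\tau.S_i'=\delta_{\mathcal{F}'}\tau.T_i'$. Pushing the outer $\tau$ back through the normalization, via a $\tau$-prefixed instance of Lemma~\ref{derivedaxiom}, converts this into $AS\vdash\varphi_i\tau.S=\varphi_i\tau.T$.

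Finally I would recombine the cases. Iterating S5 over the partition gives $AS\vdash\tau.S=\sum_{i\in I}\varphi_i\tau.S$ and $AS\vdash\tau.T=\sum_{i\in I}\varphi_i\tau.T$, and the per-case equalities then sum up by the congruence rule for the choice combinator, producing $AS\vdash\tau.S=\tau.T$. The genuine difficulty has already been absorbed into Lemma~\ref{lem:promotion}; the remaining obstacle is bookkeeping, namely making the leading $\tau$-guard and the match part $\varphi_i^{=}$ (together with the induced substitution $\sigma_{\varphi_i^{=}}$) commute correctly with the normalization, so that the input to the promotion lemma is exactly a $\delta$-guarded equality of normal forms, and then undoing these manipulations with the $\tau$ in place.
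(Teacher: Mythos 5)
Your proposal is correct and follows essentially the same route as the paper: reduce $S=T$ via a complete disjoint partition and Lemma~\ref{2009-05-28}, normalize each case to complete normal forms using Lemma~\ref{airport} and Lemma~\ref{cnf-normalization}, and conclude with the promotion lemma (Lemma~\ref{lem:promotion}). The extra material you supply --- verifying soundness of the axioms and recombining the cases with S5 and congruence --- is exactly the routine bookkeeping the paper leaves implicit.
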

\begin{proof}
Let $\mathcal{F}$ be $gn(S\,|\,T)\cup fv(S\,|\,T)$ and let
$\{\varphi_{i}\}_{i\in I}$ be a complete disjoint partition
$\mathcal{F}$. Then $S= T$ if and only if $\varphi_{i}S=
\varphi_{i}T$ for every $i\in I$. For each $i\in I$,
$\varphi_{i}S= \varphi_{i}T$ can be turned into an equality of
the form $\varphi_{i}^{=}(\varphi_{i}^{\not=}
S')\sigma_{\varphi_{i}^{=}}=
\varphi_{i}^{=}(\varphi_{i}^{\not=}T')\sigma_{\varphi_{i}^{=}}$.
Using Lemma~\ref{airport} and Lemma~\ref{cnf-normalization}, this
equality can be simplified to $\delta S''= \delta T''$, where
$\delta S''$ and $\delta T''$ are complete normal forms. We are done
by applying Lemma~\ref{lem:promotion}.
\end{proof}

\subsection{Axiom for Box Equality}\label{sec-Axiom-4-Box-Equality}

According to Theorem~\ref{2008-09-24}, a proof system for the box
equality on the finite $\pi$-terms is the same as a proof system for
the testing equivalence on the finite $\pi$-terms. De Nicola and
Hennessy~\cite{DeNicolaHennessy1984} have constructed an equational system for the testing
equivalence on the finite CCS processes.
Built upon that system, Boreale and De Nicola~\cite{BorealeDeNicola1995} have studied the
equational system for the testing equivalence on the finite
$\pi$-processes. So there is not much
novelty about an equational proof system for the box equality on the
finite $\pi$-terms. It would be however instructive in the present
framework to give an outline of the proof technique that reduces the
completeness for the box equality to the completeness for the
absolute equality.

\begin{figure}[t]
\begin{center}
  \begin{tabular}{|lrclr|}
    \hline
    N1 & $\lambda.S + \lambda.T$ & = & $\lambda.(\tau. S+\tau.T)$ & \\
    N2 & $S + \tau. T$ & = & $\tau. (S + T) + \tau. T$ & \\
    N3 & $n(x). S + \tau. (n(x). T + R)$ & = & $\tau. (n(x). S + n(x). T + R)$ & \\
    N4 & $\overline{n}l. S + \tau. (\overline{n}m. T + R)$ & = & $\tau. (\overline{n}l. S + \overline{n}m. T + R)$ & \\
    \hline
  \end{tabular}
  \caption{Axioms for Box Equivalence.}
  \label{axiom-4-box-equality}
\end{center}
\end{figure}

Since $\simeq \subseteq=_{\Box}$, one may devise an equational
system for the latter by extending the system given in
Fig.~\ref{axiom-4-absolute-equality}. The additional axioms are
given in Fig.~\ref{axiom-4-box-equality}. These laws are the well
known axioms for the testing equivalence
of De Nicola and Hennessy~\cite{DeNicolaHennessy1984} adapted to the $\pi$-calculus. It is
well known that they subsume Milner's $\tau$-laws (See
Fig.~\ref{Milner-law}). Let $AS_{b}$ be the system
$AS\setminus\{C1,C2\}\cup\{N1,N2,N3,N4\}$. It is routine to check
that $AS_{b}$ is sound for the box equality. To prove that the
system is also complete, we apply the following strategy.
\begin{quote}
$S=_{\Box}T$ if and only if there exist some $S',T'$ such that
$AS_{b}\vdash S=S'$, $AS_{b}\vdash T=T'$ and $S'=T'$.
\end{quote}
The soundness of the strategy relies on the fact that $AS_{b}$ is
complete for $=$. To make the strategy work, we need to
introduce a special set of $\pi$-terms, different from the complete
normal forms, so that the box equality and the absolute equality
coincide on these special $\pi$-terms.

\begin{figure}[t]
\begin{center}
  \begin{tabular}{|lrcl|}
    \hline
    T1 & $\lambda.\tau.T$ & = & $\lambda.T$ \\
    T2 & $T+\tau.T$ & = & $\tau.T$ \\
    T3 & $\lambda.(S+\tau.T)$ & = & $\lambda.(S+\tau.T)+\lambda.T$ \\
    \hline
  \end{tabular}
  \caption{Milner's Tau Laws.}
  \label{Milner-law}
\end{center}
\end{figure}

Axiom N1 suggests that a summation may be rewritten to a form in
which no two summands have identical non-tau prefix. For instance
\[AS_{b}\vdash a(x).S+a(y).T=a(z).(\tau.S\{z/x\}+\tau.T\{z/y\})\] and
\[AS_{b}\vdash
\overline{a}(b).S+\overline{a}(c).T=\overline{a}(d).(\tau.S\{d/b\}+\tau.T\{d/c\}).\]
Axiom N2 implies that either all the summands of a summation are
prefixed by $\tau$, or none of them is prefixed by $\tau$. Moreover
N1 actually says that one does not have to consider any
$\tau$-prefix immediately underneath another $\tau$-prefix. Axioms
N3 and N4 can be used to expand a summation $\sum_{i\in
I}\tau.T_{i}$ to a saturated form. We say that $\sum_{i\in
I}\tau.T_{i}$ is {\em saturated} if $\tau.(\lambda_{1}.T_{1}+T_{j})$
is a summand of $\sum_{i\in I}\tau.T_{i}$ whenever there is a
summand $\lambda_{1}.T_{1}$ of $T_{i}$ such that for every summand
$\lambda_{2}.T_{2}$ of $T_{j}$ it holds up to $\alpha$-conversion
that $\lambda_{2}\not=\lambda_{1}$. These observations lead to the
following definition.

\begin{definition}
Let $\mathcal{F}$ be $gn(T)\cup fv(T)$. A finite $\pi$-term $T$ is a
\emph{box normal form} if it is in one of the following two forms:
\begin{enumerate}
\item There are $A,B,C\subseteq_{f}\mathcal{N}$, where $C\subseteq
B$, such that $T$ is of the shape:
\begin{equation}\label{2009-02-04}
\sum_{a\in
A}a(x).\left([x{\not\in}\mathcal{F}]T_{a}+\sum_{n{\in}\mathcal{F}}[x{=}n]T_{a}^{n}\right)
 + \sum_{b\in B}\sum_{n\in N_{b}}\overline{b}n.T_{b}^{n}
 + \sum_{c\in C}\overline{c}(d).T_{c}
\end{equation}
where $N_{b}\subseteq_{f}\mathcal{N}\cup\mathcal{N}_{v}$; and
moreover the following properties hold:
\begin{enumerate}
\item for all $a\in A$, $T_{a}$ is a box normal form on $\mathcal{F}\cup\{x\}$;

\item for all $a\in A$ and all $n{\in}\mathcal{F}$, $x\not\in fv(T_{a}^{n})$ and $T_{a}^{n}$ is a box normal form on $\mathcal{F}$;

\item for all $b\in B$ and all $n{\in}N_{b}$, $T_{b}^{n}$ is a box normal form on $\mathcal{F}$;

\item for all $c\in C$, $T_{c}$ is a box normal form on
$\mathcal{F}\cup\{d\}$.
\end{enumerate}

\item There is some finite set $I$ such that $T$ is of the following
form
\begin{equation}\label{tauhnf}
\sum_{i\in I}\tau.T_{i}
\end{equation}
such that the following properties hold:
\begin{itemize}
\item
For each $i\in I$, $T_{i}$ is a box normal form on $\mathcal{F}$ of
the shape (\ref{2009-02-04});

\item $\sum_{i\in I}\tau.T_{i}$ is saturated.
\end{itemize}
\end{enumerate}
\end{definition}

In the above definition we have ignored the conditionals. This is
rectified in the next definition.

\begin{definition}
$T$ is a {\em complete box normal form} if $T\equiv
\delta_{\mathcal{F}}T'$ for some box normal form $T'$ such that
$gn(T')\cup
fv(T')\subseteq\mathcal{F}\subseteq_{f}\mathcal{N}\cup\mathcal{N}_{v}$.
\end{definition}

We can now state a lemma that correlates
Lemma~\ref{cnf-normalization}.

\begin{lemma}\label{cbnf-normalization}
Suppose $T$ is a finite $\pi$-term and $\varphi$ is complete on a
finite set $\mathcal{F}\supseteq gn(T)\cup fv(T)$. Then
$AS_{b}\vdash \varphi T=\varphi^{=}\delta T'$ for some complete box
normal form $\delta T'$ on $gn(\delta T')\cup fv(\delta T')$.
\end{lemma}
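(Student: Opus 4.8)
The plan is to follow the proof of Lemma~\ref{cnf-normalization} almost verbatim for the handling of the conditionals, and to replace its final normalization step by a box-specific rewriting that exploits the new axioms N1--N4 of Fig.~\ref{axiom-4-box-equality}. First I would observe, exactly as in Lemma~\ref{cnf-normalization}, that $AS_{b}\vdash\varphi T=\varphi^{=}\varphi^{\not=}T=\varphi^{=}(\varphi^{\not=}T)\sigma_{\varphi^{=}}$; this is legitimate in $AS_{b}$ because the match-elimination rewriting of Section~\ref{sec-Algebraic-Property} relies only on the L-, M- and S-axioms together with Milner's law T1, and the latter is derivable from N1--N4. It is important to note here that the axiom C2, which drove the collapse of internal branches in the absolute-equality normalization, has been deliberately dropped, since it is unsound for $=_{\Box}$; its normalizing role is taken over by the saturation axioms N3 and N4. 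After this step the matches have been absorbed into $\sigma_{\varphi^{=}}$ and the task becomes to rewrite $(\varphi^{\not=}T)\sigma_{\varphi^{=}}$ into $\delta T'$ for a box normal form $T'$, where $\delta=\delta_{\mathcal{F}}$ collects the residual mismatches.

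The real work is then a structural induction on $T$ showing that every finite $\pi$-term with free names and variables in $\mathcal{F}$ is provably equal, under the standing condition $\delta_{\mathcal{F}}$, to a box normal form. For each prefix summand I would normalize the continuation by the induction hypothesis; for an input prefix the shape $[x{\notin}\mathcal{F}]T_{a}+\sum_{n\in\mathcal{F}}[x{=}n]T_{a}^{n}$ of~(\ref{2009-02-04}) is recovered by the same complete-disjoint-partition device used in Lemma~\ref{cnf-normalization}, namely that $\{(x{\notin}\mathcal{F})\wedge\delta_{\mathcal{F}}\}\cup\{(x{=}m)\wedge\delta_{\mathcal{F}}\}_{m\in\mathcal{F}}$ is a complete disjoint partition of $\mathcal{F}\cup\{x\}$. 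What is genuinely new are three box-specific rearrangements. First, N1 is applied repeatedly to merge any two summands carrying the same prefix, so that after $\alpha$-conversion each input channel occurs once and each free output $\overline{b}n$ and bound output $\overline{c}(d)$ occurs once; this produces the index sets $A,B,C$ and $N_{b}$ of~(\ref{2009-02-04}). Secondly, N2 enforces the dichotomy of the definition: if any summand is $\tau$-prefixed then the whole summation is driven into the form~(\ref{tauhnf}), and otherwise it is left in the form~(\ref{2009-02-04}). Thirdly, N3 and N4 are applied to a summation of shape~(\ref{tauhnf}) to insert the missing summands $\tau.(\lambda_{1}.T_{1}+T_{j})$ and thereby reach a saturated form.

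I expect the main obstacle to be the saturation step, both in showing that it terminates and in showing that it preserves the box normal form invariant. Saturation adds new summands, so one must supply a well-founded measure: since N3 and N4 only introduce continuations $\lambda_{1}.T_{1}+T_{j}$ built from prefixes and summands already present, the set of possible new $\tau$-summands is finite, and the procedure is best phrased as computing a fixed point of the saturation operation, reached in finitely many steps by finiteness of the term. One must also check that each inserted continuation is itself, after a further appeal to the induction hypothesis, a box normal form of shape~(\ref{2009-02-04}); the bookkeeping is delicate because saturating one $\tau$-summand alters the summands available for saturating another, which is precisely why the fixed-point formulation is cleaner than a naive one-pass argument. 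Everything else is routine rewriting with axioms already validated for $=_{\Box}$, and the statement then follows by taking $\delta T'$ to be the complete box normal form produced by this procedure.
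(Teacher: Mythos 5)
Your proposal is correct and follows essentially the same route as the paper, whose entire proof of this lemma is the single sentence that it is ``a modification of the proof of Lemma~\ref{cnf-normalization} with additional rewriting using N-laws''; your elaboration (reuse the conditional-elimination machinery, then merge identical prefixes with N1, enforce the $\tau$/non-$\tau$ dichotomy with N2, and saturate with N3--N4) is precisely the intended modification. Your attention to the derivability of T1 in $AS_b$ and to the termination of saturation supplies detail the paper omits, but does not change the approach.
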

\begin{proof}
The proof is a modification of the proof of
Lemma~\ref{cnf-normalization} with additional rewriting using
N-laws.
\end{proof}

We could have a lemma that parallels Lemma~\ref{lem:promotion}. But
the following result is more revealing.

\begin{lemma}\label{box2absolute}
Suppose $S,T$ are complete box normal forms on the finite set
$\mathcal{F}=gn(S\,|\,T)\cup fv(S\,|\,T)$. Then $S=_{\Box}T$ if and
only if $S=T$.
\end{lemma}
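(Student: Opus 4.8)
The forward implication is immediate: Theorem~\ref{2008-09-24}(1) gives $=\,\subseteq\,=_{\Box}$, so $S=T$ entails $S=_{\Box}T$ with no appeal to the normal form structure at all. All the work is in the converse, and the point of that direction is that the saturation built into box normal forms is exactly what closes the gap between the testing-style $=_{\Box}$ and the branching-style $=$.

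The plan is to show that the relation
\[
\mathcal{R}=\{(S,T)\mid S,T\ \text{complete box normal forms on a common }\mathcal{F}\ \text{with}\ S=_{\Box}T\}
\]
is a $\pi$-bisimulation in the sense of Definition~\ref{external-bi}, whence $\mathcal{R}\subseteq\,\simeq_{\pi}$ and Theorem~\ref{coincidence-theorem} yields $\mathcal{R}\subseteq\,=_{\pi}$, i.e. $S=T$. Two features make $\mathcal{R}$ a workable candidate. First, every continuation occurring in a box normal form is again a complete box normal form on a possibly enlarged $\mathcal{F}$, so $\mathcal{R}$ is self-contained: each residual we are forced to relate is again a pair of box normal forms, and the only thing left to check is that such residuals are box equal. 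I deliberately route through $\simeq_{\pi}$ rather than verifying extensionality of $\mathcal{R}$ directly, since $S\,|\,P$ need not be a box normal form. Second, $S,T$ are finite, hence terminating, so the codivergence clause of Definition~\ref{external-bi} is vacuous and only the branching clause on $\tau$ and the simulation of each $\ell\in\mathcal{L}$ remain. Throughout I reduce the comparison of terms to the comparison of processes by instantiating $\delta_{\mathcal{F}}$ with an agreeing assignment, using the box-equality analogues of the term-to-process reductions of Section~\ref{sec-Algebraic-Property} (Lemma~\ref{term2process} and Lemma~\ref{eq-stable}), so that I may freely invoke $\approx_{must}$ through Theorem~\ref{2008-09-24}(3) and reason with acceptance and failure sets.

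The verification then splits on the two shapes of a box normal form. If $S$ is stable, of shape~(\ref{2009-02-04}), its ready set $\mathcal{I}(S)$ is recovered by must testing, and box equality forces $T$ to be stable with $\mathcal{I}(T)=\mathcal{I}(S)$; for each label $\ell$ the $\ell$-derivative of $S$ is matched by the $\ell$-derivative of $T$, and testing after $\ell$ shows these derivatives are again box equal box normal forms, hence in $\mathcal{R}$. For an input label one uses that shape~(\ref{2009-02-04}) has already split on the value received, so the relevant branch $T_a^{n}$ or $T_a$ on $\mathcal{F}\cup\{x\}$ is singled out and compared. If instead $S$ is a saturated sum $\sum_{i}\tau.S_{i}$ of shape~(\ref{tauhnf}), each move $S\stackrel{\tau}{\longrightarrow}S_{i}$ is answered through clause (b) of Definition~\ref{bisimulation} in the form $T\Longrightarrow T\stackrel{\tau}{\longrightarrow}T_{j}$, so it suffices to produce, for every stable branch $S_{i}$, a stable branch $T_{j}$ with $S_{i}=_{\Box}T_{j}$, and symmetrically. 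This is where saturation is decisive: by the saturation condition the family $\{\mathcal{I}(S_{i})\}_{i}$ of ready sets of the stable branches is closed under the unions that a choice of summands across branches induces, so it is exactly the family of stable acceptance sets that must testing observes; the same holds for $T$, and since $S=_{\Box}T$ these two families coincide. Matching ready sets then yields box equal continuations branch by branch, as in the stable case, placing each required pair $(S_{i},T_{j})$ in $\mathcal{R}$.

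The main obstacle is precisely this branch-matching step for shape~(\ref{tauhnf}). Must equivalence sees only acceptance (equivalently failure) sets, which is in general far coarser than the branching bisimilarity underlying $=_{\pi}$; the entire content of the lemma is that saturation eliminates this gap by making every acceptance set that testing can witness appear \emph{literally} as a stable $\tau$-branch, so that ``same observable acceptance sets'' upgrades to ``same multiset of stable branches up to $=_{\Box}$''. I expect the delicate part to be the proof that two stable branches with equal ready sets have box equal $\ell$-continuations: this is the testing argument run after the trace $\ell$, and it is carried by the enlarged-$\mathcal{F}$ (respectively $\mathcal{F}\cup\{x\}$ or $\mathcal{F}\cup\{c\}$) structure built into the definition of box normal form. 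Once the branch correspondence is established, assembling $\mathcal{R}$ into a $\pi$-bisimulation and concluding via Theorem~\ref{coincidence-theorem} is routine bookkeeping.
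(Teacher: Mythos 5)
Your overall strategy is the paper's own: reduce the term equality to a process equality via an agreeing assignment (Lemma~\ref{term2process}, Lemma~\ref{eq-stable}), let $\mathcal{R}$ be the set of pairs of box equal box normal forms, verify that $\mathcal{R}$ is a $\pi$-bisimulation by a case analysis on the two shapes~(\ref{2009-02-04}) and~(\ref{tauhnf}), and conclude through Theorem~\ref{coincidence-theorem}. The paper states exactly three ``crucial properties'' of $\mathcal{R}$ corresponding to this case analysis and, like you, delegates the detailed branch-matching to the testing-theoretic arguments of De Nicola--Hennessy and Boreale--De Nicola. Your remarks on why saturation closes the gap between acceptance-set reasoning and branching bisimulation are a fair account of what those imported arguments do.

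There is one concrete misstep. In your stable case you assert that if $S$ has shape~(\ref{2009-02-04}) then box equality forces $T$ to be stable as well. That is false: for a stable box normal form $N$, the one-summand sum $\tau.N$ is a saturated box normal form of shape~(\ref{tauhnf}) (with a single summand the saturation condition is vacuous), and $N = \tau.N$ holds for the absolute equality, hence $N =_{\Box} \tau.N$ by Theorem~\ref{2008-09-24}(1). So the mixed configuration --- one side of shape~(\ref{2009-02-04}), the other of shape~(\ref{tauhnf}) --- genuinely occurs, and your two-way case split misses it. The paper's proof handles it explicitly as its first crucial property: when $P$ has shape~(\ref{tauhnf}) and $Q$ has shape~(\ref{2009-02-04}), a move $P\stackrel{\tau}{\longrightarrow}P'$ is matched vacuously with $P'\mathcal{R}Q$, which is legitimate under clause (a) of Definition~\ref{bisimulation} precisely because $Q$ has no $\tau$-move to offer. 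Once you add this third case, your argument coincides with the paper's.
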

\begin{proof}
Suppose $S=_{\Box}T$. By Lemma~\ref{cbnf-normalization} we may
assume that $S\equiv \delta S'$ and $T\equiv \delta T'$. Let $\rho$
be an assignment that agrees with $\delta$. Then
$S'\rho=_{\Box}T'\rho$. In view of Lemma~\ref{term2process}, we only
need to show that $S'\rho=T'\rho$. So let $\mathcal{R}$ be the
relation
\[\{(P,Q) \mid P=_{\Box}Q,\ \mathrm{and}\ P,Q\ \mathrm{are}\ \mathrm{box}\ \mathrm{normal}\ \mathrm{forms}\}.\]
There are three crucial properties about this relation.
\begin{enumerate}
\item If $P$ is of type~(\ref{tauhnf}) and $Q$ is of
type~(\ref{2009-02-04}), then $P'\mathcal{R}Q$ whenever
$P\stackrel{\tau}{\longrightarrow}P'$. This is so simply because $Q$
cannot do any $\tau$-action.

\item If both $P$ and $Q$ are of type~(\ref{tauhnf}), then
$P\stackrel{\tau}{\longrightarrow}P'$ implies
$Q\stackrel{\tau}{\longrightarrow}Q'\mathcal{R}P'$.

\item If both $P$ and $Q$ are of type~(\ref{2009-02-04}), then
$P\stackrel{\lambda}{\longrightarrow}P'$ implies
$Q\stackrel{\lambda}{\longrightarrow}Q'\mathcal{R}P'$.
\end{enumerate}
For the detailed proofs of these claims, the reader is advised to
consult~\cite{DeNicolaHennessy1984,BorealeDeNicola1995}. So
$\mathcal{R}$ is a $\pi$-bisimulation.
\end{proof}

Theorem~\ref{ae:completeness}, Lemma~\ref{box2absolute} and the fact
that complete box normal forms are complete normal forms immediately
imply the completeness of $AS_{b}$.

\begin{theorem}\label{box:completeness}
Suppose $S,T$ are finite $\pi$-processes. Then $S=_{\Box}T$ if and
only if $AS_{b} \vdash \tau. S = \tau.T$.
\end{theorem}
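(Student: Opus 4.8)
The plan is to prove the two directions of the biconditional separately, leaning on the completeness for the absolute equality already established in Theorem~\ref{ae:completeness} and on the coincidence of the two equalities on complete box normal forms recorded in Lemma~\ref{box2absolute}. For the soundness direction, suppose $AS_{b}\vdash\tau.S=\tau.T$. Since $AS_{b}$ is sound for $=_{\Box}$, this already gives $\tau.S=_{\Box}\tau.T$, and it only remains to strip the guarding $\tau$. For that I would use the standard law $\tau.P=P$, which is valid for the absolute equality (as needed for the matching direction of Theorem~\ref{ae:completeness}) and hence also for $=_{\Box}$ by the inclusion $=\,\subseteq\,=_{\Box}$ of Theorem~\ref{2008-09-24}(1); transitivity then yields $S=_{\Box}\tau.S=_{\Box}\tau.T=_{\Box}T$.

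For the completeness direction, suppose $S=_{\Box}T$. First I would normalise both sides. Applying Lemma~\ref{cbnf-normalization} with $\mathcal{F}=gn(S\,|\,T)$ and the valid condition $\delta_{\mathcal{F}}$ (the conditional part is inessential since $S,T$ are closed, so $\delta_{\mathcal{F}}\Leftrightarrow\top$ and $\varphi^{=}=\top$) produces complete box normal forms $\widehat{S},\widehat{T}$ with $AS_{b}\vdash S=\widehat{S}$ and $AS_{b}\vdash T=\widehat{T}$; a routine step puts both over the common set $gn(\widehat{S}\,|\,\widehat{T})$. By soundness of $AS_{b}$ for $=_{\Box}$ we get $\widehat{S}=_{\Box}S=_{\Box}T=_{\Box}\widehat{T}$, hence $\widehat{S}=_{\Box}\widehat{T}$. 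Now Lemma~\ref{box2absolute} upgrades this to the absolute equality $\widehat{S}=\widehat{T}$, and since a complete box normal form is in particular a complete normal form, $\widehat{S},\widehat{T}$ are finite $\pi$-terms to which Theorem~\ref{ae:completeness} applies, delivering $AS\vdash\tau.\widehat{S}=\tau.\widehat{T}$.

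The last point to secure is that this $AS$-derivation can be imported into $AS_{b}$, i.e.\ that $AS_{b}$ proves the computation laws C1 and C2. Here C1 is exactly Milner's tau law T1 and is subsumed by the N-laws. For C2 I would argue directly: instantiating N2 with $S:=T$ and $T:=(S+T)$ and using commutativity and idempotence (S2, S4) gives $T+\tau.(S+T)=\tau.(S+T)$, whence $\tau.(\tau.(S+T)+T)=\tau.\tau.(S+T)=\tau.(S+T)$ by C1, which is C2. Thus every $AS$-proof of a $\tau$-guarded equality is an $AS_{b}$-proof, so $AS_{b}\vdash\tau.\widehat{S}=\tau.\widehat{T}$. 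Prefixing the two normalisation derivations by $\tau$ and chaining yields $AS_{b}\vdash\tau.S=\tau.\widehat{S}=\tau.\widehat{T}=\tau.T$.

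I expect the genuine difficulty to lie in the machinery I am permitted to assume rather than in this assembly: the real content is Lemma~\ref{box2absolute}, where a De Nicola--Hennessy style analysis shows that box-equal box normal forms are $\pi$-bisimilar. Within the present theorem the only steps demanding care are the transfer of absolute completeness into $AS_{b}$ (the derivability of C1 and C2 above) and the bookkeeping of the guarding $\tau$ and of the common name set; neither is deep, which is why the result is essentially immediate once Lemma~\ref{box2absolute}, Lemma~\ref{cbnf-normalization} and Theorem~\ref{ae:completeness} are in hand.
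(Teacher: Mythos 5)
Your proof is correct and follows essentially the same route as the paper, which disposes of this theorem in one sentence by chaining Lemma~\ref{cbnf-normalization}, the soundness of $AS_{b}$ for $=_{\Box}$, Lemma~\ref{box2absolute} and Theorem~\ref{ae:completeness}. The only detail you add beyond the paper's sketch is the explicit derivation of C1 and C2 inside $AS_{b}$ (via N1 and N2 respectively), which the paper covers only implicitly with the remark that the N-laws subsume Milner's $\tau$-laws; your derivation is sound and makes the transfer of the $AS$-derivation into $AS_{b}$ fully rigorous.
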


It is remarkable that axioms N1 through N4 actually reduce the box equality on the finite terms to the absolute equality on the finite terms.
This is yet another support to the branching style bisimulation.

\subsection{Remark}

In theory of CCS, Milner's equational systems for the strong and the
weak congruences on the finite
CCS-processes~\cite{HennessyMilner1985,Milner1989} and De Nicola and
Hennessy's system for the testing congruence are well known.
A complete system for the branching congruence was given van Glabbeek and Weijland~\cite{vanGlabbeekWeijland1989-first-paper-bb} in the first paper on branching bisimulation,
in which the following single tau law is proposed.
\begin{equation}\label{2010-01-14}
\lambda.(\tau.(S+T)+T) = \lambda.(S+T).
\end{equation}
The axiom~(\ref{2010-01-14}) is clearly equivalent to the combination of C1 and C2.
Their proof of the completeness in~\cite{vanGlabbeekWeijland1989-first-paper-bb} makes use of a graph rewriting system.
Later~\cite{vanGlabbeek1993} gave a more traditional proof of the completeness theorem.

The extension of these systems to the value-passing framework is not
trivial, the reason being that every value-passing calculus is built
on top of an oracle domain, say $\mathfrak{D}$. Early inference
systems studied by Hennessy and
Ing\'{o}lfsd\'{o}ttir~\cite{Hennessy1991,HennessyIng1993a,HennessyIng1993b}
are based on concrete semantics. An uncomfortable rule in all these
systems, from the point of view of a proof system, is the so-called
$\omega$-data-rule (\ref{omega-vpc}).
\begin{equation}\label{omega-vpc}
\inference{\forall
v\in\mathfrak{D}.S\{v/x\}=T\{v/x\}}{a(x).S=a(x).T}
\end{equation}
A significant step was made by Hennessy and Lin~\cite{HennessyLin1995} that introduces a whole new approach to
the study of the value-passing calculi. The symbolic semantics
dispenses with (\ref{omega-vpc}) by introducing a strong logic. One
could argue that this use of a logic is cheating because it
essentially makes use of a universal quantification operator that
ranges over the oracle domain $\mathfrak{D}$. But the virtue of the
symbolic approach is that one could define a value-passing calculus
using a moderate logic that makes a lot of sense from a programming
point of view, and then works out the observational theory with the
help of the logic. Using this idea Hennessy and Lin~\cite{HennessyLin1996} propose several symbolic proof systems for finite
value-passing processes. A survey of the symbolic approach is given
in~\cite{IngolfsdottirLin2001}.

For the name-passing calculi, the study on the proof systems was
initiated in the pioneering paper of Milner, Parrow and Walker~\cite{MilnerParrowWalker1992}.
Their system is complete for the
strong early equivalence. It contains the $\omega$-name-rule
(\ref{omega-pi}), which is a variant of (\ref{omega-vpc}).
\begin{equation}\label{omega-pi}
\inference{\forall n\in\mathcal{N}.S\{n/x\}=T\{n/x\}}{a(x).S=a(x).T}
\end{equation}
In the presence of the finite branching property, (\ref{omega-pi})
is more manageable than (\ref{omega-vpc}) since only a finite number of the
premises of (\ref{omega-pi}) have to be verified. A beautiful
alternative to {\em rule} (\ref{omega-pi}) is set of {\em axioms} for match/mismatch introduced by Parrow and Sangiorgi~\cite{ParrowSangiorgi1995},
among which the following one, S5,  plays an indispensable role.
\begin{equation}\label{ps-axiom}
[x{=}y]T+[x{\ne}y]T = T
\end{equation}
Axiom (\ref{ps-axiom}) offers the possibility to carry out case
analysis within an equational system. It is obvious from
(\ref{ps-axiom}) that Parrow and Sangiorgi's systems are only good
for the $\pi$-calculi with the mismatch operator. The mismatch is
also necessary to state the following law, S6, which first appeared
in~\cite{ParrowSangiorgi1995}.
\begin{equation}\label{early-axiom}
n(x).S+n(x).T = n(x).S+n(x).T+n(x).([x{=}y]S+[x{\ne}y]T)
\end{equation}
Axiom (\ref{early-axiom}) characterizes the {\em atomic} nature of
interactions. The symbolic approach to proof systems however makes
use of neither (\ref{omega-pi}) nor (\ref{early-axiom}). Lin's
symbolic proof systems~\cite{Lin1995,Lin2003} are capable of dealing
with calculi with or without mismatch operator. Instead of
(\ref{early-axiom}), the systems in~\cite{Lin1995,Lin2003} resort to
a less attractive rule (\ref{early-rule}).
\begin{equation}\label{early-rule}
\inference{\sum_{i\in I}\tau.S_{i}=\sum_{j\in J}\tau.T_{j}}
{\sum_{i\in I}a(x).S_{i}=\sum_{j\in J}a(x).T_{j}}
\end{equation}
A proof system for the strong open bisimilarity is given
in~\cite{Sangiorgi1996AI}, in which all axioms are indexed by
distinctions. The system without using distinctions is proposed
in~\cite{FuYang2003}. Since the following law, M5,
\begin{equation}\label{m5-axiom}
[x{\ne}y]\pi.T = [x{\ne}y]\pi.[x{\ne}y]T
\end{equation}
is invalid in the open semantics, axiom (\ref{open-axiom}) is
proposed in~\cite{FuYang2003} as a substitute for (\ref{m5-axiom}).
\begin{equation}\label{open-axiom}
(a)C[[x{=}a]T] = (a)C[\mathbf{0}]
\end{equation}
It is worth remarking that (\ref{open-axiom}) is equivalent to
$(a)C[[x{\ne}a]T] = (a)C[T]$ in the presence of (\ref{ps-axiom}). So
it is enough even for the $\pi$-calculus with the mismatch operator.

The first complete proof system for the weak congruence on finite
$\pi$-processes is Lin's symbolic
system~\cite{Lin1995-fix-ind,Lin1998}. The nonsymbolic systems were
proposed by Parrow~\cite{Parrow1999} using (\ref{early-rule}), and
by Fu~\cite{FuYang2003} using (\ref{early-axiom}).
In~\cite{FuYang2003} the authors have also discussed complete
equational systems for the weak open congruence. It is revealed that
the open bisimilarities, as well as the quasi open
bisimilarities~\cite{FuYuxi2005}, are quite complicated in the
presence of the mismatch operator. It has been suggested
that the complications are due to the introduction of the mismatch
operator. The real culprit is however the confusion of the names
and the name variables. Study on the weak open systems pointed out
that Milner's three $\tau$-laws are insufficient~\cite{FuYang2003}.
An additional $\tau$-law
\begin{equation}\label{4thtaulaw}
\tau.T = \tau.(T+\varphi\tau.T)
\end{equation}
is necessary. Notice that (\ref{4thtaulaw}) is derivable from
(\ref{m5-axiom}).

All of these tiny discrepancies appear a little confusing.
What the present paper reveals is that all these incompatibilities disappear once the names are treated properly.
Moreover, since the mismatch operator comes hand in hand with the match operator, there is no real interest in systems without the mismatch operator.
Proof systems for $\pi$-variants can be obtained by extending $AS$ with additional laws.
For example the following two axiom schemes are introduced in~\cite{XueLongFu2011}.
\begin{eqnarray}
\overline{n}(c).C[\overline{c}m.T] & = & \overline{n}(c).C[{\bf 0}], \label{2011-01-26-L} \\
\overline{n}(c).C[c(x).T] & = & \overline{n}(c).C[{\bf 0}]. \label{2011-01-26-R}
\end{eqnarray}
It is pointed out by Xue, Long and Fu~\cite{XueLongFu2011} that $AS\cup\{(\ref{2011-01-26-L})\}$ is complete for $\pi^{L}$ and that $AS\cup\{(\ref{2011-01-26-R})\}$ is complete for $\pi^{R}$.

A more challenging issue is to construct proof systems for the
regular processes~\cite{Milner1984}. A regular process is not
necessarily finite state~\cite{Milner1989}; it is generally finite
control~\cite{Lin1998}. Milner addressed the issue in the framework
of CCS. His strong complete proof system~\cite{Milner1984} and weak
complete proof system~\cite{Milner1989IC} make crucial use of the
following fixpoint induction rule
\begin{equation}\label{fixinduction}
\inference{F\{E/X\}=E}{E=\mu X.F}\ X\ \mathrm{is}\ \mathrm{guarded}\
\mathrm{in}\ F.
\end{equation}
Milner's approach has been applied to branching congruence by van Glabbeek~\cite{vanGlabbeek1993}.
It has been extended and applied to the value-passing calculi by Hennessy, Lin and
Rathke~\cite{HennessyLin1997,Rathke1997,HennessyLinRathke1997} and
to the $\pi$-calculus by Lin~\cite{Lin1995-fix-ind,Lin1998}. All
these more complicated complete proof systems are of a symbolic
nature. The side condition of the fixpoint induction
(\ref{fixinduction}) renders the rule truly unwelcome. But as Sewell
has proved in~\cite{Sewell1994,Sewell1997} there is no finitely
axiomatizable complete system for the finite controls. In order to
derive the equality
\[\mu X.a.X=\mu X.\underset{k>1}{\underbrace{a.\cdots.a}}.X\]
from a finite system, one needs rule(s) in addition to axioms. It is
possible to come up with a complete proof system in which all the
rules are unconditional~\cite{Sewell1995}. But it would probably not
pay off when it comes down to implementation. Now if we have to
stick to the fixpoint induction, how should we make use of it?
Milner's straightforward answer~\cite{Milner1984}, adopted in almost
all follow-up work, is to introduce process equation systems.
To apply this approach to the $\pi$-calculus, it is convenient to use abstractions over names~\cite{HennessyLinRathke1997,Lin1998}.

The finite states/controls raise the question of divergence. The
axiomatic treatment of divergence has been borrowing ideas from
domain theory~\cite{AmadioCurien1998}. Scott's denotational approach
is too abstract to give a proper account of interactions, and in the
case of divergence, non-interactions. Early treatment of divergence in process algebra is more influenced by the domain theoretical approach~\cite{DeNicolaHennessy1984,Walker1990}. It appears that the first successful operational approach to divergence is achieved
in Lohrey, D'Argenio and Hermanns'
work~\cite{LohreyDArgenioHermanns2002,LohreyDArgenioHermanns2005}
on the axioms for divergence. Crucial
to their approach is the observation that all divergence of a finite
control is due to self-looping. The $\Delta$-operator, defined
below, is isolated to play a key role in their inference systems.
\begin{equation}\label{DeltaOp}
\Delta(T) \stackrel{\rm def}{=} \mu X.(\tau.X+T)
\end{equation}
Lohrey, D'Argenio and Hermanns' systems consist of the laws to
convert divergence from one form to another so that the fixpoint
induction can be applied without any regard to divergence. One
axiom proposed in~\cite{LohreyDArgenioHermanns2002,LohreyDArgenioHermanns2005} is
\begin{equation}\label{LohreyDArgenioHermannsweak}
\Delta(\Delta(T)+T') = \tau.(\Delta(T)+T').
\end{equation}
The law (\ref{LohreyDArgenioHermannsweak}) is sound for the
termination preserving weak congruence. But it fails to meet the
codivergence property. Based on Lohrey, D'Argenio and Hermanns'
work, Fu discusses the axioms for the codivergence in the framework of CCS~\cite{FuYuxi-Nondetrministic-Computation}.
The codivergent version of (\ref{LohreyDArgenioHermannsweak}) for example is
\begin{equation}\label{FHweak}
\Delta(\Delta(T)) = \Delta(T).
\end{equation}
It is worth remarking that (\ref{FHweak}) is valid for the strong equality.
The proof systems using the $\Delta$-operator begin to unveil the rich structure of divergence.
It also provides a purely equational characterization of the internal actions.
The codivergence satisfies another equational axiom:
\begin{equation}\label{2011-02-12}
\Delta(S{+}\tau.\Delta(S{+}S')) =\Delta(S{+}S').
\end{equation}
It is interesting to notice the similarity between (\ref{2011-02-12}) and (\ref{2010-01-14}).

The above discussion is meant to bring out the following point.
The system $AS$ plus the axioms of codivergence proposed in~\cite{FuYuxi-Nondetrministic-Computation} give rise to a complete proof system for the absolute equality on the regular $\pi$-processes, which is more accessible than the symbolic proof system.

How about a complete equational system for the box equality on the regular $\pi$-processes.
A crucial step would be to prove for the regular $\pi$-processes a property corresponding to the one stated in Lemma~\ref{box2absolute}.
It could turn out that this problem is far more difficult than one would have perceived.
Rensink and Vogler~\cite{RensinkVogler2007} point out a surprising fact that Milner's fixpoint induction fails for the fair/should testing equivalence.
Consider the process equation
\begin{eqnarray}
X &=& a.X+a.\mu Z.(a.Z+b). \label{2010-06-16}
\end{eqnarray}
It is not difficult to see that
\[a.\mu Z.a.Z+a.\mu Z.(a.Z+b) =_{\Box} a.(a.\mu Z.a.Z+a.\mu Z.(a.Z+b))+a.\mu Z.(a.Z+b).\]
The solution
\begin{eqnarray}\label{RV-1}
RV_{0} &\stackrel{\rm def}{=}& a.\mu Z.a.Z+a.\mu Z.(a.Z+b)
\end{eqnarray}
is not box equal to the canonical solution
\begin{eqnarray}\label{RV-2}
RV_{1} &\stackrel{\rm def}{=}& \mu X.(a.X+a.\mu Z.(a.Z+b))
\end{eqnarray}
to the equation (\ref{2010-06-16}).
A context that tells apart the processes (\ref{RV-1}) and (\ref{RV-2}) is $(ab)(\_\,|\,\mu Y.(\overline{a}.Y+\overline{b}.d))$.
This is because $(ab)(RV_{1}\,|\,\mu Y.(\overline{a}.Y+\overline{b}.d))$ is strongly observable whereas $(ab)(RV_{0}\,|\,\mu Y.(\overline{a}.Y+\overline{b}.d))$ is not.

Unlike the pure algebraic view of process~\cite{BaetenWeijland1990}, we tend to think of $AS$ and $AS_{b}$ as proof systems that help derive process equalities. This explains the particular statements of Theorem~\ref{ae:completeness} and of Theorem~\ref{box:completeness}. The emphasis on equational proof systems rather than on axiomatic systems has the advantage that the introduction of congruence relations can be avoided. What is stated in Theorem~\ref{ae:completeness} is called promotion property in~\cite{FuYang2003}. It is pointed out by Fu and Yang~\cite{FuYang2003} that this property is absolutely necessary to prove that an algebraic system of the $\pi$-calculus is complete, the reason being that Hennessy Lemma~\cite{Milner1989} fails for the $\pi$-calculus~\cite{FuYang2003}.

\section{Future Work}\label{sec-future-work}

The studies in process calculi over the last thirty years largely fall into two main categories:
\begin{enumerate}
\item [I.] The first is about the diversity of models.
A lot of process calculi have been designed and discussed~\cite{Nestmann2006}.

\item [II.] The second is about equivalence.
Many observational equivalences and algebraic equalities have been proposed, axiomatized and compared~\cite{vanGlabbeek1993II,vanGlabbeek2001I}.
\end{enumerate}
In literature there is only a small number of papers that deal with expressiveness or completeness issues~\cite{BusiGabbrielliZavattaro2003,BusiGabbrielliZavattaro2004,Palamidessi2003,Gorla2009MFPS,FuLu2010}.
If we understand the situation correctly, we are at a stage where we try to seek the {\em eternal truth} in process theory~\cite{Abramsky2006}.
It is our opinion that we will not be able to go very far if we do not seriously address the issue of {\em problem solving} with process calculi.
It is only by applying a model to tackle real problems can our treatment of the observational theory of the model be verified.
Talking about problem solving, there is no better process calculus than the $\pi$-calculus to start with such an investigation.
What is achieved in this paper is a condensed account of the foundational theory of the $\pi$-models that provides support for problem solving.

Eternal truths have to be model independent.
They would not be very interesting if the models we are considering are too liberal and too diversified.
This is why our presentation of the $\pi$-calculus has followed the general principles and methodologies of Theory of Interaction developed in~\cite{FuYuxi}.
The prime motivation of Theory of Interaction can be summarized as follows:
\begin{quote}
There are two eternal relationships in computer science, one is given by the equality relation (absolute equality) between the programs (or processes) of a model, and the other is by the expressiveness relation (subbisimilarity) between the models.
By using these two fundamental relations, one can then introduce a number of basic postulates that formalize the foundational assumptions widely adopted in computer science.
Now let $\mathfrak{M}$ be the class of all models.
The first postulate asserts that a model belonging to $\mathfrak{M}$ must be computationally complete.
\vspace*{2mm}
\begin{quote}
{\bf Axiom of Completeness}.
$\forall\;\mathbb{M}\in\mathfrak{M}.\;\mathbb{C}\sqsubseteq\mathbb{M}$.
\end{quote}
\vspace*{2mm}
The Computability Model $\mathbb{C}$, defined in~\cite{FuYuxi}, is the minimal interactive extension of the computable function model.
The Axiom of Completeness can be seen as a formalization of Church-Turing Thesis.
It places a considerable constraint on the world  $\mathfrak{M}$ of models.
\end{quote}
In~\cite{FuYuxi} it is shown that $\pi^{M}$ is complete in the sense that it satisfies the Axiom of Completeness.
In~\cite{XueLongFu2011} the completeness of $\pi^{L}$, $\pi^{R}$ and $\pi^{S}$ is established.
So it makes sense to talk about problem solving in all these $\pi$-variants.
We remark that completeness in our sense is much stronger than the so-called Turing completeness~\cite{FuLu2010}.
Some variants of CCS~\cite{Milner1989} are Turing complete~\cite{BusiGabbrielliZavattaro2003,BusiGabbrielliZavattaro2003}, they are however not complete in our stronger sense~\cite{FuYuxi}.

What is done in this paper can be seen as a book-keeping exercise.
New results are obtained and old results are assessed in a uniform formalism.
Based on what is set up in this paper, the $\pi$-model can be studied in three main directions.
\begin{enumerate}
\item [III.] A theory of $\pi$-solvability that goes beyond the traditional
recursion theory should be useful to understand the power of the
name-passing calculi. It can be shown that a pseudo natural number
generator is solvable in the $\pi$-calculus. But a genuine natural number
generator is $\pi$-unsolvable. It would be useful to develop a
theory of $\pi$-solvability and investigate the diagonal methods for
tackling $\pi$-unsolvability. Other possible issues to look at are
nondeterministic functions definable in $\pi$, the recursion theory of the $\pi$-processes (say the formulation of the s-m-n theorem, enumeration theorem, recursion theorem etc.).

\item [IV.] A comparative study of the complexity classes defined by the $\pi$-programs
against the standard complexity classes~\cite{Papadimitriou1994}
would be instructive for a better understanding of the algorithmic aspect of the $\pi$-calculus.
Such a study may begin with a formulation of the class $\textsf{P}_{\pi}$ of the
problems decidable in the $\pi$-calculus in polynomial time in a way the class $\textsf{NP}$ is defined in terms of the Nondeterministic Turing Machine
Model, and then investigate the relationship between  $\textsf{P}_{\pi}$ and $\textsf{NP}$.

\item [V.] At a more applied level, one could try to develop a hierarchy of programming languages implemented on the $\pi$-calculus.
Previous studies in the program theory of the $\pi$-calculus have not discussed anything about universal process for $\pi$.
So a great deal more has to be learned before we are confident of using $\pi$ as a machine model to implement a full fledged typed programming language.
\end{enumerate}
One may question the practical relevance of these new research directions.
After all the $\pi$-calculus, unlike the Deterministic Turing Machine Model, does not have a physical implementation.
The rapid development of computing technology has actually provided an answer.
The Internet is an approximate implementation of the $\pi$-calculus!
It is not completely, as one might argue, a physical implementation.
But it is the kind of computing environment for which a theory of the $\pi$-calculus might provide just the right foundation.


\newpage

\large\noindent {\bf Acknowledgments}\normalsize

\vspace*{2mm}

This work has been supported by the National Natural Science Foundation of China (grant numbers 60873034, 61033002).
The authors would like to thank the members of BASICS for their interest and feedbacks.
Especially they would like to thank Xiaojuan Cai and Chaodong He for proof reading the paper.
Chaodong He has pointed out to us that our previous proof of Lemma~\ref{lem:promotion} and the previous statement of Theorem~\ref{2008-09-24} were mistaken.
He actually showed us how to correct the mistakes.
The authors would also like to thank Huan Long and Jianxin Xue for their discussions on the previous versions of the paper.
 Their comments have led to several improvements of the paper.

The first author would like to thank Huan Long and Jianxin Xue for sharing the interest in the $\pi$-variants $\pi^{L},\pi^{R},\pi^{S}$.
Some of the problems raised in a previous version of the paper are resolved in~\cite{XueLongFu2011}.



\end{document}